\theoremstyle{plain}
\newtheorem{thm}{Theorem}[section]
\newtheorem{lem}[thm]{Lemma}
\newtheorem{prop}[thm]{Proposition}
\newtheorem{cor}[thm]{Corollary}
\theoremstyle{definition}
\newtheorem{defn}[thm]{Definition}
\newtheorem{exmp}[thm]{Example}
\newtheorem{construction}[thm]{Construction}
\theoremstyle{remark}
\newtheorem{rem}[thm]{Remark}
\newcommand{\N}{\mathbb{N}}
\DeclarePairedDelimiterX{\norm}[1]{\lVert}{\rVert}{#1}
\newcommand{\enter}{\vspace{\baselineskip}}
\newcommand{\henter}{\vspace{0.5\baselineskip}}
\newcommand{\QFT}{\mathbf{Q}}
\newcommand{\Id}{\operatorname{Id}}
\newcommand{\RQ}{\mathcal{R}_\QFT}
\newcommand{\LQ}{\mathcal{L}_\QFT}
\newcommand{\LQR}{\LQ^\texttt{R}}
\newcommand{\GQ}{\mathcal{G}}
\newcommand{\HQ}{\mathcal{H}}
\newcommand{\AQ}{\mathcal{A}}
\newcommand{\DQ}[1]{\mathcal{D} \left ( #1 \right )}
\newcommand{\combgreen}{\mathfrak{X}^r}
\newcommand{\one}{\mathbb{I}}
\newcommand{\coone}{\hat{\mathbb{I}}}
\newcommand{\CG}{G^\HQ_\AQ}
\newcommand{\pol}{\mathfrak{P}^{(r)}_l}
\newcommand{\cut}{\mathfrak{C}_{\mathfrak{e}, \mathfrak{v}}}
\newcommand{\eval}[1]{#1 \Bigg \vert}
\newcommand{\feyndiag}[2]{\vcenter{\hbox{\includegraphics[width=#2]{#1}}}}
\newcommand{\propagator}{\feyndiag{propagator}{1.5cm}} % length = 5
\newcommand{\propagatorresidue}{\feyndiag{propagatorresidue}{0.67cm}} % length = 3
\newcommand{\vertex}{\feyndiag{vertex}{0.9cm}} % length = 3
\newcommand{\vertexresidue}{\feyndiag{vertexresidue}{0.67cm}} % length = 3
\newcommand{\bubblegraph}{\feyndiag{bubble}{3cm}} % length = 8
\newcommand{\bubblevertexcorrectiongraph}{\feyndiag{bubble-vertex-correction}{3cm}} % length = 8
\newcommand{\bubblevertexcorrectiongraphdeletedleft}{\feyndiag{bubble-vertex-correction-deleted-left-subgraph}{3cm}} % length = 8
\newcommand{\bubblevertexcorrectiongraphdeletedright}{\feyndiag{bubble-vertex-correction-deleted-right-subgraph}{3cm}} % length = 8
\newcommand{\bubbleupperedgecorrectiongraph}{\raisebox{0.1cm}{$\feyndiag{bubble-upper-edge-correction}{3cm}$}} % length = 8
\newcommand{\bubbleupperedgecorrectiondeletedgraph}{\raisebox{0.1cm}{$\feyndiag{bubble-upper-edge-correction-deleted-subgraph}{3cm}$}} % length = 8
\newcommand{\bubbleloweredgecorrectiongraph}{\raisebox{-0.1cm}{$\feyndiag{bubble-lower-edge-correction}{3cm}$}} % length = 8
\newcommand{\bubbleupperedgecorrectiongraphsmall}{\raisebox{0.067cm}{$\feyndiag{bubble-upper-edge-correction}{2cm}$}} % length = 8
\newcommand{\bubbleloweredgecorrectiongraphsmall}{\raisebox{-0.067cm}{$\feyndiag{bubble-lower-edge-correction}{2cm}$}} % length = 8
\newcommand{\trianglegraph}{\feyndiag{triangle}{2.25cm}} % length = 6
\newcommand{\trianglegraphmirrored}{\feyndiag{triangle-mirrored}{2.25cm}} % length = 6
\newcommand{\triangleleftvertexcorrectiongraph}{\feyndiag{triangle-left-vertex-correction}{2.25cm}} % length = 6
\newcommand{\triangleleftvertexcorrectiondeletedgraph}{\feyndiag{triangle-left-vertex-correction-deleted-subgraph}{2.25cm}} % length = 6
\newcommand{\triangleuppervertexcorrectiongraph}{\feyndiag{triangle-upper-vertex-correction}{2.25cm}} % length = 6
\newcommand{\trianglelowervertexcorrectiongraph}{\feyndiag{triangle-lower-vertex-correction}{2.25cm}} % length = 6
\newcommand{\trianglerightedgecorrectiongraph}{\feyndiag{triangle-right-edge-correction}{2.25cm}} % length = 6
\newcommand{\trianglerightedgecorrectiondeletedgraph}{\feyndiag{triangle-right-edge-correction-deleted-subgraph}{2.25cm}} % length = 6
\newcommand{\triangleupperedgecorrectiongraph}{\feyndiag{triangle-upper-edge-correction}{2.25cm}} % length = 6
\newcommand{\triangleloweredgecorrectiongraph}{\feyndiag{triangle-lower-edge-correction}{2.25cm}} % length = 6
\newcommand{\trianglenonplanar}{\feyndiag{triangle-non-planar}{2.25cm}} % length = 6
\newcommand{\insertionproduct}[2]{\delta_{V \left ( #1 \right ) \hookrightarrow V \left ( #2 \right )}}
\newcommand{\mrFR}{\Phi^\texttt{R}}
\newcommand{\renFR}{\Phi^+}
\newcommand{\CTmap}{\Phi^-}
\newcommand{\A}{\mathcal{A}} %Algebra of integral expressions
\DeclareMathOperator{\supp}{supp} %Support of a function/distribution
\DeclareMathOperator{\SDD}{SDD} %Superficial degree of divergence
\DeclareMathOperator{\Ren}{\mathfrak{R}}
\newcommand{\D}[1][\,]{#1\mathrm{d}}
\DeclareMathOperator{\Hg}{\mathsf{Hg}}
\DeclareMathOperator{\Phg}{\mathsf{PHg}}
\DeclareMathOperator{\Loghg}{\mathsf{LogHg}}
\title{\makebox[\textwidth][c]{Multiplicative Renormalization in Causal Perturbation Theory}}
\author{Jonah Epstein\footnote{University of Bonn and Max Planck Institute for Mathematics, Bonn; e-mail: \href{mailto:jepstein@uni-bonn.de}{jepstein@uni-bonn.de}}, Arne Hofmann\footnote{Leibniz University Hannover and University of Göttingen; e-mail: \href{mailto:arne.hofmann@math.uni-hannover.de}{arne.hofmann@math.uni-hannover.de}} \kern0.2em and David Prinz\footnote{Max Planck Institute for Mathematics, Bonn; e-mail: \href{mailto:prinz@mpim-bonn.mpg.de}{prinz@mpim-bonn.mpg.de}}}
\date{December 10, 2025}
\begin{document}

\maketitle

\begin{abstract}
	We construct multiplicative renormalization for the Epstein--Glaser renormalization scheme in perturbative Algebraic Quantum Field Theory: To this end, we fully combine the Connes--Kreimer renormalization framework with the Epstein--Glaser renormalization scheme. In particular, in addition to the already established position-space renormalization Hopf algebra, we also construct the renormalized Feynman rules and the counterterm map via an algebraic Birkhoff decomposition. This includes a discussion about the appropriate target algebra of regularized distributions and the renormalization scheme as a Rota--Baxter operator thereon. In particular, we show that the Hadamard singular part satisfies the Rota--Baxter property and thus relate factorization in Epstein--Glaser with multiplicativity in Connes--Kreimer. Next, we define \(Z\)-factors as the images of the counterterm map under the corresponding combinatorial Green's functions. This allows us to define the multiplicatively renormalized Lagrange density, for which we show that the corresponding Feynman rules are regular. Finally, we exemplify the developed theory by working out the specific case of \(\phi^3_6\)-theory.
\end{abstract}

\tableofcontents

\section{Introduction} \label{sec:introduction}

Renormalization is a fundamental part of Quantum Field Theory (QFT): The basic idea is to absorb the divergences that appear in the perturbative expansion by making the constants of the theory energy-dependent.\footnote{We deal in this article with UV-divergences only, thus \emph{divergence} means \emph{UV-divergence}.} More precisely, consider the example of \(\phi^3_6\)-theory --- a scalar theory with cubic interaction in six dimensions of spacetime (such that it is renormalizable) --- given via the following Lagrange density:
\begin{equation} \label{eqn:unrenormalized-phi-3-6-theory-intro}
    \mathcal{L}_{\phi^3_6} = \frac{1}{2} \left ( \partial_\mu \phi_0 \right ) \left ( \partial^\mu \phi_0 \right ) - \frac{m_0}{2} \phi_0^2 - \frac{\lambda_0}{3!} \phi_0^3
\end{equation}
Here, the zero subscripts denote the so-called \emph{bare}, i.e.\ unrenormalized, quantities. If we were to derive Feynman rules from \(\mathcal{L}_{\phi^3_6}\), the resulting Feynman amplitudes would be singular.\footnote{More precisely, the propagator Feynman diagrams would be quadratically divergent and the three-point graphs would be logarithmically divergent.} We are now interested in absorbing these divergences multiplicatively, i.e.\ without changing the monomials of the theory. To this end, we introduce the so-called wave-function, mass and coupling constant renormalization, by introducing the so-called \(Z\)-factors:
\begin{align}
    \phi_0 & \rightsquigarrow \phi := \frac{1}{\sqrt{Z_\phi}} \phi_0 \, , \\
    m_0 & \rightsquigarrow m := \frac{1}{\sqrt{Z_m}} m_0
    \intertext{and}
    \lambda_0 & \rightsquigarrow \lambda := \frac{1}{\sqrt{Z_\lambda}} \lambda_0 \, .
\end{align}
Applying these rescalings to the Lagrange density, we obtain the \emph{multiplicatively renormalized Lagrange density}
\begin{equation}
\begin{split}
    \mathcal{L}_{\phi^3_6}^\texttt{R} & = \frac{Z_\phi}{2} \left ( \partial_\mu \phi \right ) \left ( \partial^\mu \phi \right ) - \frac{Z_\phi Z_m m^2}{2} \phi^2 - \frac{Z_\phi^{3/2} Z_\lambda \lambda}{3!} \phi^3 \\
    & = \frac{Z_{\text{Kin}}}{2} \left ( \partial_\mu \phi \right ) \left ( \partial^\mu \phi \right ) - \frac{Z_{\text{Mass}} m}{2} \phi^2 - \frac{Z_{\text{Int}} \lambda}{3!} \phi^3 \, ,
\end{split}
\end{equation}
where we have set
\begin{subequations}
\begin{align}
    Z_{\text{Kin}} & \coloneq Z_\phi \, , \\
    Z_{\text{Mass}} & \coloneq Z_\phi Z_m
    \intertext{and}
    Z_{\text{Int}} & \coloneq Z_\phi^{3/2} Z_\lambda \, .
\end{align}
\end{subequations}
Now, in order to absorb the singularities of the perturbative expansion, the idea is to define these \(Z\)-factors as the singular contributions of the corresponding Feynman diagram expansion: More precisely, the \(Z\)-factors will be an asymptotic series of the form
\begin{equation}
    Z^r \coloneq 1 \pm \sum_{k = 1}^\infty C^r_k \, ,
\end{equation}
where \(r\) denotes the amplitude (for \(\phi^3_6\)-theory either propagator graphs or three-point graphs), the sign on propagator or interaction graphs and \(C^r_k\) the corresponding sum over the singular contributions of all respective Feynman graphs at loop order \(k\). Given this setup, the Feynman rules derived from the multiplicatively renormalized Lagrange density \(\mathcal{L}_{\phi^3_6}^\texttt{R}\) are given as the unrenormalized ones times the respective \(Z\)-factor: Notably, this leads to combinatorial cancellations of all appearing singularities, as we will exemplify in \Cref{thm:multiplicative-renormalization}.

\enter

In this article, we specifically focus on the situation of Epstein--Glaser renormalization for position space Feynman integrals in Algebraic Quantum Field Theory (AQFT), as introduced in \cite{epsteinRoleLocalityPerturbation1973}. Specifically, we aim to unify said setup with the Connes--Kreimer renormalization framework \cite{Kreimer:1997dp,Connes:1998qv,Connes:1999zw}: The current status of the literature only covers the setup of the renormalization Hopf algebra, but misses the central construction of the algebraic Birkhoff decomposition, which produces the renormalized Feynman rules map and the counterterm map, cf.\ \cite{gracia-bondiaCONNESKREIMEREPSTEIN,pinterHopfAlgebraStructure,figueroaUSESCONNESKREIMER2004,Lange:2004mg,bergbauerHopfAlgebraRooted2005,Berghoff:2014xba}. Crucially, we aim for a formulation which clearly separates the analytical input from the combinatorial and algebraic part.

\enter

In particular, the Connes--Kreimer renormalization framework operates as follows: Starting from the set of \emph{one-particle irreducible (1PI)}\footnote{This means graphs which remain connected after the removal of any of its internal edges, sometimes also called \emph{bridge-free}.} Feynman graphs \(\mathcal{G}\), a Hopf algebra \(\mathcal{H} \coloneq \mathbb{Q}[[\mathcal{G}]]\) is constructed as follows \cite{Kreimer:1997dp}:\footnote{Originally, the renormalization Hopf algebra was constructed as a polynomial algebra in decorated rooted trees, i.e.\ \(\HQ \coloneq \mathbb{Q}[\mathcal{T}]\) with \(\mathcal{T}\) the set of rooted trees.} The multiplication is given as disjoint union of Feynman graphs with the empty graph as identity and the coproduct --- the essential ingredient for renormalization --- is constructed to encode the nesting of subdivergences: Given a Feynman graph \(\Gamma \in \mathcal{G}\) together with its set of superficially divergent subgraphs \(\DQ{\Gamma}\), we define:
\begin{equation}
    \Delta \left ( \Gamma \right ) \coloneq \sum_{\gamma \in \DQ{\Gamma}} \delta_{\left ( \gamma \hookrightarrow \Gamma \setminus \gamma \right )} \gamma \otimes_\mathbb{Q} \Gamma \setminus \gamma
\end{equation}
Here, \(\Gamma \setminus \gamma\) denotes the graph \(\Gamma\) with all edges and internal vertices of the subgraph \(\gamma\) deleted --- a particular variation of the usual formulation tailored for the position space approach. In addition, \(\delta_{\left ( \gamma \hookrightarrow \Gamma \setminus \gamma \right )}\) identifies the external vertices of \(\gamma\) with their internal copies in \(\Gamma \setminus \gamma\). Morally speaking, this decomposition now allows the application of the counterterm map on the left hand side of the tensor product together with the ordinary Feynman rules map on the right hand side: Thus, subtracting this from the ordinary Feynman rules map of the whole graph results in a Feynman graph with this particular subdivergence resolved. Notably, iterating this procedure results in a renormalized and local amplitude. More specifically, after setting up an appropriate partial algebra of distributions \(\mathcal{A}\), we can view Feynman rules as \emph{characters}, i.e.\ algebra maps, \(\Phi \colon \mathcal{H} \to \mathcal{A}\). Crucially, the set of algebra maps from \(\mathcal{H}\) to \(\mathcal{A}\) can be turned into a group \(G^\mathcal{H}_\mathcal{A}\), the so-called \emph{character group}, by introducing the \emph{convolution product} \(f_1 \star f_2 \coloneq m_\mathcal{A} \circ \left ( f_1 \otimes_\mathbb{Q} f_2 \right ) \circ \Delta_\mathcal{H}\). Now, the algebraic Birkhoff decomposition --- the main theorem of Connes--Kreimer renormalization \cite[Theorem 1]{Connes:1999zw} --- is the statement that, provided a splitting of the target algebra \(\mathcal{A} \cong \mathcal{A}_+ \oplus \mathcal{A}_-\) into its regular and singular parts, each character \(f \in G^\mathcal{H}_\mathcal{A}\) can be uniquely decomposed as the convolution product \(f \equiv {f^-}^{\star -1} \star f^+ \iff f^+ \equiv f^- \star f\), where the \(f^\pm \in G^\mathcal{H}_\mathcal{A}\) are two characters with respective images in \(\mathcal{A}_\pm\). Ultimately, we want the splitting of \(\mathcal{A}\) as a vector space into \(\mathcal{A}_\pm\) to be induced via a \emph{renormalization scheme} \(\mathscr{R} \colon \mathcal{A} \twoheadrightarrow \mathcal{A}_-\), cf.\ \Cref{defn:renormalization-scheme}. More precisely, in this article we study the specific case of \(\mathscr{R}\) being the Epstein--Glaser scheme.

\enter

The Epstein--Glaser approach is a position space renormalization scheme in which a \emph{locality} axiom (conditions 1 and 2 in \Cref{def:renorm-maps}) derived from physical causality requirements is the most important requirement.
The analytical component of the procedure is encoded in the problem of extending a distribution $ u\in \mathscr{D}'(M\backslash \Sigma)$, where $\Sigma$ is a closed subset of $M$, to a distribution $\overline{u} \in \mathscr{D}'(M)$.
Formulating the construction in terms of these extension maps means that one deals only with finite, well defined quantities, and no ``subtractions of infinities'' occur.
This is manifest in \Cref{construct:ren-maps}.

Of course, singular limits and counterterms do not vanish completely --- they enter in the construction of the extension maps.
The Epstein--Glaser scheme is to some extent agnostic as to the precise method of constructing these maps, so long as they satisfy all the requirements.
However, at least since \cite{brunettiMicrolocalAnalysisInteracting2000} the most common method has been to regularize the distributions involve by cutting out singularities that occur at the points where two or more configuration points coincide, and then taking the finite part as this regulator is removed.
It is a construction in this spirit (see \Cref{def:hadamard-reg-basic}) which we will employ in the present article.

\enter

\textbf{This article is structured as follows:} In \Cref{sec:epstein-glaser}, we start with an overview to the Epstein--Glaser approach to renormalization: This includes the position-space approach to renormalization as well as a discussion on the analytic background including distributions associated to Feynman graphs. Crucially, the central part is about extending a certain class of distributions to a subspace on which they are potentially singular. 

In \Cref{sec:connes-kreimer}, we give an introduction to the Connes--Kreimer framework of renormalization: This includes the Hopf algebra of Feynman graphs and the algebraic Birkhoff decomposition of the Feynman rules, defining renormalized Feynman rules and the counterterm map, with respect to a chosen renormalization scheme. Then we specialize to the position space situation, defining an appropriate target algebra of distributions and the corresponding renormalization scheme. Subsequently, we explain the necessary background to multiplicative renormalization.

In order to motivate and illustrate our main results, we then discuss the example of massless \(\phi^3_6\)-theory in \Cref{sec:examples}: Starting from the corresponding Lagrange density, we display the unrenormalized Feynman rules together with the propagator and three-point Feynman graphs up to two loops. Next, we calculate their coproducts, display the corresponding combinatorial Green's functions and their coproduct identities. From this, we calculate the respective counterterms, renormalized Feynman rules and \(Z\)-factors. Then, we calculate the explicit analytic expressions for the two-loop propagator graphs.

In \Cref{sec:main-results}, we state and prove our main results of the present article: Specifically, in \Cref{thm:Locality+Rota-Baxter implies Epstein-Glaser}, we show that the Rota--Baxter property is essentially equivalent to the Epstein--Glaser factorization property. This justifies the claim that both properties induce a description of locality in Quantum Field Theory. In \Cref{thm:Multiplicative Renormalization scheme yields local extension+ Epstein-Glaser}, we show that the combinatorics of Connes--Kreimer actually define an extension for all distributions coming from Feynman graphs. We also prove that this extension operator satisfies the Epstein--Glaser factorization property. In \Cref{thm:Renormalized Lagrangian gives renormalized Feynman rules}, we conclude by showing that the Feynman rules of the renormalized Lagrange density yield the renormalized Feynman rules. This establishes a renormalized Lagrange density for the Epstein--Glaser scheme.

Finally, we conclude our results in \Cref{sec:conclusion}.

In Appendix \ref{sec:apx-hadamard-finite-part}, we provide the necessary analytic background for the extension of distributions, in particular the \emph{Hadamard finite part}.

\section{Epstein--Glaser and the extension of distributions} \label{sec:epstein-glaser}

The expressions \emph{Epstein--Glaser renormalisation} or \emph{causal perturbation theory}, which are used synonymously, describe a position space approach to renormalisation. Specifically, they are tailored to satisfying the demands of algebraic quantum field theory, in particular locality.
It was developed in \cite{epsteinRoleLocalityPerturbation1973} and builds on earlier work
of Bogoliubov and Stückelberg.
The construction is simple and affords easy proofs of the desired properties.

The drawback is the non-explicit nature of the construction.
The combinatorics in the Epstein--Glaser approach are simple, hardly worth speaking of.
The price for this simplicity is that we are forced to introduce a partition of unity
which makes the construction less explicit than in other approaches.

\subsection{Renormalisation maps and their construction}

In Epstein--Glaser renormalisation, the renormalisation procedure is encoded
in the construction of \emph{renormalisation maps}.
Given a graph $\gamma$ with $V(\gamma) = I \subset \mathbb{N}$,
the Feynman rules of the theory determine unrenormalized (or \emph{bare}) amplitudes
$t_\gamma \in \mathscr{D}'(M^{I}\backslash D_I)$, where
\begin{align}
    D_I = \{ (x_{i_1}, \ldots, x_{i_n})\in M^I| \exists k \neq l: x_{i_k} = x_{i_l}\}
\end{align}
is the \emph{fat diagonal} in $M^I$.
The renormalisation maps transform such an unrenormalised amplitude into a distribution
$\overline{t_\gamma} \in \mathscr{D}'(M^I)$ on the whole space.

The notion of renormalisation maps in the form presented here derives from \cite{nikolovAnomaliesQuantumField2009,nikolovRenormalizationTheoryFeynman2009} and the subsequent developments in \cite{nikolovEuclideanConfigurationSpace2013,nikolovRenormalizationMasslessFeynman2014,dangExtensionDistributionsScalings2014,dangComplexPowersAnalytic2015,dangRenormalizationFeynmanAmplitudes2017,dangRenormalizationQuantumField2019a,hofmannMicrolocalMethodsQuantum2024}.

The distributions on which the renormalisation maps act are defined in terms of Feynman graphs.
Let $\Gamma(I)$ be the set of Feynman graphs $\gamma$ with vertex set $V(\gamma) = I$.
Each edge $e\in E(\gamma)$ is labelled by a distribution $\Phi_{\gamma, M}(e)$, drawn from some space of admissible distributions on $\mathscr{D}'(M\times M)$ called \emph{propagators}.

\enter

\begin{defn}[Propagators]
\label{def:propagators}
Let $\Psi_{\text{cl}}^m(M)$ denote the space of classical pseudodifferential operators of order $m$ on $M$.
We will write $\Psi_{\text{cl}}(M) = \bigcup_{m\in\mathbb{Z}} \Psi_{\text{cl}}^m(M)$ for the space of integer order pseudodifferential operators.
By the Schwartz kernel theorem, any operator $A \in \Psi_{\text{cl}}(M)$
possesses an integral kernel $K_A \in \mathscr{D}'(M\times M)$ and we freely identify the operator $A$ and its kernel $K_A$.

A \emph{propagator} on $M$ is a distribution $u \in \mathscr{D}'(M\times M)$
such that
\begin{itemize}
    \item $u \in \Psi_{\text{cl}}(M)$, and
    \item $u$ is symmetric, $u(x,y) = u(y,x)$.
\end{itemize}
\end{defn}

\henter

\begin{defn} \label{def: Feynman distributions}
Let \(I \subset \mathbb{N}\) be a finite set, and let
\(\Gamma(I)\) be the set of Feynman graphs \(\gamma\) with vertex set \(V(\gamma) = I\).
A \emph{Feynman rule on $M$} is a map $\Phi_{\gamma, M}: E(\gamma) \to \Psi_{\text{cl}}(M)$.
Any such Feynman rule determines an \emph{unrenormalised Feynman amplitude}
\begin{align}
\label{eq:unren-feynm-eg}
t_{\Phi, \gamma}(x_I) = \prod_{e \in E(\gamma)} \Phi_{\gamma, M}(e)(x_{s(e)}, x_{t(e)}) \in \mathscr{C}^\infty(M^I\backslash D_I) \subset \mathscr{D}'(M^I\backslash D_I).
\end{align}
where $(s(e), t(e))$ are the source and target vertices of $e$.

We denote by $\mathscr{O}(M,I) \subset \mathscr{D}'(M^I\backslash D_I)$ the vector space
\begin{align*}
    \mathscr{O}(M,I) := \operatorname{span}_{\mathscr{C}^\infty(M^I)}\{ t_{\Phi, \gamma}| \gamma \in \Gamma(I), \Phi: E(\gamma) \to \Psi_{\text{cl}}(M)\}.
\end{align*}
\end{defn}

\enter

\begin{exmp}
Let $\gamma$ be the ice-cream cone graph, i.e.
\begin{equation}
	\gamma \coloneq  \vcenter{\hbox{\includegraphics[width=0.25\textwidth]{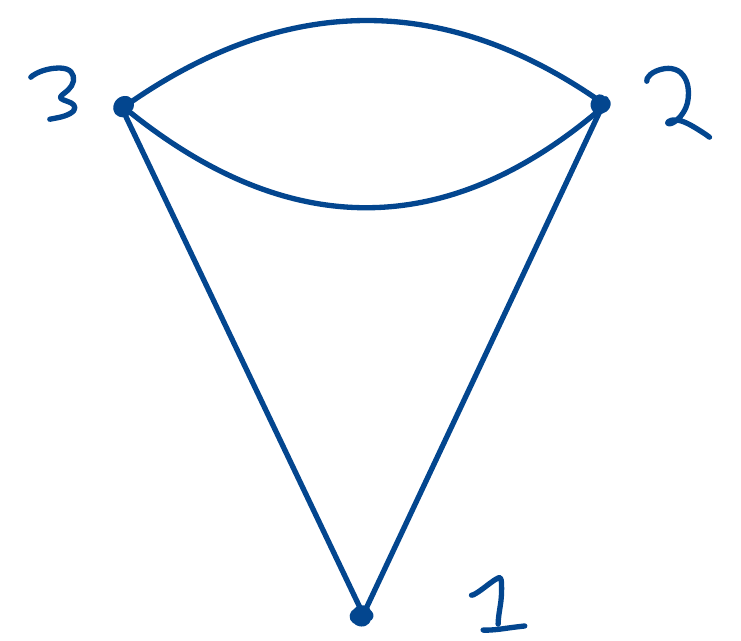}}}
\end{equation}
Enumerate the edges as $e_1, e_2, e_3, e_4$ and assign to each edge a distribution $u_i \in \mathscr{D}'(M\times M)$ which is the kernel of a pseudodifferential operator $P_i$.
Then the unrenormalised Feynman distribution is given via
\begin{align*}
    t_\gamma(x_1,x_2,x_3) = u_1(x_1, x_2) u_2(x_1, x_3) u_3(x_2, x_3) u_4(x_2, x_3).
\end{align*}
\end{exmp}

\enter

The core of the Epstein--Glaser method is to avoid complicated combinatorics
by restricting a Feynman amplitude $t_\gamma$ to certain open subsets where we avoid a part of the
singularities.
We can then patch together these restrictions using a partition of unity.
The basis for this technique is the following result,
known variously as the \emph{geometric lemma}, \emph{Stora's lemma} or the \emph{diagonal lemma}.
To illustrate the cover we return to the ice cream cone graph:
\begin{figure}[h]
    \centering
    \begin{subfigure}[b]{0.25\textwidth}
    \includegraphics[width=\textwidth]{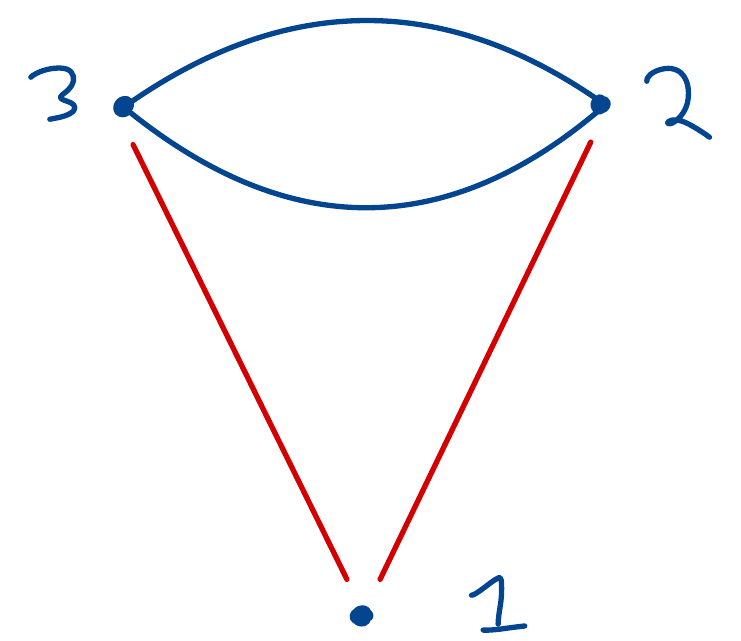}
    \caption{$C_{\{1\}, \{2,3\}}$}
    \label{fig:stora-cover-1}
    \end{subfigure}
    \hfill
     \begin{subfigure}[b]{0.4\textwidth}
         \centering
         \includegraphics[width=\textwidth]{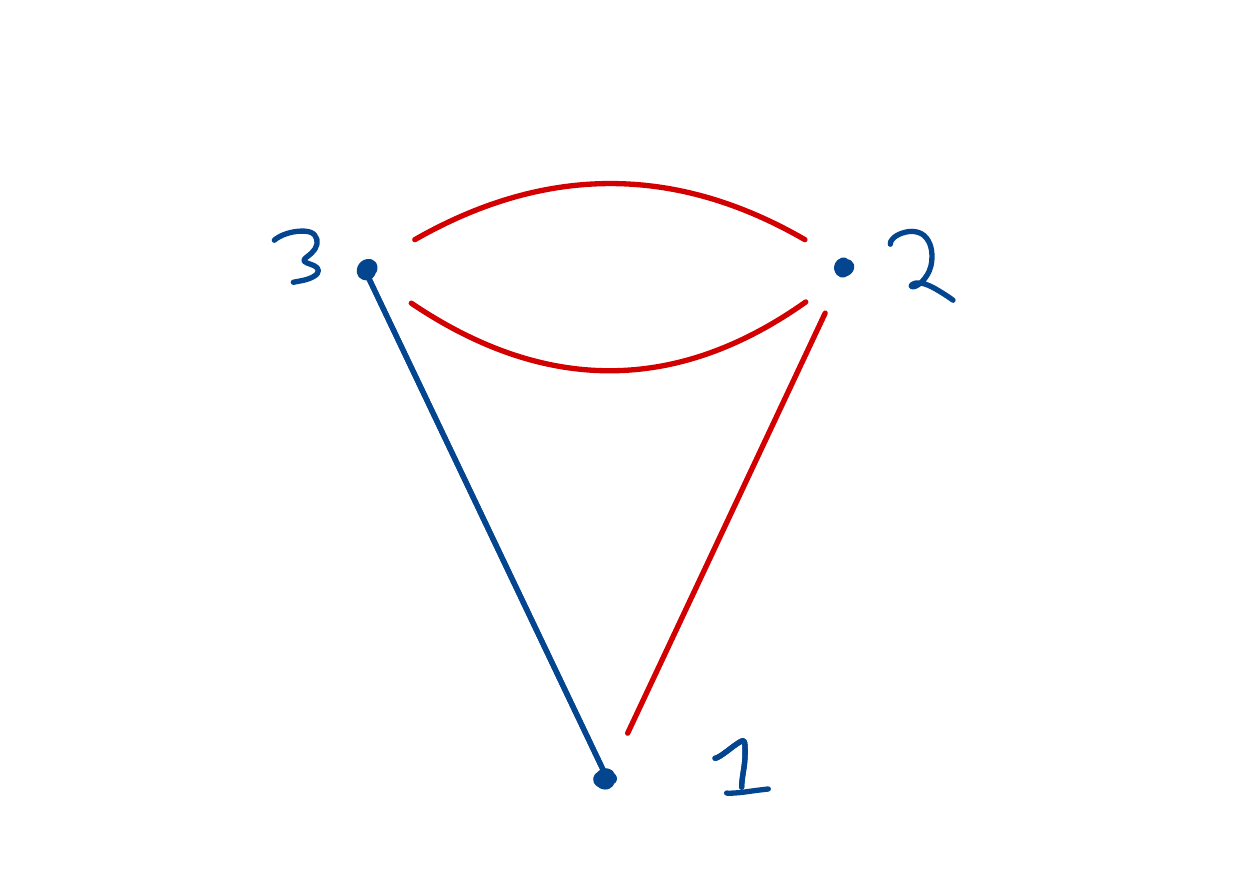}
         \caption{$C_{\{2\}, \{1,3\}}$}
         \label{fig:stora-cover-2}
     \end{subfigure}
    \hfill
     \begin{subfigure}[b]{0.25\textwidth}
         \centering
         \includegraphics[width=\textwidth]{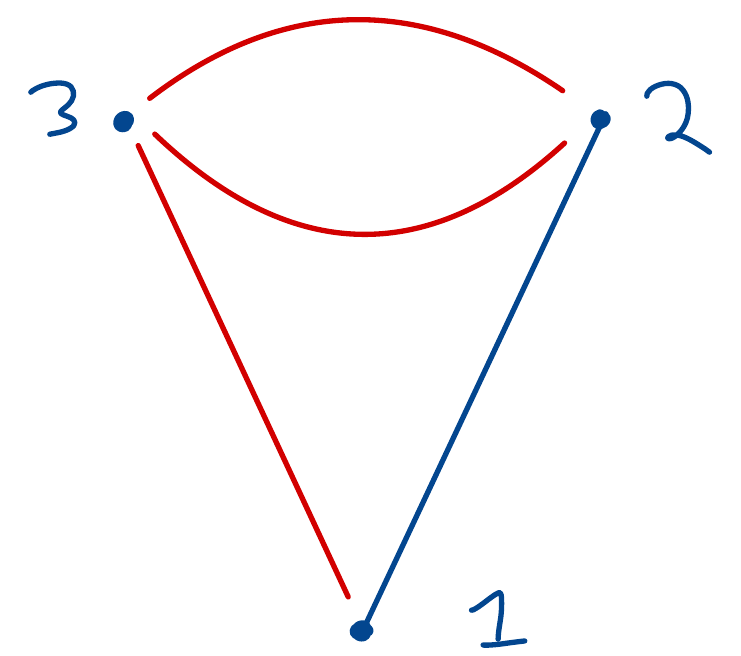}
         \caption{$C_{\{3\}, \{1,2\}}$}
         \label{fig:stora-cover-3}
     \end{subfigure}
\end{figure}

\enter

\begin{lem}
\label{lem:stora}
Let $I \subset \mathbb{N}$ be finite. If $J\subset I$, define
$J^c \coloneq I \backslash J$ and define the open sets
\begin{align}
    C_{I,J} = \{x \in M^I| x_i \neq x_j\ \forall\ i \in J, j \in J^c\} \, .
\end{align}
Then
\begin{align}
    M^I \backslash d_I(M) = \bigcup_{\substack{J \subsetneq I, \\
    J \neq \emptyset}} C_{I,J} \, .
\end{align}
\end{lem}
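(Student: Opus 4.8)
The plan is to verify the claimed set equality by proving the two inclusions separately, the crucial point being that the sets $C_{I,J}$ for proper nonempty $J$ are exactly what detect when the coordinates of a configuration fail to all coincide. Throughout I use that $d_I(M)$ denotes the thin diagonal $\{x \in M^I \mid x_i = x_j \text{ for all } i,j \in I\}$, so that $M^I \setminus d_I(M)$ is precisely the set of configurations whose coordinates are not all equal.

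For the inclusion ``$\subseteq$'', fix $J$ with $\emptyset \neq J \subsetneq I$ and take $x \in C_{I,J}$. Since $J$ and $J^c$ are both nonempty, pick $i \in J$ and $j \in J^c$; the defining condition of $C_{I,J}$ then forces $x_i \neq x_j$, so the coordinates of $x$ are not all equal and hence $x \notin d_I(M)$.

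For the reverse inclusion ``$\supseteq$'', suppose $x \in M^I \setminus d_I(M)$, so that at least two of its coordinates differ. I would introduce the equivalence relation on $I$ given by $i \sim j \iff x_i = x_j$ and let $J$ be any one of its equivalence classes. Because not all coordinates of $x$ agree, this partition has at least two blocks, so $J$ is nonempty and a proper subset of $I$. If $i \in J$ and $j \in J^c$, then $i$ and $j$ lie in distinct blocks, whence $x_i \neq x_j$; therefore $x \in C_{I,J}$, exhibiting $x$ in the union.

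Finally, I would record for completeness that each $C_{I,J}$ really is open: using that $M$ is Hausdorff, every condition $x_i \neq x_j$ cuts out the complement of the closed diagonal in $M \times M$, and $C_{I,J}$ is the finite intersection of the preimages of these open sets under the coordinate projections $M^I \to M \times M$. I do not anticipate a genuine obstacle here, as the statement is essentially combinatorial; the only step requiring a moment's thought is the reverse inclusion, where passing to the equivalence classes of the coordinate-coincidence relation is exactly what converts ``not all equal'' into ``separable by a nontrivial partition.''
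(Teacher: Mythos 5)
Your proof is correct. Note that the paper states this lemma without proof (it is quoted as the standard ``geometric lemma''/Stora's lemma), so there is no in-paper argument to compare against; your equivalence-class construction --- partitioning $I$ by $i \sim j \iff x_i = x_j$ and taking $J$ to be one block --- is exactly the standard proof, and your observation that $d_I(M)$ must be read as the \emph{thin} diagonal (the image of the diagonal embedding $M \to M^I$, consistent with the paper's later use of the normal bundle $N d_I$) rather than the fat diagonal $D_I$ is the one genuinely delicate point: with $D_I$ in place of $d_I(M)$ the reverse inclusion would fail (e.g.\ $x_1 = x_2 \neq x_3$ lies in $C_{I,\{3\}}$ but also in $D_I$), and you resolved it correctly. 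The openness check via Hausdorffness of $M$ is a sound, if routine, addition.
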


\enter

The existence of such a partition of unity enters directly into our definition
of renormalisation maps.

\enter

\begin{defn}\label{def:renorm-maps}
Fix a dimension $d$.
A family of \emph{renormalisation maps} is given by linear maps
\begin{align}
\Ren_{I,M}: \mathscr{O}(I,M) \to \mathscr{D}'(M^{I})
\end{align}
for every $d$-dimensional Riemannian manifold $(M, g)$ and each finite set $I \subset \mathbb{N}$,
satisfying the following properties.

\begin{enumerate}
\item \emph{(Extension)} The restriction of \(\Ren_{I,M}({t}_\gamma)\) to
\(M^{I} \backslash D_{I}\) is \({t}_{\gamma}\).
\item \emph{(Factorization)} Let \(\gamma_1,\gamma_2\) be disjoint Epstein--Glaser subgraphs such that
\(V(\gamma_1) = J = I \backslash V(\gamma_2)\).
Then
  \begin{align}
\Ren_{I,M}(t_\gamma)|_{C_J} = \left(\Ren_{J,M}(t_{\gamma_1})
 \otimes \Ren_{J^c,M}(t_{\gamma_{2}})\right)|_{C_{J}}
\prod_{\substack{s(e) \in J\\ t(e) \in J^{c}}} G_{e}(x_{s(e)},x_{t(e)})|_{C_J} \, .
\end{align}
where $\{C_J | J \subset I\}$ is the cover of $M^I \backslash d_I$ from
\Cref{lem:stora}.
\item \emph{(Covariance)} If \(f: M \to N\) is an isometric embedding and \(f^I: M^I \to N^I\)
is the induced isometric embedding,
\begin{align}
\Ren_{I,M}(\gamma) = (f^I)^* \left ( \Ren_{I, N}(\gamma)|_{\operatorname{ran} f^{I}} \right ) \, .
\end{align}
\end{enumerate}
\end{defn}

\enter

\Cref{lem:stora} and the factorization property imply that
$\Ren_{I,M}$ is determined uniquely on $M^I\backslash d_I$ by
the maps $\Ren_{J,M}$ for proper subsets $J\subsetneq I$.
Therefore one can build the maps $\Ren_{I,M}$ inductively.
At each step of the induction, it is necessary to extend a distribution
from $M^I\backslash d_I$ to $M^I$, such that the remaining axioms are satisfied.

\enter

\begin{construction}
\label{construct:ren-maps}
\begin{enumerate}
    \item Proceed inductively. The induction start for $|I| \leq 1$ is trivial. Then suppose that $\Ren_{J,M}(t_\gamma)$ has been defined for all $J\subsetneq I$ and graphs $\gamma$ with $V(\gamma) = J$.
    \item Define $\dot{\Ren}_{I,M}(t_\gamma) \in \mathscr{D}'(M^I\backslash d_I)$ by using the factorization condition. If $\chi_I$ is a partition of unity subordinate to $C_I$, put
	    \begin{align}\label{eq: factorization property}
	        \dot{\Ren}_{I,M}(t_\gamma) = \sum_{J\subset I} \chi_J \cdot \left(\Ren_{J,M}(t_{\gamma_1}) \otimes \Ren_{J^c,M}(t_{\gamma_{2}})\right)
\prod_{\substack{i \in J\\ j \in J^{c}}} G_{ij}(x_i,x_j)^{l_{ij}} \, .
	    \end{align}
	    This is independent of the choice of partition of unity.
    \item Let $\exp_x: T_x M \supset U \xrightarrow{\simeq} V \subset M$ be the Riemannian exponential map. Then there is an induced map $\exp_x^I: U^I \xrightarrow{\simeq} V^I$ and a tubular neighborhood embedding
	    \begin{align}
	    \begin{aligned}
	        &\phi_I: N d_I \to M^I\\
	        &(x, h_1,\ldots, h_n) \mapsto (\exp_x h_1, \ldots, \exp_x h_n) \, .
	    \end{aligned}
	    \end{align}
    \item Given an extension map $E_I$ from $N d_I \backslash d_I \to N d_I$, such that $\phi_I^* \dot{\Ren}_{I,M}(t_\gamma)$ is in the domain of $E_I$,
	\begin{align}
	    \Ren_{I, M}(t_\gamma) \coloneq (\phi_I^{-1})^* E_I(\phi_I^* \dot{\Ren}_{I,M}(t_\gamma)) \, .
	    \end{align}
	\end{enumerate}
\end{construction}

\enter

We have not yet defined what we mean by an extension map, and on what distributions such a map can be defined. This we will do in the next section subsection.

\subsection{Extension maps} \label{sec: extension of distributions}

We will now now provide a definition for the extension problem $\mathbb{R}^d\backslash\{0\} \to \mathbb{R}^d$.
Since $N d_I$ is a vector bundle, Step 4 of \Cref{construct:ren-maps} can indeed be reduced to this case (or the similar case of a neighborhood of the origin in $\mathbb{R}^d$) by extending in each fiber of the bundle.

\enter

\begin{defn} \label{def:ext-map-restrict}
Let $U \subset \mathbb{R}^d$ be a neighborhood of $0$ and write $\dot{U} \coloneq U \backslash 0$.
Let $\mathscr{D}(\dot{U}) \to \mathscr{X}(U) \to \mathscr{D}'(\dot{U})$ be a space of distributions on $\dot{U}$.
An \emph{extension map} is a linear map $E: \mathscr{X}(U) \to \mathscr{D}'(U)$ such that
\begin{enumerate}
\item if $r: \mathscr{D}'(U) \to \mathscr{D}'(\dot{U})$ is the restriction map,
    then $r\circ E = \operatorname{id}_{\mathscr{D}'(\dot{U})}$.
\end{enumerate}We will call $E$ \emph{flat} if moreover 
\begin{enumerate}[resume]
   \item $E$ does not increase supports, i.e. $\operatorname{supp}(Eu) \subset \overline{\operatorname{supp}(u)}^{U}$ (closure in $U$). \label{def:ext-map-supp}
\end{enumerate}
\end{defn}

\enter

\begin{rem}
In order to construct renormalisation maps, one uses extension maps as just defined.
The requirement of \emph{flatness} is required to ensure that the construction is covariant.
If covariance is left out of the requirements for renormalisation maps,
one can dispense with flatness as well.
\end{rem}

\enter

The extension of distributions is closely tied to their scaling behavior at 0.
This is quantified by the \emph{scaling degree}, a generalization of the degree of homogeneity of a distribution.

\enter

\begin{defn}
Let $\mathbb{R}^+$ denote the group $(0,\infty)$ with multiplication.
\begin{enumerate}
\item An action
\begin{align}
\begin{aligned}
    \mathbb{R}^+ \times V &\to V\\
    (\lambda, x) &\mapsto \mathfrak{s}(\lambda)x
\end{aligned}
\end{align}
of $\mathbb{R}^+$ on a set $V$ will be called a \emph{scaling action}.
\item A set $V$ together with a scaling action $\mathfrak{s}$ will be called a
\emph{conic set}. If $V$ is a smooth manifold we will call the pair $(V, \mathfrak{s})$ a \emph{conic manifold}.
\item If $V \subset E$ where $E$ is a vector space, we will call $V$ \emph{conic}
if it is a conic set with respect to the standard action of $\mathbb{R}^+$ on $E$ by scalar multiplication.
\item A subset $U$ of a conic set $\Gamma$ will be called \emph{scaling convex} if $\mathfrak{s}(\lambda)x \in U$ for all $x \in U$ and for all $\lambda \in (0,1]$.
\item If the scaling action $\mathfrak{s}$ is understood, we may occasionally abuse notation and write $\lambda x \coloneq \mathfrak{s}(\lambda)x$.
\end{enumerate}
\end{defn}

\enter

Suppose $\phi \in \mathscr{D}(U)$ and $U$ is scaling convex.
Define $\phi_{\lambda^{-1}}(x) = \phi(\lambda^{-1} x)$, then for $\lambda \in (0,1]$
$\operatorname{supp} \phi_{\lambda^{-1}} = \lambda \operatorname{supp} \phi \subset U$
if $U$ is scaling convex.
Hence we can define the scaling action on distributions in the following way.

\enter

\begin{defn}
Let $(\Gamma, \mathfrak{s})$ be a conic manifold, $U\subset \Gamma$
be open and scaling convex, $u \in \mathscr{D}'(U)$,
    $\lambda \in (0,1]$.
    Then the distribution $\lambda^* u \coloneq \mathfrak{s}(\lambda)^*u \in \mathscr{D}'(U)$ is defined as
    \begin{align}
        \langle \lambda^* u , \phi\rangle = \lambda^{-d} \langle u , \phi_{\lambda^{-1}} \rangle \, .
    \end{align}
\end{defn}

\enter

\begin{rem}
Let $U \subset \mathbb{R}^d$ be scaling convex and $\mathfrak{s}$ the standard scaling action.
Then if $u$ is smooth, $(\lambda^*u)(x) = u(\lambda x)$. A distribution $u\in \mathscr{D}'(U)$ is called \emph{homogeneous} of degree $\alpha$ if $\lambda^* u = \lambda^\alpha u$.
\end{rem}

\enter

\begin{defn} \label{defn:scaling-degree}
Let $\Gamma \subseteq \mathbb{R}^d$ be a conic set.
\begin{enumerate}
    \item The space
$\mathscr{H}_s(\Gamma)$ consists of all $u \in \mathscr{D}'(\Gamma)$
such that
\begin{align}
    p_{\phi} \coloneq \sup_{\lambda \in (0,1]} \lambda^s | \langle \mathfrak{s} (\lambda)^* u , \phi \rangle | < \infty \, ,
    \qquad \phi \in \mathscr{D}(\Gamma) \, .
\end{align}
This defines a family of seminorms $p_\phi$ on $\mathscr{H}_s(\Gamma)$.
\item The \emph{scaling degree} of $u \in \mathscr{D}'(\Gamma)$ is
\begin{align}
    \operatorname{SD} \left ( u \right ) \coloneq \operatorname{inf} \{s \in \mathbb{R}| u \in \mathscr{H}_s(\Gamma)\} \, .
\end{align}
\end{enumerate}
\end{defn}

\enter

\begin{rem}
If $u$ is homogeneous of degree $\alpha \in \mathbb{C}$, then $\operatorname{SD}(u) = -\operatorname{Re}\alpha$.
\end{rem}

\enter

If $u$ has $\operatorname{SD}(u) < d$ then an extension exists by a simple regularization procedure.
This procedure will follow us throughout the article; we will refer to it as \emph{Hadamard regularization}.

\enter

\begin{defn}
\label{def:hadamard-reg-basic}
Let $u \in \mathscr{D}'(\dot{U})$, let $\chi \in \mathscr{D}(U)$ such that $\chi \equiv 1$ in a neighborhood of $0$.
As before we denote $\chi_\epsilon(x) = \chi(x/\epsilon)$.
The \emph{Hadamard regularization} of $u$ is the family $T_\epsilon u := (1-\chi_\epsilon) u \in \mathscr{D}'(U)$.
\end{defn}

\henter

\begin{lem} \label{lem:extension lem of distributions}
If $\operatorname{SD}(u) < d$, then $T_\epsilon u \xrightarrow{\epsilon\to 0} \overline{u}$ where $\overline{u} \in \mathscr{D}'(U)$ and $\overline{u}|_{\dot{U}} = u$.
\end{lem}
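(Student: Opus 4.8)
The plan is to test against an arbitrary $\phi\in\mathscr{D}(U)$ and show that $\langle T_\epsilon u,\phi\rangle$ is Cauchy as $\epsilon\to 0$; since $\mathscr{D}'(U)$ is sequentially complete this produces the limit $\overline u$. First note that $T_\epsilon u=(1-\chi_\epsilon)u$ is a well-defined element of $\mathscr{D}'(U)$: the smooth factor $1-\chi_\epsilon$ vanishes on a neighbourhood of $0$, so the product is supported in $\dot U$ and extends by zero across the origin. For $0<\epsilon<\epsilon'$ the difference is $\langle T_\epsilon u-T_{\epsilon'}u,\phi\rangle=\langle u,(\chi_{\epsilon'}-\chi_\epsilon)\phi\rangle$, and crucially $\chi_{\epsilon'}-\chi_\epsilon$ again vanishes near $0$ (both cutoffs equal $1$ there), so this pairing makes sense on $\dot U$. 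Everything thus reduces to estimating these annular pairings, and the whole weight of the argument rests on the hypothesis $\operatorname{SD}(u)<d$.

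I would fix $s$ with $\operatorname{SD}(u)<s<d$, so that $u\in\mathscr{H}_s(\dot U)$, and exploit scaling through a dyadic decomposition. Set $\psi(y):=\chi(y)-\chi(2y)$, which is supported in an annulus $\{1/2\le|y|\le 2\}$ bounded away from $0$. Along a dyadic sequence $\epsilon_n=2^{-n}\epsilon_0$ one has the telescoping identity $\chi_{\epsilon_0}-\chi_{\epsilon_N}=\sum_{n=0}^{N-1}\psi(\cdot/\epsilon_n)$, since $\chi(x/\epsilon_n)-\chi(x/\epsilon_{n+1})=\chi(x/\epsilon_n)-\chi(2x/\epsilon_n)=\psi(x/\epsilon_n)$. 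For a single bump at scale $\lambda$ I rewrite $\psi(\cdot/\lambda)\phi=G_{\lambda^{-1}}$ with $G(y):=\psi(y)\phi(\lambda y)$, and then the definition of the scaling action gives $\langle u,\psi(\cdot/\lambda)\phi\rangle=\lambda^{d}\langle\mathfrak{s}(\lambda)^*u,G\rangle$, converting the $\lambda^{-s}$ growth allowed by $\mathscr{H}_s$ into a net decay $\lambda^{d-s}$ with $d-s>0$.

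The main obstacle is that membership $u\in\mathscr{H}_s$ only asserts finiteness of the seminorm $p_\phi$ for each \emph{fixed} test function, whereas here the profiles $G(y)=\psi(y)\phi(\lambda y)$ vary with the scale. I would resolve this by a uniform boundedness argument: the family $\{\lambda^s\,\mathfrak{s}(\lambda)^*u:\lambda\in(0,1]\}$ is pointwise bounded on the Fréchet (hence barrelled) space $\mathscr{D}(K)$, $K=\operatorname{supp}\psi$, so by Banach--Steinhaus it is equicontinuous, yielding a constant $C$ and an integer $N$ with $|\langle\mathfrak{s}(\lambda)^*u,G\rangle|\le C\lambda^{-s}\norm{G}_{C^N}$ uniformly in $\lambda$ and $G\in\mathscr{D}(K)$. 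Since the rescaled profiles $G_n(y)=\psi(y)\phi(\epsilon_n y)$ satisfy $\sup_n\norm{G_n}_{C^N}<\infty$ (derivatives of $\phi(\epsilon_n\,\cdot)$ carry harmless factors $\epsilon_n\le1$), each dyadic term obeys $|\langle u,\psi(\cdot/\epsilon_n)\phi\rangle|\le C'\epsilon_n^{\,d-s}=C'(2^{-n}\epsilon_0)^{d-s}$. Summing the geometric series in $d-s>0$ gives $|\langle T_{\epsilon_N}u-T_{\epsilon_0}u,\phi\rangle|\le C''\epsilon_0^{\,d-s}$ and, more importantly, a Cauchy tail estimate as $N\to\infty$; a comparison of a general $\epsilon$ to the neighbouring dyadic scales then upgrades this to convergence along all $\epsilon\to0$.

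Finally I would package the limit $\langle\overline u,\phi\rangle:=\lim_{\epsilon\to0}\langle T_\epsilon u,\phi\rangle$ as a distribution: it is linear in $\phi$, and sequential completeness of $\mathscr{D}'(U)$ (or a second appeal to Banach--Steinhaus) guarantees $\overline u\in\mathscr{D}'(U)$. For the restriction statement, if $\operatorname{supp}\phi\subset\dot U$ is compact and avoids $0$, then $\chi_\epsilon\equiv0$ on $\operatorname{supp}\phi$ once $\epsilon$ is small enough, whence $\langle T_\epsilon u,\phi\rangle=\langle u,\phi\rangle$ for all small $\epsilon$ and therefore $\overline u|_{\dot U}=u$; this is precisely axiom (1) of \Cref{def:ext-map-restrict}. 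I would close with the remark that, because $\operatorname{SD}(u)<d$, the extension is in fact independent of the cutoff $\chi$: any two such extensions differ by a distribution supported at $\{0\}$, i.e.\ a finite combination of derivatives of $\delta_0$, all of which have scaling degree $\ge d$ and are therefore excluded.
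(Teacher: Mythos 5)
Your proof is correct, and it is the classical scaling-degree extension argument (Epstein--Glaser, Brunetti--Fredenhagen): the dyadic layer decomposition $\chi_{\epsilon_0}-\chi_{\epsilon_N}=\sum_n \psi(\cdot/\epsilon_n)$, conversion of each annular pairing into $\epsilon_n^{d-s}$ decay via the scaling action, and --- the one genuinely delicate point, which you handle properly --- Banach--Steinhaus on the barrelled space $\mathscr{D}(K)$ to make the $\mathscr{H}_s$ seminorm bound uniform over the $\lambda$-dependent profiles $G_n$, which mere membership $u\in\mathscr{H}_s$ does not give. The paper itself states the lemma without proof; its nearest in-house analogue is the sketch of \Cref{prop:hadamard-reg-basic-block} in the appendix, which uses the \emph{continuous} version of your decomposition, $1-\chi(x/\epsilon)=\int_\epsilon^\infty \psi(x/\lambda)\,\mathrm{d}\lambda/\lambda$ with $\psi(x)=-x\cdot\nabla\chi(x)$, and then reduces everything to the explicit one-dimensional Hadamard integrals of \Cref{prop:log-hg-hadamard-integrals}. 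The trade-off is instructive: the paper's route applies only to the (log-/poly-)homogeneous class spanned by $\rho^a\log^k\rho$, but for that class it delivers far more than convergence, namely the full partially finite asymptotic expansion in $\epsilon$ on which the renormalization scheme of \Cref{def:hadamard-singular-part-scheme} rests; your route covers an arbitrary $u$ with $\operatorname{SD}(u)<d$, exactly matching the lemma as stated, at the cost of producing no expansion. Two minor points. First, your passage from dyadic to general $\epsilon$ deserves one explicit line: for $\epsilon\in[\epsilon_{n+1},\epsilon_n]$ the profile $y\mapsto \chi(y)-\chi(\epsilon_n y/\epsilon)$ lies in a fixed bounded subset of $\mathscr{D}$ of a fixed annulus (the ratio $\epsilon_n/\epsilon$ ranges over $[1,2]$), so the same equicontinuity bound applies verbatim. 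Second, your closing remark on cutoff-independence is incomplete as stated: that the difference of two extensions is supported at $0$, together with $\operatorname{SD}(\partial^\alpha\delta_0)=d+|\alpha|\ge d$, only forces the difference to vanish once you also know the extensions themselves have scaling degree $<d$ (true here, since the construction shows $\overline{u}\in\mathscr{H}_s$ for your chosen $s<d$, but this needs saying); since the lemma claims neither uniqueness nor cutoff-independence, this does not affect the correctness of the proof proper.
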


\section{Connes--Kreimer and multiplicative renormalization} \label{sec:connes-kreimer}

It was observed by Kreimer that the subdivergence structure of Feynman graphs can be encoded into the coproduct of a Hopf algebra \cite{Kreimer:1997dp}. More precisely, the coproduct takes a Feynman graph and maps it to the sum of all possible combinations of divergent subgraphs tensored with the corresponding cograph, for us in position space the Feynman graph obtained by deleting each connected component of the divergent subgraph. Then, together with Connes, they constructed the renormalized Feynman rules as an algebraic Birkhoff decomposition with respect to a chosen renormalization scheme \cite{Connes:1998qv,Connes:1999zw}. Specifically, their main result was the existence and uniqueness of the renormalized Feynman rules and the counterterm map once the renormalization scheme is chosen, cf.\ \cite{Connes:1999yr,Connes:2000fe}. More formally, the Connes--Kreimer renormalization framework works as follows:

\subsection{The renormalization Hopf algebra}

We start with the definition of the renormalization Hopf algebra: The idea is that we consider the \(\mathbb{Q}\)-vector space of Feynman graphs and construct the coproduct such that it captures the subdivergence structure of Feynman diagrams. Then, we will define the Feynman rules and with that the algebra of formal Feynman distributions as their image.

\enter

\begin{defn}[Set of (1PI) Feynman graphs] \label{defn:feyn-gra}
    A Feynman graph \(\Gamma \equiv (G, \mathfrak{p})\) is a graph \(G \equiv (V,E)\), whose edge set \(E\) is colored by the set of propagating particle types \(\mathcal{E}\) and whose vertex set \(V\) is colored by the set of particle interactions \(\mathcal{V}\), i.e.\ we have the two coloring functions
    \begin{subequations}
    \begin{align}
        \mathfrak{p}_E & \colon E \to \mathcal{E}
        \intertext{and}
        \mathfrak{p}_V & \colon V \to \mathcal{V} \, .
    \end{align}
    \end{subequations}
    Additionally, a (Feynman) graph is called \emph{one-particle irreducible (1PI)}, if it is bridge-free, i.e.\ if it stays connected after the removal of any of its internal edges. We denote the set of all 1PI Feynman graphs for a given theory by \(\mathcal{G}\). Furthermore, we denote by \(\mathcal{G}_\text{conn}\) the set of all connected Feynman graphs, including trees.
\end{defn}

\enter

\begin{defn}[Renormalization Hopf algebra] \label{defn:ren-ha}
Let \(\QFT\) be a Quantum Field Theory with \(\GQ\) denoting the set of its 1PI Feynman graphs. Then, we construct the corresponding renormalization Hopf algebra as the power series algebra\footnote{Strictly speaking, originally the renormalization Hopf algebra was defined as the polynomial algebra \(\HQ \coloneq \mathbb{Q}[\GQ]\) and then completed with respect to one of its gradings, which is equivalent to consider the power series algebra \(\HQ \coloneq \mathbb{Q}[[\GQ]]\) from the start.}
\begin{subequations}
\begin{align}
    \HQ & \coloneq \mathbb{Q}[[\GQ]]
    \intertext{with product \(m \colon \HQ \otimes_\mathbb{Q} \HQ \to \HQ\), given by disjoint union, and corresponding unit \(\one \colon \mathbb{Q} \hookrightarrow \HQ\), given by the empty graph. Additionally, we introduce coproduct \(\Delta\), counit \(\coone\) and antipode \(S\) as follows:\footnotemark}
    \Delta \; & : \quad \HQ \to \HQ \otimes_\mathbb{Q} \HQ \; , \quad \Gamma \mapsto \sum_{\gamma \in \DQ{\Gamma}} \delta_{\left ( \gamma \hookrightarrow \Gamma \setminus \gamma \right )} \gamma \otimes_\mathbb{Q} \Gamma \setminus \gamma \\
    \coone \; & : \quad \HQ \to \mathbb{Q} \; , \quad \Gamma \mapsto \begin{cases} q & \text{if \(\Gamma\) is the empty graph with coefficient \(q \in \mathbb{Q}\)} \\ 0 & \text{else} \end{cases} \\
    S \; & : \quad \HQ \to \HQ \; , \quad \Gamma \mapsto - \sum_{\substack{\gamma \in \DQ{\Gamma} \\ \gamma \subsetneq \Gamma}} S \left ( \gamma \right ) \left ( \Gamma \setminus \gamma \right ) \quad \text{and} \quad S \left ( \one \right ) = 1
\end{align}
\end{subequations}
\footnotetext{We emphasize the differences for the position space and momentum space formulations: While the cograph in the momentum space representation is given via \(\Gamma / \gamma\), i.e.\ shrinking the subgraph to a new vertex, in position space it is given via \(\Gamma \setminus \gamma\), i.e.\ removing the subgraph.}where \(\DQ{\cdot}\) denotes the set of superficially divergent subgraphs (properly introduced below), \(\Gamma \setminus \gamma\) denotes the deletion of each connected component of \(\gamma\) inside \(\Gamma\) but keeping the adjacent vertices. Sometimes we also write \(\otimes^\delta_\mathbb{Q}\) to include the vertex identification map \(\delta\) implicitly into the tensor product. Finally, we remark that the definition of the antipode is by recursion.
\end{defn}

\enter

With the renormalization Hopf algebra settled, we now proceed to define the appropriate target algebra for the Feynman rule map:

\enter

\begin{defn}[Scaling degree and superficial degree of divergence]
Recall \Cref{defn:scaling-degree}, where we defined the \emph{scaling degree} \(\operatorname{SD} \left ( u \right )\) of a distribution \(u \in \mathscr{D}'(M)\). Building on this, we now define its associated \emph{superficial degree of divergence}\footnote{We remark that there exist different sign conventions for the superficial degree of divergence: Our convention is such that it matches the divergence degree of the corresponding Feynman integral.}
\begin{equation}
    \operatorname{SDD} \left ( u \right ) \coloneq d - \operatorname{SD} \left ( u \right ) \, .
\end{equation}
We refer to e.g.\ \cite{Brunetti:1999jn} for a more detailed discussion.
\end{defn}

\enter

\begin{defn}[Set of superficially divergent subgraphs]
Let \(\Gamma \in \mathcal{G}\) be a 1PI Feynman graph. Then we denote by \(\DQ{\Gamma}\) the set of disjoint unions of 1PI subgraphs, where every connected component is superficially divergent:
\begin{equation}
    \DQ{\Gamma} \coloneq \left \{ \gamma \subseteq \Gamma \; \Bigg \vert \; \gamma \cong \bigsqcup_{m = 1}^n \gamma_m \; \text{ such that } \; \operatorname{SDD} \left ( u_{\gamma_m} \right ) \geq 0 \; \text{ for all } \; m \in \{ 1, \dots n \} \right \}
\end{equation}
Observe that this includes \(\Gamma \in \DQ{\Gamma}\) if \(\Gamma\) is itself superficially divergent. Furthermore, it is also convenient to include \(\one \in \DQ{\Gamma}\).
\end{defn}

\enter

A natural object of interest in the renormalization Hopf algebra is the combinatorial Green's function, as the sum over all Feynman diagrams with a specific external leg structure, weighted by their symmetry factor: Evaluating this linear combination of Feynman graphs under the (renormalized) Feynman rules map yields the (renormalized) Green's function, from which the corresponding cross section can be calculated.

\enter

\begin{defn}[(Restricted) Combinatorial Green's function] \label{defn:combinatorial-greens-function}
    Let \(r\) denote an amplitude, represented via the external leg structure (ELS) of a Feynman diagram. Then, we define its corresponding combinatorial Green's function \(\mathfrak{X}^r \in \HQ\) as follows: First, we define the \emph{precombinatorial Green's function} as the sum over all such graphs
    \begin{subequations}
    \begin{align}
        \mathfrak{x}^r & \coloneq \sum_{\substack{\Gamma \in \GQ \\ \operatorname{ELS} \left ( \Gamma \right ) = r}} \frac{1}{\operatorname{Sym} \left ( \Gamma \right )} \Gamma
        \intertext{and then obtain the full \emph{combinatorial Green's function} as the unit plus or minus the precombinatorial Green's function, depending on whether it is a vertex or a propagator amplitude, as follows:\footnotemark}
        \mathfrak{X}^r & \coloneq \begin{cases} \one + \mathfrak{x}^r & \text{If \(r\) is a vertex amplitude} \\ \one - \mathfrak{x}^r & \text{If \(r\) is a propagator amplitude} \end{cases} \label{eqn:full-combinatorial-greens-function}
        \intertext{Finally, we define the \emph{restricted combinatorial Green's function} of loop number \(L\) as the restriction of the above expression to graphs with loop number \(L\):}
        \mathfrak{X}^r_L & \coloneq \eval{\mathfrak{X}^r}_{b_1 \left ( \Gamma \right ) = L}
    \end{align}
    \end{subequations}
    \footnotetext{The reason for the following sign-rule is explained in \Cref{rem:propagator-geom-series}.}%
    We emphasize that Quantum Field Theories with more than one interaction vertex --- especially gauge theories --- require a finer grading than the loop number, cf.\ \cite[Definition 2.18]{Prinz:2019awo} and \cite[Remark 3.31]{Prinz:2018dhq}. Since we only consider scalar theories with one interaction term in the present article, we avoid these technical complications here.
\end{defn}

\enter

\begin{defn}[Cutting and inserting graphs] \label{defn:cutting-graphs}
    Let \(\Gamma \in \GQ\) be a Feynman graph. Then, we define the following cutting operation:
    \begin{equation}
         \cut \, : \quad \HQ \to \HQ \, , \quad \Gamma \mapsto \Gamma_{\mathfrak{e}, \mathfrak{v}} \, ,
    \end{equation}
    where \(\Gamma_{\mathfrak{e}, \mathfrak{v}}\) denotes the sum over all possibilities to cut \(\mathfrak{e}\) edges and remove \(\mathfrak{v}\) vertices from \(\Gamma\).\footnote{See the cographs in \Cref{ssec:coproduct-identities-examples} for examples.}
\end{defn}

\enter

\begin{defn}[Cut (restricted) combinatorial Green's function]
    Given the situation of Definitions \ref{defn:combinatorial-greens-function} and \ref{defn:cutting-graphs}, then we define the \emph{cut precombinatorial Green's function} via
    \begin{subequations}
    \begin{align}
        \mathfrak{y}^r_{\mathfrak{e}, \mathfrak{v}} & \coloneq \cut \left ( \mathfrak{x}^r \right )
        \intertext{and the full \emph{cut combinatorial Green's function} as follows:}
        \mathfrak{Y}^r_{\mathfrak{e}, \mathfrak{v}} & \coloneq \cut \left ( \mathfrak{X}^r \right )
        \intertext{Finally, we define the \emph{cut restricted combinatorial Green's function}:}
        \mathfrak{Y}^r_{L, \mathfrak{e}, \mathfrak{v}} & \coloneq \cut \left ( \mathfrak{X}^r_L \right )
    \end{align}
    \end{subequations}
\end{defn}

\enter

The cut restricted combinatorial Green's functions allow us now to write the position-space \emph{coporduct identities}, which are an important part of \emph{multiplicative renormalization}, as will be remarked in \Cref{ssec:multiplicative-renormalization}:

\enter

\begin{prop}[Coproduct identities] \label{prop:coproduct-identities}
    Given a combinatorial Green's function \(\mathfrak{X}^r \in \mathcal{H}\) of a renormalizable scalar theory with only one interaction vertex.\footnote{Examples are e.g.\ \(\phi^3_6\)-theory, which we discuss throughout the article, and \(\phi^4_4\)-theory. However, we remark that it is possible to relax both conditions, the \emph{renormalizable} as well as the \emph{scalar theory with only one interaction-vertex}: The first needs to be replaced by being \emph{cograph-divergent}, cf.\ \cite[Definition 3.6]{Prinz:2019awo}, and the second requires the restriction to a finer grading, cf.\ \cite[Definition 2.18]{Prinz:2019awo}: This is shown in \cite[Proposition 4.2]{Prinz:2019awo}.} Then, the following coproduct identities hold:
    \begin{subequations}
    \begin{align}
        \Delta \left ( \mathfrak{X}^r \right ) & = \sum_{\mathfrak{e} = 0}^\infty \sum_{\mathfrak{v} = 0}^\infty \left ( \mathfrak{x}^v \right )^\mathfrak{v} \left ( \mathfrak{x}^e \right )^\mathfrak{e} \otimes^\delta_\mathbb{Q} \mathfrak{Y}^r_{\mathfrak{e}, \mathfrak{v}}
        \intertext{and}
        \Delta \left ( \mathfrak{X}^r_L \right ) & = \sum_{l = 0}^L \sum_{\mathfrak{e} = 0}^\infty \sum_{\mathfrak{v} = 0}^\infty \left [ \left ( \mathfrak{x}^v \right )^\mathfrak{v} \left ( \mathfrak{x}^e \right )^\mathfrak{e} \right ]_l \otimes^\delta_\mathbb{Q} \mathfrak{Y}^r_{L - l, \mathfrak{e}, \mathfrak{v}} \, .
    \end{align}
    \end{subequations}
\end{prop}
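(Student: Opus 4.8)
The plan is to prove both identities by a direct combinatorial reorganisation of the defining sum for $\Delta$, grouping the superficially divergent subgraphs of each $\Gamma$ according to the external leg structure of their connected components. First I would apply the coproduct of \Cref{defn:ren-ha} termwise to $\mathfrak{X}^r = \one \pm \mathfrak{x}^r$ and expand $\mathfrak{x}^r = \sum_{\Gamma}\tfrac{1}{\operatorname{Sym}(\Gamma)}\Gamma$, obtaining a double sum over pairs $(\Gamma,\gamma)$ with $\Gamma \in \GQ$ of leg structure $r$ and $\gamma \in \DQ{\Gamma}$. The $\gamma = \one$ contribution immediately produces the $(\mathfrak{e},\mathfrak{v}) = (0,0)$ summand $\one \otimes^\delta_\mathbb{Q} \mathfrak{X}^r$, so the content is to understand the remaining terms. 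Here I would invoke the standing hypothesis that $\QFT$ is renormalizable with a single interaction vertex: this forces every 1PI superficially divergent graph to have either the propagator leg structure $e$ or the vertex leg structure $v$. Consequently each connected component $\gamma_m$ of a divergent subgraph is an $e$-type or $v$-type 1PI graph, i.e.\ precisely a summand of $\mathfrak{x}^e$ respectively $\mathfrak{x}^v$, and I can partition the sum over $\gamma$ by the numbers $\mathfrak{e}$ and $\mathfrak{v}$ of edge-type and vertex-type components.

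Next I would set up the insertion--deletion bijection that underlies the factorization. A divergent subgraph $\gamma$ of type $(\mathfrak{e},\mathfrak{v})$ is equivalent to the data of $\mathfrak{e}$ propagator-divergent subgraphs sitting on internal edges of $\Gamma$, together with $\mathfrak{v}$ vertex-divergent subgraphs sitting at internal vertices; deleting $\gamma$ while keeping the adjacent vertices (as in \Cref{defn:ren-ha}) produces exactly a cograph in which $\mathfrak{e}$ edges have been cut and $\mathfrak{v}$ vertices removed, i.e.\ a summand of $\cut(\mathfrak{X}^r) = \mathfrak{Y}^r_{\mathfrak{e},\mathfrak{v}}$. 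Reading this correspondence in reverse, every pair $(\Gamma,\gamma)$ of type $(\mathfrak{e},\mathfrak{v})$ is obtained uniquely by inserting a tuple of $e$-type and $v$-type divergent graphs into the cut slots of a cograph from $\mathfrak{Y}^r_{\mathfrak{e},\mathfrak{v}}$, with the vertex-identification map $\delta$ recording precisely the gluing of the external vertices of the inserted subgraphs to their internal copies in the cograph --- this is what the decorated tensor product $\otimes^\delta_\mathbb{Q}$ encodes. Under this bijection the left tensor factor assembles, over all insertions, into the product $\left(\mathfrak{x}^v\right)^{\mathfrak{v}}\left(\mathfrak{x}^e\right)^{\mathfrak{e}}$ and the right factor into $\mathfrak{Y}^r_{\mathfrak{e},\mathfrak{v}}$, which yields the claimed first identity once the weights are checked.

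The main obstacle is exactly that weight check: one must verify that the coefficient $1/\operatorname{Sym}(\Gamma)$ distributes under the bijection into $\prod_m 1/\operatorname{Sym}(\gamma_m)$ for the inserted pieces times $1/\operatorname{Sym}(\text{cograph})$, the overcounting from the number of distinct insertion placements being compensated exactly by the ratio of automorphism orders. This is the standard symmetry-factor (``insertion factorial'') identity of the Connes--Kreimer insertion--elimination structure; since the only structural difference from the momentum-space formulation is the replacement of the contraction $\Gamma/\gamma$ by the deletion $\Gamma\setminus\gamma$, I expect the bookkeeping to go through verbatim as in \cite[Proposition 4.2]{Prinz:2019awo}, and I would reduce to that computation rather than redoing it. Finally, the restricted identity follows by intersecting with the loop-number grading: the first Betti number is additive under edge- and vertex-insertions, so $b_1(\Gamma) = \sum_m b_1(\gamma_m) + b_1(\Gamma\setminus\gamma)$ with $b_1(\Gamma\setminus\gamma)$ computed on the cut cograph. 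Restricting the first identity to $b_1(\Gamma) = L$ therefore splits as a convolution over the subgraph loop number $l = \sum_m b_1(\gamma_m)$, producing $\big[(\mathfrak{x}^v)^{\mathfrak{v}}(\mathfrak{x}^e)^{\mathfrak{e}}\big]_l \otimes^\delta_\mathbb{Q} \mathfrak{Y}^r_{L-l,\mathfrak{e},\mathfrak{v}}$, which is the second identity.
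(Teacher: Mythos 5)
Your proposal is correct and is essentially the paper's own argument in expanded form: the paper's proof consists precisely of the observation that renormalizability plus a single interaction vertex forces every superficially divergent subgraph to be a disjoint union of \(e\)-type and \(v\)-type components (i.e.\ summands of \(\mathfrak{x}^e\) and \(\mathfrak{x}^v\)), followed by the statement that summing over the numbers of such components yields the product form, with the symmetry-factor bookkeeping delegated --- exactly as you do --- to the cited momentum-space results (\cite[Proposition 2.30]{Prinz:2019awo}, and for the combining of multiplicities and symmetry factors \cite[Lemma 12]{vanSuijlekom:2006fk}, \cite[Lemma 2.34]{Prinz:2019awo}); your insertion--deletion bijection merely spells out what the paper leaves implicit. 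One small imprecision in your last paragraph: the equation \(b_1(\Gamma) = \sum_m b_1(\gamma_m) + b_1(\Gamma\setminus\gamma)\) is false if \(b_1(\Gamma\setminus\gamma)\) is literally the Betti number of the deleted cograph (cutting an edge or removing a vertex lowers it --- e.g.\ the cograph of the one-loop bubble inside the two-loop propagator graph is a tree with \(b_1 = 0\), not \(1\)); the index \(L-l\) on \(\mathfrak{Y}^r_{L-l,\mathfrak{e},\mathfrak{v}} = \mathfrak{C}_{\mathfrak{e},\mathfrak{v}}(\mathfrak{X}^r_{L-l})\) records the loop number of the graph \emph{before} cutting, and it is additivity of \(b_1\) under insertion into that uncut parent graph that yields the convolution in the restricted identity.
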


\begin{proof}
    Since the theory is renormalizable and has only one interaction vertex, all superficially divergent subgraphs are either edge-corrections, vertex-corrections or disjoint unions thereof: In particular, non-trivial edge-corrections are given by \(\mathfrak{x}^e\) while non-trivial vertex-corrections are given by \(\mathfrak{x}^v\). Thus, the most general case is given by summing over all numbers of edge- and vertex corrections and considering their product.\footnote{We remark the similarity to \cite[Proposition 2.30]{Prinz:2019awo} and \cite[Proposition 2.15]{Prinz:2025obr}, which are the most general 1PI and connected momentum-space versions, including super- and non-renormalizable theories which are \emph{cograph divergent} as well as general gradings, in particular the finer \emph{vertex-} and \emph{coupling-grading}, cf.\ \cite[Remark 3.31]{Prinz:2018dhq} and \cite[Section 3.5]{Prinz:2022qll}.}.
\end{proof}

\enter

\begin{rem} \label{rem:propagator-geom-series}
    Generally, in momentum space the edge corrections are given via \(1 / \mathfrak{X}^e\): This takes into account that it is possible to insert multiple edge-corrections into a single edge by employing the geometric series:
    \begin{equation}
        \frac{1}{\mathfrak{X}^e} \equiv \frac{1}{\one - \mathfrak{x}^e} \equiv \one + \sum_{k = 1}^\infty \left ( \mathfrak{x}^e \right )^k
    \end{equation}
    Since in the position-space representation the cographs are given via \(\Gamma \setminus \gamma\) instead of \(\Gamma / \gamma\) for \(\Gamma \in \GQ\) and \(\gamma \in \DQ{\Gamma}\) the best we can do is to count the number of connected vertex and edge graph components of \(\gamma\), as we need to cut the Green's function accordingly: Thus, here in the position space the famous coproduction formula takes on the rather trivial form of \Cref{prop:coproduct-identities}.
\end{rem}

\subsection{The algebraic Birkhoff decomposition}

Having set up the renormalization Hopf algebra \(\HQ\), the algebra of formal integral expressions \(\AQ\) and the Feynman rule map \(\Phi \colon \HQ \to \AQ\) between them, we now construct the renormalized Feynman rules via an algebraic Birkhoff decomposition. Specifically, we start with the set of characters \(\CG\), i.e.\ the group of algebra morphisms from \(\HQ\) to \(\AQ\), which we turn into a group by means of the convolution product \(\star\) and the respective identity \(\mathbf{1}_\star\): This will turn out to be the natural setting for the renormalization group. Then, we define a renormalization scheme \(\mathscr{R} \in \operatorname{End} \left ( \AQ \right )\) as a splitting of the target algebra \(\AQ \cong \mathcal{A_+ \oplus \AQ_-}\), where \(\AQ_+\) are the convergent integral expressions and \(\AQ_-\) are the purely divergent expressions, as seen via the renormalization scheme \(\mathscr{R}\).

\enter

\begin{defn}[Character group] \label{defn:character-group}
    Let \(\HQ\) be a Hopf algebra and \(\AQ\) be an algebra, then we denote the set of algebra morphisms, also called \emph{characters}, as follows
    \begin{subequations}
    \begin{align}
        \CG & \coloneq \left \{ f \colon \HQ \to \AQ \mid \text{\(f\) is an algebra morphism} \right \} \, .
        \intertext{Additionally, we turn them into a semi-group by introducing the following product, using the coproduct \(\Delta_\HQ\) on the Hopf algebra and the product \(m_\AQ\) on the algebra,}
        f \star g & \coloneq m_\AQ \circ \left ( f \otimes_\mathbb{Q} g \right ) \circ \Delta_\HQ \, ,
        \intertext{then lift it to a monoid by introducing the following unit, as the composition of the counit \(\coone_\HQ\) of the Hopf algebra and the unit \(\one_\AQ\) of the algebra}
        \mathbf{1}_\star & \coloneq \one_\AQ \circ \coone_\HQ
        \intertext{and then finally turn it into a group via the following inversion law, using the antipode \(S_\HQ\) of the Hopf algebra,}
        f^{(\star -1)} & \coloneq f \circ S_\HQ \, .
    \end{align}
    \end{subequations}
    We remark that this construction also works for endocharacters \(G^\HQ_\HQ\) and produces a monoid, if \(\HQ\) is only a bialgebra, as the antipode is required for the group inversion.
\end{defn}

\enter

The idea of the \emph{algebraic Birkhoff decomposition} is now to split the target algebra \(\AQ \cong \AQ_+ \oplus \AQ_-\) by means of a renormalization scheme \(\mathscr{R}\) and then decompose any character \(f \in \CG\) into the following convolution product \(f \equiv {f^-}^{(\star -1)} \star f^+\), where \(f^\pm \in \CG\) are also characters whose image is contained in \(\AQ_\pm\), respectively. To this end, we introduce a renormalization scheme as the splitting of the target algebra \(\AQ\):

\enter

\begin{defn} \label{defn:renormalization-scheme}
    A renormalization scheme is a splitting of the short exact sequence of vector spaces
    \begin{equation}
    \begin{tikzcd}
        0 \arrow[r] & \A_+ \arrow[r] & \A \arrow[l, dashed, "\mkern-27mu 1 - \mathscr{R}", bend left=33] \arrow[r, "\mathscr{R}"] & \A_- \arrow[r] & 0 \, ,
    \end{tikzcd}
    \end{equation}
    such that
    \begin{enumerate}
        \item \(\mathscr{R}\) is a projector onto \(\mathcal{A}_-\), and
        \item \(\mathscr{R}\) satisfies the Rota--Baxter property
    \begin{equation}
        \mathscr{R} \circ m_\mathcal{A} + m_\mathcal{A} \circ \left ( \mathscr{R} \otimes \mathscr{R} \right ) = \mathscr{R} \circ m_\mathcal{A} \circ \left ( \operatorname{Id}_\mathcal{A} \otimes \, \mathscr{R} + \mathscr{R} \otimes \operatorname{Id}_\mathcal{A} \right ) \, ,
    \end{equation}
    \end{enumerate}
    i.e.\ \(\mathscr{R}\) is a linear projector turning the tuple \((\mathcal{A}, \mathscr{R})\) into a Rota--Baxter algebra.
\end{defn}

\enter

\begin{rem}
    The Rota--Baxter property of the renormalization scheme ensures that the renormalized Feynman rules \(\renFR\) as well as the counterterm map \(\CTmap\) are characters themselves, i.e.\ \(\Phi^\pm \in \CG\), for any given Feynman rules character \(\Phi \in \CG\).
\end{rem}

\enter

\begin{defn}[Algebraic Birkhoff decomposition] \label{defn:algebraic-birkhoff-decomposition}
    Given the situation of \Cref{defn:character-group}, let \(f \in \CG\) be a character and \(\mathscr{R} \in \operatorname{End}_\texttt{Vect} \left ( \AQ \right )\) be a renormalization scheme. Furthermore, we set
    \begin{subequations}
    \begin{align}
        \AQ_+ & \coloneq \operatorname{Ker} \left ( \mathscr{R} \right ) \, , \\
        \AQ_0 & \coloneq \operatorname{Im} \left ( \one_\AQ \right ) \, , \\
        \AQ_- & \coloneq \operatorname{Im} \left ( \mathscr{R} \right )
        \intertext{and finally what we call the \emph{unital counterterm algebra}\footnotemark}
        \AQ^\texttt{R} & \coloneq \AQ_0 \oplus \AQ_- \, .
    \end{align}
    \end{subequations}
    \footnotetext{With the following definitions, we could equivalently define \(\AQ^\texttt{R} \coloneq \operatorname{Im} \left ( f^- \right )\).}%
    Then, the \emph{algebraic Birkhoff decomposition} of \(f\) with respect to \(\mathscr{R}\) is the tuple \((f^+, f^-)\), with the following properties:
    \begin{subequations}
    \begin{align}
        f \equiv {f^-}^{(\star -1)} \star f^+ \quad & \iff \quad f^+ \equiv f^- \star f
        \intertext{and}
        f^\pm \bigl ( \widetilde{\HQ} \bigr ) & \subseteq \AQ_\pm \, , \label{eqn:algebraic-Birkhoff-decomposition-evaluation-property}
    \end{align}
    \end{subequations}
    where we have set \(\HQ \cong \HQ_0 \oplus \widetilde{\HQ}\), i.e.\ split \(\HQ\) into the image of the unit \(\HQ_0 \coloneq \operatorname{Im} \left ( \one_\HQ \right )\) and its non-trivial complement \(\widetilde{\HQ}\).
\end{defn}

\enter

\begin{rem}
    The reason for the splitting \(\HQ \cong \HQ_0 \oplus \widetilde{\HQ}\) in \Cref{eqn:algebraic-Birkhoff-decomposition-evaluation-property} is linked to the fact that, being characters, both maps \(f^\pm \in \CG\) map the unit \(\one_\HQ \in \HQ\) to the unit \(\one_\AQ \in \AQ\), i.e.\ \(f^\pm \left ( \one_\HQ \right ) = \one_\AQ \in \AQ_0\). Since the unit \(\one_\AQ\) corresponds to tree-level expressions, they are typically convergent. This suggests choosing renormalization schemes \(\mathscr{R}\) with the property \(\mathscr{R} \left ( \one_\AQ \right ) = 0\), which implies the inclusion \(\AQ_0 \subset \AQ_+\). Thus,  \Cref{eqn:algebraic-Birkhoff-decomposition-evaluation-property} is generally false on \(\HQ_0\).
\end{rem}

\enter

With this, we arrive finally at the following:

\henter

\begin{thm}[Algebraic Birkhoff decomposition, {\cite[Theorem 1]{Connes:1999zw}}] \label{thm:algebraic-birkhoff-decomposition}
    Given the situation of \Cref{defn:algebraic-birkhoff-decomposition}, then every \(f \in \CG\) admits a unique Birkhoff decomposition. Furthermore, the characters \(f^\pm \in \CG\) are given as follows:
    \begin{subequations}
    \begin{align}
        f^- & \equiv - \mathscr{R} \circ \overline{f}
        \intertext{and}
        f^+ & \equiv \left ( \operatorname{Id}_\AQ - \mathscr{R} \right ) \circ \overline{f}
        \intertext{using the \emph{Bogoliubov map}}
        \overline{f} & \coloneq f + f^- \tilde{\star} f \, ,
    \end{align}
    \end{subequations}
    where \(\tilde{\star}\) denotes the convolution product using the reduced coproduct \(\widetilde{\Delta}_\HQ\).
\end{thm}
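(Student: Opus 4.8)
The plan is to follow the standard recursion for connected graded Hopf algebras, the structural input being that \(\HQ = \Q[[\GQ]]\) is connected and graded --- for definiteness by the number of internal edges, which is additive under both the disjoint-union product and the deletion coproduct. Hence \(\HQ_0 = \Q \cdot \one_\HQ\), and the reduced coproduct \(\widetilde{\Delta} \coloneq \Delta - \Id_\HQ \otimes \one_\HQ - \one_\HQ \otimes \Id_\HQ\) maps \(\widetilde{\HQ}\) into \(\widetilde{\HQ} \otimes_\Q \widetilde{\HQ}\) while strictly lowering the grading in each tensor factor. This is exactly what turns the self-referential identity \(f^- = - \mathscr{R} \circ \overline{f}\), with \(\overline{f} = f + f^- \tilde{\star} f\), into a genuine recursion, well defined degree by degree in the power series algebra.

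First I would construct \(f^\pm\) as linear maps. Writing \(\widetilde{\Delta}(x) = \sum x' \otimes x''\) for \(x \in \widetilde{\HQ}\), the Bogoliubov formula unwinds to
\begin{equation}
    f^-(x) = - \mathscr{R} \! \left ( f(x) + \sum f^-(x') \, f(x'') \right ) \, ,
\end{equation}
in which every \(x'\) has strictly smaller degree than \(x\); together with the unital normalization of \(f^-\) on \(\HQ_0\) this defines \(f^-\) unambiguously by induction on the grading (the formulas being valid on \(\widetilde{\HQ}\) only, cf.\ the remark following \Cref{defn:algebraic-birkhoff-decomposition}). Setting \(f^+ \coloneq f^- \star f\) and inserting \(\Delta = \widetilde{\Delta} + \Id_\HQ \otimes \one_\HQ + \one_\HQ \otimes \Id_\HQ\) gives, on \(\widetilde{\HQ}\), the identity \(f^+ = \overline{f} + f^- = (\Id_\AQ - \mathscr{R}) \circ \overline{f}\). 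The evaluation property \Cref{eqn:algebraic-Birkhoff-decomposition-evaluation-property} is then immediate: \(f^-(\widetilde{\HQ}) \subseteq \operatorname{Im}(\mathscr{R}) = \AQ_-\), and since \(\mathscr{R}\) is a projector, \(f^+(\widetilde{\HQ}) \subseteq \operatorname{Im}(\Id_\AQ - \mathscr{R}) = \operatorname{Ker}(\mathscr{R}) = \AQ_+\).

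The crux is to upgrade \(f^-\), and hence \(f^+ = f^- \star f\), to an algebra morphism, so that \(f^\pm \in \CG\). I would proceed by induction on the grading: for \(x, y \in \widetilde{\HQ}\) one starts from \(f^-(x) \, f^-(y) = \mathscr{R}(\overline{f}(x)) \, \mathscr{R}(\overline{f}(y))\) and applies the Rota--Baxter relation of \Cref{defn:renormalization-scheme}, rearranged into
\begin{equation}
    \mathscr{R}(a) \, \mathscr{R}(b) = \mathscr{R} \! \left ( \mathscr{R}(a) \, b + a \, \mathscr{R}(b) \right ) - \mathscr{R}(ab) \, .
\end{equation}
Expanding \(\overline{f}(xy)\) through the multiplicativity of \(\Delta\) and \(f\) and feeding in the inductive hypothesis that \(f^-\) is multiplicative in strictly lower degrees, the Rota--Baxter cross-terms assemble precisely into \(- \mathscr{R}(\overline{f}(xy)) = f^-(xy)\). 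This is the single place where the Rota--Baxter property, as opposed to mere idempotency of \(\mathscr{R}\), is indispensable, and organizing the Sweedler bookkeeping so that exactly the right cross-terms survive is the main obstacle of the proof.

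Finally, the two displayed forms become equivalent once \(f^-\) is known to be a character, so that \({f^-}^{(\star -1)} = f^- \circ S_\HQ\) exists in \(\CG\) with \({f^-}^{(\star -1)} \star f^- = \mathbf{1}_\star\): left-convolving \(f^+ = f^- \star f\) by \({f^-}^{(\star -1)}\) yields \(f = {f^-}^{(\star -1)} \star f^+\). For uniqueness, I would take a second pair of characters \((g^+, g^-)\) with \(g^\pm(\widetilde{\HQ}) \subseteq \AQ_\pm\) and \(g^+ = g^- \star f\); applying \(\mathscr{R}\) to this relation on \(\widetilde{\HQ}\) and using that \(\mathscr{R}\) restricts to the identity on \(\AQ_- = \operatorname{Im}(\mathscr{R})\) while annihilating \(g^+(\widetilde{\HQ}) \subseteq \AQ_+ = \operatorname{Ker}(\mathscr{R})\) recovers the very recursion satisfied by \(f^-\). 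Induction on the grading then forces \(g^- = f^-\), whence \(g^+ = g^- \star f = f^+\).
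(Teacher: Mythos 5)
Your proposal is correct and follows essentially the same route as the paper, which itself gives no independent argument but defers entirely to \cite[Theorem 1]{Connes:1999zw}: your degree-by-degree Bogoliubov recursion on the connected graded Hopf algebra, the Rota--Baxter induction for multiplicativity of \(f^-\) (with the correctly rearranged identity \(\mathscr{R}(a)\mathscr{R}(b) = \mathscr{R}\left(\mathscr{R}(a)\,b + a\,\mathscr{R}(b)\right) - \mathscr{R}(ab)\)), and the uniqueness argument via applying \(\mathscr{R}\) to \(g^+ = g^- \star f\) on \(\widetilde{\HQ}\) are exactly the standard Connes--Kreimer proof. The one step you only sketch --- the Sweedler bookkeeping in the multiplicativity induction --- is indeed the sole nontrivial computation, but you correctly identify both the mechanism and where the Rota--Baxter property (rather than mere idempotency of \(\mathscr{R}\)) enters, so the proposal is sound.
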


\begin{proof}
    A proof can be found in \cite[Theorem 1]{Connes:1999zw}.
\end{proof}

\enter

Now, we can finally define the renormalized Feynman rules and the counterterm map as follows:

\enter

\begin{defn} \label{defn:renormalized-feynman-rules}
    Given the situation of \Cref{thm:algebraic-birkhoff-decomposition} and let \(\Phi \in \CG\) be the unrenormalized Feynman rules map. Then, the renormalized Feynman rules are given via
    \begin{subequations}
    \begin{align}
        \Phi_\mathscr{R} & \coloneq \Phi^+ \in \CG
        \intertext{and the counterterm map is given via}
        S_\mathscr{R}^\Phi & \coloneq \Phi^- \in \CG \, ,
    \end{align}
    \end{subequations}
    which is the commonly used notation for the renormalized Feynman rules and the counterterm map in the Connes--Kreimer community. We also remark that the counterterm map is sometimes also called \emph{twisted antipode}, as both maps share the same combinatorics.
\end{defn}

\enter

\begin{rem}
    Notice that the finiteness of the renormalized Feynman rules depends on the analytic properties of the renormalization scheme \(\mathscr{R} \in \operatorname{End}_\texttt{Vect} \left ( \AQ \right )\) from \Cref{defn:renormalization-scheme}: Specifically, we need to choose \(\mathscr{R}\) such that all divergent expressions of \(\AQ\) are contained in its image \(\AQ_- \coloneq \operatorname{Im} \left ( \mathscr{R} \right )\), but at the same time we wish to keep its kernel \(\AQ_+ \coloneq \operatorname{Ker} \left ( \mathscr{R} \right )\) as large as possible, as it corresponds to the finite and thus physically relevant expressions. Such renormalization schemes have been called \emph{proper renormalization schemes} in \cite[Definition 2.41]{Prinz:2019awo}.
\end{rem}

\enter

To illustrate this remark, we close this subsection with two extreme cases:

\henter

\begin{exmp}
    Consider the renormalization scheme \(\mathscr{R} \equiv 0\), i.e.\ the renormalization scheme that considers everything as convergent \(\AQ_+ \cong \AQ\) and \(\AQ_- \cong \{0\}\). Then, we obtain \(\renFR \equiv \Phi\) and \(\CTmap \equiv 0\).
\end{exmp}

\begin{exmp}
    Now, consider the renormalization scheme \(\mathscr{R} \equiv \operatorname{Id}_\AQ\), i.e.\ the renormalization scheme that considers everything as divergent \(\AQ_- \cong \AQ\) and \(\AQ_+ \cong \{0\}\). Then, we obtain \(\CTmap \equiv \Phi\) and \(\renFR \equiv 0\).
\end{exmp}

\subsection{Position space renormalization data} \label{ssec:position-space-data}

Our aim is to prove that the algebraic Birkhoff decomposition is capable of producing Epstein--Glaser renormalisation maps as in \Cref{def:renorm-maps}.
However, a slight modification to the scheme of \Cref{thm:algebraic-birkhoff-decomposition} is required.
We will construct a target algebra whose product is built on the tensor product of distributions, which is always well defined.
However, we require a second product for distributions which arise from graphs with overlapping vertices.
This product is only partially defined, and it only needs to be formed for specific terms arising in the algebraic Birkhoff decomposition; these terms are always well defined.
The modification this introduces does not affect the combinatorics, and we can reproduce the Connes--Kreimer picture with this double choice of products.

Recall that we defined the set of propagators in \Cref{def:propagators}.
For simplicity, we will assume from now on that we are dealing with a single type of propagator which we call $G \in \mathscr{D}'(M \times M)$.

We will use the Hadamard regularization procedure described in Appendix \ref{sec:apx-hadamard-finite-part}.
For every integer $n$, choose a cutoff function $\chi^{(n)} \in \mathscr{D}'(M^n)$ such that $\chi^{(n)} \equiv 1$ in a neighborhood of the small diagonal in $M^n$.
Then we put $T_\epsilon G = (1-\chi^{(2)}_\epsilon) G \in \mathscr{C}^\infty(M \times M)$.

We define the Feynman rules as a regularized version of \Cref{eq:unren-feynm-eg}. Recall that for any graph $\Gamma$ and any edge $e \in E(\Gamma)$, there are unique source and range vertices $s(e), r(e) \in V(\Gamma)$.

\enter

\begin{defn}
    The \emph{regularized Feynman rules} are given by a map $\Phi_\epsilon: \mathcal{H} \to \mathscr{C}^\infty(M^\bullet)$, defined by
    \begin{align*}
        \Phi_\epsilon(\Gamma)(x_1,\ldots, x_n) = \prod_{e \in E(\Gamma)} (T_\epsilon G)(x_{s(e)}, x_{r(e)}).
    \end{align*}
    for any graph $\Gamma$ with $n$ vertices.
\end{defn}

\enter

Our target algebra will consist of functions of one or several regularization parameters $\epsilon$ with values in distributions.
The behavior as $\epsilon \to 0$ must be controlled.
We do this by introducing the class of \emph{partially finite functions}.

\enter

\begin{defn}
To define the space $\mathsf{PF}_{a, M}$ we need to distinguish $a \in -\mathbb{N}$ and $a \in (\mathbb{C} \setminus -\mathbb{N})$.
A function $f: (0,1] \to \mathbb{C}$ is in $\widetilde{\mathsf{PF}}_{a,M}$ if for any $N\in \mathbb{N}$ there exist constants $c_{k,j}\in\mathbb{C}$ such that
\begin{align}
    \label{eq:pf-asymptotic-expansion}
    f(\epsilon) - \sum_{k=0}^N\sum_{j=0}^M c_{k,j} \epsilon^{a + k} \log^j\epsilon = O(\epsilon^{a+N+1}\log^M\epsilon).
\end{align}
Then for $a \in -\mathbb{N}$ we define $\mathsf{PF}_{a, M} := \widetilde{\mathsf{PF}}_{a, M}$, but for $a \in (\mathbb{C} \setminus -\mathbb{N})$, $f \in \mathsf{PF}_{a,M}$ if and only if $f = C + g$ with $C \in \mathbb{C}$ and $g \in \widetilde{\mathsf{PF}}_{a,M}$.

We will call $f \in \mathsf{PF}_{a,M}$ \emph{partially finite} of order $a$ and degree $M$.
\end{defn}

\enter

We extend this definition to distributions in the obvious weak sense.

\enter

\begin{defn}
    The space $\mathsf{PF}_{a,M}((0,1], \mathscr{D}'(M))$ is defined to be the space of functions $f: (0,1] \to \mathscr{D}'(M)$ such that for every test function $\phi \in \mathscr{D}(M)$, $\epsilon \mapsto \langle f(\epsilon), \phi\rangle$ is in $\mathsf{PF}_{a,M}$.
\end{defn}

\enter

To deal with overlapping divergences, we need to consider several regularization parameters.
To deal with the resulting expansion in several parameters, we need some additional notation.

\enter

\begin{defn}
    A \emph{power index} of length $n$ is a pair $I = (\alpha, \beta)$, where $\beta$ is a multi-index of length $n$, $\beta_i \in \mathbb{N}$, $i=1,\ldots, n$, and $\alpha \in \mathbb{C}^n$.
    We will write $n = |I| = |\alpha| = |\beta|$.
    
    For a power index $I$ with $|I| = n$, we define the standard function
    \begin{align*}
        p_{\alpha, \beta}(\epsilon_1, \ldots, \epsilon_n)
        := \epsilon_1^{\alpha_1}\ldots \epsilon_n^{\alpha_n}
        \log^{\beta_1} \epsilon_1 \ldots \log^{\beta_n} \epsilon_n.
    \end{align*}
    
    Let $\mathsf{Pow}_n$ be the set of power indices of length $n$.
    We define on $\mathsf{Pow}_n$ a relation given by lexicographical ordering, i.e.
    $(\alpha, \beta) < (\gamma, \delta)$ if $\alpha < \gamma$ or $\alpha = \gamma$ and $\beta < \delta$.
    Here, $\alpha < \gamma$ means that $\sum_{i=1}^n \alpha_i < \sum_{i=1}^n \gamma_i$, and $\beta < \delta$ likewise means $\sum_{i=1}^n \beta_i < \sum_{i=1}^n \delta_i$.
    We will say that
    \begin{align*}
        \operatorname{deg}(\alpha,\beta) < 0, \qquad \text{if}\ (\alpha,\beta) < (0,0)\\
        \operatorname{deg}(\alpha,\beta) > 0, \qquad \text{if}\ (\alpha,\beta) > (0,0)\\
        \operatorname{deg}(\alpha,\beta) =0, \qquad \text{otherwise.}
    \end{align*}
    
    We will say that $p_{\alpha,\beta}$ is \emph{singular} if $\operatorname{deg}(\alpha,\beta) < 0$ and it is \emph{regular} otherwise.
\end{defn}

\enter

\begin{defn}
    Let $(\alpha,\beta)$ be a power index.
    We will say that a function $f: (0,1]^n \to \mathscr{D}'(M)$ is in
    $\widetilde{\mathsf{PF}}_{\alpha,\beta}((0,1]^n, \mathscr{D}'(M))$ if for all
    test functions $\phi \in \mathscr{D}(M)$, we have
    \begin{align}
    \label{eq:partially-finite-several-parameters}
        \langle f(\epsilon), \phi\rangle 
        - \sum_{(\mathbf{a}, \gamma) \leq (N^{\times n}, \delta)} c_{\mathbf{a},\gamma}[\phi] p_{\alpha + \mathbf{a}, \gamma}(\epsilon_1, \ldots, \epsilon_n) + O(\epsilon^{\alpha + N^{\times n} + 1^{\times n}} \log^\beta \epsilon).
    \end{align}
    
Analogously to the one-parameter case we define $\mathsf{PF}_{\alpha,\beta}((0,1]^n, \mathscr{D}'(M))$ to include a constant (independent of $\epsilon$) term if it is not already present in the expansion.
\end{defn}

\enter

\begin{defn}
    Define $\mathcal{A}_{n, k}$ to be the set of all $f: (0,1]^n \to \mathscr{D}'(M^k)$ such that $f \in \mathsf{PF}_{\alpha,\beta}$ for some power index $(\alpha,\beta) \in \mathsf{Pow}_n$, such that $\alpha_i \in \mathbb{Z}$ for all $i=1,\ldots, n$.
    Define $\mathcal{A}_k := \bigoplus_{n=1}^\infty \mathcal{A}_{n,k}$ and $\mathcal{A} := \bigoplus_{k=1}^\infty \mathcal{A}_k$.
    Then $\mathcal{A}$ is an algebra with the product
    \begin{align*}
        m: \mathcal{A}_{n,k} \otimes \mathcal{A}_{n', k'} &\to \mathcal{A}_{n+n', k+k'}\\
        m(f, g)(\epsilon_1,\ldots, \epsilon_{n+n'})
        &= f(\epsilon_1, \ldots, \epsilon_n) \otimes g(\epsilon_{n+1}, \ldots, \epsilon_{n+n'}).
    \end{align*}
\end{defn}

\enter

\begin{defn}
    Let $\iota:J\hookrightarrow I$ be an inclusion of finite subsets of $\N$. We define a partial insertion product 
    \begin{equation*}
        \delta_{J\hookrightarrow I}:\mathcal{A}\times \mathcal{A} \dashrightarrow \mathcal{A}
    \end{equation*}
    by
    \begin{equation}
        \delta_{J\hookrightarrow I}(u,v)\coloneq u \otimes v \prod_{j\in J}\delta_{x_j,x_{\iota(j)}},
    \end{equation}
    whenever defined (in the sense of the Hörmander product).
\end{defn}

\enter

We can now define our renormalization scheme.

\enter

\begin{defn}
\label{def:hadamard-singular-part-scheme}
    Let $(\alpha, \beta) \in \mathsf{Pow}_n$ and $f \in  \operatorname{PF}_{\alpha,\beta}((0,1]^n, \mathscr{D}'(M^k))$ have the asymptotic expansion of \Cref{eq:partially-finite-several-parameters}.
    Then we define
    \begin{align*}
        \langle\mathscr{R} f(\epsilon_1,\ldots, \epsilon_n),\phi\rangle = \sum_{\operatorname{deg}(\gamma, \delta) < 0} c_{\gamma,\delta}[\phi] p_{\gamma,\delta}(\epsilon_1,\ldots, \epsilon_n),
    \end{align*}
    i.e. $\mathscr{R}$ keeps only the singular terms in the expansion of $f(\epsilon_1,\ldots, \epsilon_n)$.
\end{defn}

\enter

The following proposition guarantees that the operations in the algebraic Birkhoff decomposition are well defined:

\enter

\begin{prop}
	Let $\Gamma \in \mathcal{H}$ be a Feynman graph with $n$ vertices, then $\Phi_\epsilon(\Gamma) \in \mathcal{A}_n$. Furthermore, let $\Gamma_1, \Gamma_2 \in \mathcal{H}$ be Feynman graphs with $k$ and $l$ vertices, respectively, then $\Phi_\epsilon^-(\Gamma_1) \Phi_\epsilon(\Gamma_2) \in \mathcal{A}_{k+l}$.
\end{prop}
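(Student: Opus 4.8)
The plan is to treat the two assertions separately: the first follows from the analytic properties of the Hadamard regularization collected in Appendix \ref{sec:apx-hadamard-finite-part}, while the second is proved by induction on the number of vertices, using the first statement as base input together with closure of $\mathcal{A}$ under the relevant products.

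For the first assertion, observe that for each fixed $\epsilon > 0$ the regularized propagator $T_\epsilon G = (1-\chi^{(2)}_\epsilon)G$ is smooth, so $\Phi_\epsilon(\Gamma)$ is a smooth, hence locally integrable, function on $M^n$ and defines an element of $\mathscr{D}'(M^n)$. It then remains to show that for each $\phi \in \mathscr{D}(M^n)$ the map $\epsilon \mapsto \langle \Phi_\epsilon(\Gamma), \phi\rangle$ lies in $\mathsf{PF}_{a,M}$ for some $a \in \mathbb{Z}$. I would localize $\phi$ near the fat diagonal $D_I$ (away from it the integrand is uniformly smooth in $\epsilon$ and contributes only a constant plus $O(\epsilon^\infty)$). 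Near a diagonal stratum, each factor $G(x_{s(e)}, x_{r(e)})$ is the kernel of a classical, integer-order pseudodifferential operator, hence polyhomogeneous with integer-spaced scaling degrees; the product of such kernels is again conormal along $D_I$ with an integer-step homogeneous expansion. Cutting out the $\epsilon$-neighborhood via $\chi^{(2)}_\epsilon$ and integrating against $\phi$ then produces precisely an asymptotic expansion $\sum c_{k,j}\,\epsilon^{a+k}\log^j\epsilon$ of the form demanded by the definition of $\widetilde{\mathsf{PF}}_{a,M}$, with $a \in \mathbb{Z}$ reflecting the integer order of $G$. This is the content of the finite-part analysis of Appendix \ref{sec:apx-hadamard-finite-part}. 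Since $\Phi_\epsilon(\Gamma)$ is a one-parameter family of distributions on $M^n$, this places it in $\mathcal{A}_{1,n} \subset \mathcal{A}_n$.

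For the second assertion, I would first establish $\Phi_\epsilon^-(\Gamma_1) \in \mathcal{A}_k$ and then take the disjoint product with $\Phi_\epsilon(\Gamma_2)$. By \Cref{thm:algebraic-birkhoff-decomposition}, $\Phi_\epsilon^-$ is computed by the Bogoliubov recursion $\Phi_\epsilon^- = -\mathscr{R}\circ\overline{\Phi_\epsilon}$ with $\overline{\Phi_\epsilon} = \Phi_\epsilon + \Phi_\epsilon^- \,\tilde{\star}\, \Phi_\epsilon$, which I unwind inductively on the number of vertices of $\Gamma_1$. The base case (no proper superficially divergent subgraph) is $\Phi_\epsilon^-(\Gamma_1) = -\mathscr{R}\,\Phi_\epsilon(\Gamma_1)$, which lies in $\mathcal{A}_k$ because $\Phi_\epsilon(\Gamma_1) \in \mathcal{A}_k$ by the first part and, by \Cref{def:hadamard-singular-part-scheme}, $\mathscr{R}$ merely retains the singular monomials $p_{\gamma,\delta}$ of an already partially finite expansion, so it preserves each graded piece $\mathcal{A}_{n,k}$. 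The inductive step requires each convolution term $\Phi_\epsilon^-(\gamma)\,\tilde{\star}\,\Phi_\epsilon(\Gamma_1\setminus\gamma)$ to lie in $\mathcal{A}_k$; here the product is the partial insertion product $\delta_{J\hookrightarrow I}$ gluing the external vertices of $\gamma$ to their copies in $\Gamma_1\setminus\gamma$, and one must check both that the underlying Hörmander product is defined (the gluing identifies only transversal directions, so there is no wavefront-set collision) and that the several-parameter partially finite expansions multiply to a partially finite family of integer order.

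Once $\Phi_\epsilon^-(\Gamma_1) \in \mathcal{A}_k$ is known, the product $\Phi_\epsilon^-(\Gamma_1)\,\Phi_\epsilon(\Gamma_2)$ is the disjoint product $m$ of the algebra, sending $\mathcal{A}_k \otimes \mathcal{A}_l$ into $\mathcal{A}_{k+l}$: since $\Gamma_1$ and $\Gamma_2$ carry disjoint vertex sets there is no wavefront collision, and the asymptotic expansions simply concatenate their power indices (and, should one identify the regularization parameters, integer exponents merely sum, so partial finiteness of integer order persists). I expect the main obstacle to be the inductive step of the second part, namely controlling the partial insertion products: verifying that the Hörmander product of the regularized families is well defined and that multiplying their several-parameter expansions---possibly invoking the $\mathbb{C}\setminus-\mathbb{N}$ branch of the definition to absorb genuine constant terms---again yields an element of $\mathcal{A}$ with integer-order power index. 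The first assertion, by contrast, is essentially a direct application of the finite-part results of the appendix.
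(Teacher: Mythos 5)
A preliminary remark: the paper states this proposition without any proof, so there is no argument of the authors to compare yours against line by line; your proposal has to be judged against the toolkit the paper supplies (the appendix on the Hadamard finite part and the definitions of \(\mathcal{A}\), \(m\) and \(\mathscr{R}\)). Your overall architecture is the natural one and is surely what the authors intend: part one via finite-part analysis of the regularized amplitude, part two by unwinding the Bogoliubov recursion, using that \(\mathscr{R}\) preserves the partially finite classes (\Cref{def:hadamard-singular-part-scheme}) and that the algebra product \(m\) concatenates power indices, \(\mathcal{A}_{n,k}\otimes\mathcal{A}_{n',k'}\to\mathcal{A}_{n+n',k+k'}\). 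Your reading that the displayed product is the tensor product \(m\) on disjoint variables (so that the real content is \(\Phi^-_\epsilon(\Gamma_1)\in\mathcal{A}_k\)) is also correct, as the target \(\mathcal{A}_{k+l}\) indicates.

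The genuine gap is in your treatment of the first assertion, which you call ``essentially a direct application of the finite-part results of the appendix.'' It is not. The appendix results (\Cref{prop:hadamard-reg-basic-block}, \Cref{prop:hadamard-polyhmg-dists}) establish partial finiteness for \((1-\chi_\epsilon)u\) with a \emph{single} radial cutoff around a \emph{single} point, applied to a distribution polyhomogeneous at that point. The regularized amplitude \(\Phi_\epsilon(\Gamma)=\prod_{e}(T_\epsilon G)(x_{s(e)},x_{r(e)})\) instead carries one cutoff factor \(1-\chi^{(2)}_\epsilon\) per edge, all tied to the same parameter \(\epsilon\), with singular behavior distributed along the whole lattice of partial diagonals \(\{x_i=x_j\}\); the fat diagonal \(D_I\) is not a submanifold, and a product of kernels conormal to \emph{different} partial diagonals is not conormal to \(D_I\), so your claim that ``the product of such kernels is again conormal along \(D_I\) with an integer-step homogeneous expansion'' fails as stated. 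To reduce to the model case of the appendix one needs an induction over the stratification of the diagonal lattice --- for instance localizing with the Stora cover of \Cref{lem:stora} and a subordinate partition of unity, exactly as in the proof of \Cref{thm:Multiplicative Renormalization scheme yields local extension+ Epstein-Glaser}, or a blow-up argument --- handling on each stratum the overlap of the edge cutoffs before invoking \Cref{prop:log-hg-hadamard-integrals}. Relatedly, in the inductive step of part two your appeal to wavefront sets is beside the point: at fixed \(\epsilon>0\) the factor \(\Phi_\epsilon(\Gamma_1\setminus\gamma)\) is smooth and \(\Phi^-_\epsilon(\gamma)\) is a derivative-of-delta layer on a diagonal with partially finite coefficients, so the insertion product is trivially defined there; what actually needs proof is that the resulting several-parameter expansions again lie in \(\mathsf{PF}_{\alpha,\beta}\) with integer \(\alpha\), which is the same stratified finite-part analysis as in part one, not a microlocal statement.
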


\subsection{Multiplicative renormalization} \label{ssec:multiplicative-renormalization}

Using the framework developed in the previous two subsections, we can now state the renormalized Lagrange density with the \(Z\)-factors of multiplicative renormalization: In a nutshell, the idea is that each monomial in the Lagrange density is multiplied via a divergent expression such that the Feynman rules derived from it give rise to finite integral expressions.

\enter

\begin{defn}[\(Z\)-factors]\label{defn:z-factors}
    Let \(\mathfrak{X}^r_L\) be the restricted combinatorial Green's function and \(\CTmap \colon \HQ \to \AQ_-\) be the counterterm map. Then, we define the counterterms for the external leg structure \(r\) and loop-grading \(L\) as follows:
    \begin{subequations}
    \begin{align}
        C^r_L & \coloneq \CTmap \left ( \mathfrak{X}^r_L \right )
        \intertext{and the corresponding \(Z\)-factors via}
        Z^r & \coloneq \CTmap \left ( \mathfrak{X}^r_L \right ) \equiv 1 \pm \sum_{L = 1}^\infty C^r_L
        \intertext{and finally}
        Z^r_L & \coloneq Z^r \Bigg \vert_L \equiv \pm C^r_L \, ,
    \end{align}
    \end{subequations}
    where plus corresponds to vertex residues and minus to propagator residues, cf.\ \Cref{eqn:full-combinatorial-greens-function}.\footnote{The reason for the minus sign is that for edge residues \(e\) the fractions \(1 / \mathfrak{X}^e_L\) can be interpreted as formal geometric series and thus producing \emph{connected} Green's functions out of the \emph{1PI} Green's functions \(\mathfrak{X}^e_L\) and likewise for the corresponding \(Z\)-factors \(Z^e\).}
\end{defn}

\enter

\begin{defn}[Multiplicatively renormalized Lagrange density] \label{defn:mult-ren-lagr-dens}
    Let \(\mathcal{L}_\QFT\) denote the Lagrange density of a Quantum Field Theory \(\QFT\), decomposed into monomials \(\mathcal{M}^r\) describing the classical interaction for the amplitude \(r\), i.e.
    \begin{subequations}
    \begin{align}
        \LQ & \equiv \sum_{r \in \RQ} \mathcal{M}^r \, ,
        \intertext{where \(\RQ\) denotes the set of all interactions of \(\QFT\). Then, its renormalized variant is given via}
        \LQR & \coloneq \sum_{r \in \RQ} Z^r \mathcal{M}^r \, ,
    \end{align}
    \end{subequations}
    where the \(Z^r\)'s are the \(Z\)-factors from \Cref{defn:z-factors}. Again, this expression needs to be considered formal, unless a regularization scheme has been used, in which case the renormalized Lagrange density is a function in the regulator \(\epsilon\) with a pole in \(\epsilon = 0\).
\end{defn}

\enter

With this setup, the main result of multiplicative renormalization reads as follows:

\enter

\begin{thm}[Multiplicative renormalization] \label{thm:multiplicative-renormalization}
    The Feynman rules calculated from the renormalized Lagrange density \(\Phi^\emph{\texttt{R}}\) reproduce the renormalized Feynman rules of the algebraic Birkhoff decomposition \(\renFR\) when evaluated on combinatorial Green's function \(\mathfrak{X}^r\). Specifically, we have
    \begin{equation}
        \Phi^\emph{\texttt{R}} \left ( \mathfrak{X}^r \right )_L = \renFR \left ( \mathfrak{X}^r_L \right ) \, ,
    \end{equation}
    where the subscript \(L\) denotes restriction to a specific loop number.\footnote{Or a finer grading, see the remark at the beginning of the proof.}
\end{thm}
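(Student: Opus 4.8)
The plan is to reduce both sides to the same expansion indexed by the number of vertex- and edge-corrections, with the coproduct identities of \Cref{prop:coproduct-identities} as the central combinatorial tool. On the algebraic-Birkhoff side, I would first rewrite $\renFR = \Phi^+ = \CTmap \star \Phi = m_\AQ \circ \left ( \CTmap \otimes_\mathbb{Q} \Phi \right ) \circ \Delta$ and then substitute the graded coproduct identity
\[
    \Delta \left ( \mathfrak{X}^r_L \right ) = \sum_{l = 0}^L \sum_{\mathfrak{e}, \mathfrak{v}} \left [ \left ( \mathfrak{x}^v \right )^{\mathfrak{v}} \left ( \mathfrak{x}^e \right )^{\mathfrak{e}} \right ]_l \otimes^\delta_\mathbb{Q} \mathfrak{Y}^r_{L - l, \mathfrak{e}, \mathfrak{v}} \, .
\]
Since $\CTmap \in \CG$ is a character it distributes over the disjoint-union product, so $\CTmap \left ( \left ( \mathfrak{x}^v \right )^{\mathfrak{v}} \left ( \mathfrak{x}^e \right )^{\mathfrak{e}} \right ) = \CTmap \left ( \mathfrak{x}^v \right )^{\mathfrak{v}} \CTmap \left ( \mathfrak{x}^e \right )^{\mathfrak{e}}$, and by \Cref{defn:z-factors} these factors are exactly the non-trivial parts of the $Z$-factors, namely $\CTmap \left ( \mathfrak{x}^v \right ) = Z^v - 1$ and $\CTmap \left ( \mathfrak{x}^e \right ) = 1 - Z^e$. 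This presents $\renFR \left ( \mathfrak{X}^r_L \right )$ as a sum of skeleton amplitudes $\Phi \bigl ( \mathfrak{Y}^r_{L - l, \mathfrak{e}, \mathfrak{v}} \bigr )$ dressed by $Z$-factors inserted at the $\mathfrak{v}$ cut vertices and $\mathfrak{e}$ cut edges, the insertion being realised in $\AQ$ through the partial product $\delta_{J \hookrightarrow I}$.

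On the Lagrangian side, I would compute $\Phi^{\texttt{R}}$ directly from $\LQR = \sum_{r} Z^r \mathcal{M}^r$. Because each $Z^r \in \AQ_-$ is a local distributional coefficient multiplying the monomial $\mathcal{M}^r$, the derived Feynman rules are the bare rules with each interaction vertex weighted by $Z^v$ and each propagator weighted by the edge $Z$-factor, the latter read through the geometric-series/cut mechanism of \Cref{rem:propagator-geom-series}. Evaluating on $\mathfrak{X}^r$ and organising the resulting sum over all graphs by the skeleton left after removing every superficially divergent vertex- and edge-correction, I would show that the coefficient multiplying each skeleton $\Phi \bigl ( \mathfrak{Y}^r_{\mathfrak{e}, \mathfrak{v}} \bigr )$ is precisely the product of $Z$-factors over its cut vertices and cut edges. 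The crucial point is that the locality of the counterterms $Z^r$ makes this insertion compatible with the vertex-identification $\delta_{x_j, x_{\iota(j)}}$, so that dressing a skeleton by $Z$-factors coincides term-by-term with the partial insertion product appearing on the Birkhoff side.

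Matching the two expansions then amounts to a bijection between graphs with external leg structure $r$ and pairs consisting of a skeleton $\mathfrak{Y}^r_{\mathfrak{e}, \mathfrak{v}}$ together with a collection of inserted corrections, compatibly with symmetry factors --- which is exactly the content encoded by \Cref{prop:coproduct-identities}. Finally, I would reconcile the two placements of the loop-restriction: on the left the $Z$-factors themselves carry a loop grading inherited from the graphs defining them, so extracting loop order $L$ from $\Phi^{\texttt{R}} \left ( \mathfrak{X}^r \right )$ forces the skeleton loop number plus the total loop number of the inserted corrections to equal $L$; this is precisely the convolution $\sum_{l = 0}^L$ in the graded coproduct identity, so the two sides agree order by order.

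I expect the main obstacle to be the identification carried out in the second paragraph: proving rigorously that the Feynman rules of $\LQR$ reorganise into the insertion-product form, i.e.\ that inserting the local counterterm coefficients $Z^r$ at vertices and edges of a skeleton is exactly $m_\AQ \circ \left ( \CTmap \otimes_\mathbb{Q} \Phi \right )$ executed through the $\delta_{J \hookrightarrow I}$ product. This requires that the partially defined insertion product really is defined on all the relevant terms (which is guaranteed by the proposition preceding \Cref{ssec:multiplicative-renormalization}) and that the local nature of the Hadamard singular part lets the $Z$-factors pass cleanly through the vertex-identification $\delta$, so that the analytic product in $\AQ$ faithfully mirrors the combinatorial insertion.
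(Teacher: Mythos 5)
Your proposal is correct and rests on the same algebraic core as the paper's proof --- the rewriting \(\renFR = \CTmap \star \Phi\), the character property of \(\CTmap\) factoring the counterterm of a product of subdivergences, the identifications \(\CTmap \left ( \mathfrak{x}^v \right ) = Z^v - 1\) and \(\CTmap \left ( \mathfrak{x}^e \right ) = 1 - Z^e\), and the observation that the loop grading carried by the \(Z\)-factors turns the restriction to loop order \(L\) into the convolution \(\sum_{l=0}^L\) --- but you run the combinatorics through a different key lemma. The paper's proof does not use the position-space coproduct identities of \Cref{prop:coproduct-identities} at all: it invokes the restricted coproduct formula \Cref{eqn:restricted-coproduct-formula}, \(\Delta ( \mathfrak{X}^r_L ) = \sum_l \pol \otimes^\delta_\mathbb{Q} \mathfrak{X}^r_{L-l}\) with \(\pol\) a loop-restriction of combinatorial charges (imported from the momentum-space literature), and pairs it with the Euler-characteristic count \Cref{eqns:vertex-and-edge-sets}, which fixes the exponents \(\mathfrak{v}(r,l)\) and \(\mathfrak{e}(r,l)\) as functions of \((r,l)\) alone; the Lagrangian side then collapses to the uniform dressing \(\mrFR ( \mathfrak{X}^r_l ) = (Z^v)^{\mathfrak{v}(r,l)} (Z^e)^{-\mathfrak{e}(r,l)} \Phi ( \mathfrak{X}^r_l )\), and the matching becomes a short computation with no skeleton-by-skeleton bijection. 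You instead work with the cut Green's functions \(\mathfrak{Y}^r_{L-l, \mathfrak{e}, \mathfrak{v}}\) on the right of the coproduct and reorganize the Lagrangian expansion by skeletons. What your route buys: it stays entirely inside the paper's own position-space formalism and makes explicit the one step the paper's proof passes over silently, namely that locality of the counterterms makes \(\delta\)-insertion into a cut amplitude agree with plain multiplication on the uncut graph --- exactly the mechanism visible in the worked example of \Cref{ssec:counterterms-renormalized-feynman-rules-z-factors}, where terms of the form \(C^r_1 \cdot_\AQ Y^{r'}_{1,\mathfrak{e},\mathfrak{v}}\) appear; indeed, the paper's appeal to the charge-style formula in position space tacitly relies on this same identification, so isolating it as the crux is a genuine service. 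What the paper's route buys: the Euler count removes your symmetry-factor bookkeeping (which you correctly delegate to \Cref{prop:coproduct-identities}, where it is indeed encoded) and extends more directly to finer gradings. One fine point to state carefully in your second and third paragraphs: the per-component coefficient on the Birkhoff side is \(\CTmap \left ( \mathfrak{x}^e \right ) = 1 - Z^e\), i.e.\ the non-trivial part, not the full edge \(Z\)-factor; the full dressing \(1/Z^e\) of a skeleton edge only emerges after resumming several edge-correction components inserted in series, which on the cut side corresponds to cutting the \emph{same} skeleton edge multiple times (the effect of \Cref{rem:propagator-geom-series}, first visible at three loops). Your appeal to the geometric-series mechanism shows you are aware of this, but the bijection in your third paragraph must be formulated so that \(\mathfrak{e}\) counts correction \emph{components} rather than distinct skeleton edges, as \Cref{prop:coproduct-identities} does.
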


\begin{proof}
    We state the proof for scalar theories with one interaction term only: This means that we can work with the loop-grading for Feynman graphs, cf.\ \cite[Remark 3.31]{Prinz:2018dhq}. The general case can then be derived in a similar manner using either the vertex-grading or the coupling-grading, cf.\ \cite[Definition 2.18]{Prinz:2019awo}. The main point is that we need a grading that is fine enough such that it uniquely determines the edge- and vertex-types for a given residue \(r \in \RQ\) in the sense of \cite[Lemma 2.27]{Prinz:2019awo}. In the scalar case with only one interaction term this is given via the loop-grading using the Euler characteristic and noting that all vertices are univalent: To see this explicitly, fix any Feynman graph \(\Gamma \in \GQ\) and consider the Euler characteristic
    \begin{equation} \label{eqn:euler-characteristic}
        b_1 \left ( \Gamma \right ) - b_0 \left ( \Gamma \right ) = \mathfrak{e} - \mathfrak{v} \, ,
    \end{equation}
    where \(b_1 \left ( \Gamma \right )\) is its first Betti number, i.e.\ number of cycles, \(b_0 \left ( \Gamma \right )\) is its zeroth Betti number, i.e.\ number of connected components. Furthermore, \(\mathfrak{e} \coloneq \# \Gamma^{[1]}\) denotes the number of edges and \(\mathfrak{v} \coloneq \# \Gamma^{[0]}\) the number of vertices. As in \(\phi^m\)-theory every vertex is \(m\)-valent and its residue \(n \coloneq \left \vert r \right \vert\) fixes the number of external half-edges, we can express the number of internal edges in terms of the number of vertices via
    \begin{equation} \label{eqn:edge-vertex-relation}
        \mathfrak{e} = \frac{1}{2} \left ( m \mathfrak{v} - n \right ) \, .
    \end{equation}
    Inserting \Cref{eqn:edge-vertex-relation} into \Cref{eqn:euler-characteristic} and setting \(b_0 \left ( \Gamma \right ) = 1\), which holds obviously for connected graphs, we obtain
    \begin{subequations} \label{eqns:vertex-and-edge-sets}
    \begin{align}
        \mathfrak{v} & = \frac{2 \left ( b_1 \left ( \Gamma \right ) - 1 \right ) + n}{m - 2}
        \intertext{and}
        \mathfrak{e} & = \frac{m \left ( b_1 \left ( \Gamma \right ) - 1 \right ) + n}{m - 2} \, .
    \end{align}
    \end{subequations}
    Since these expressions depend only on the residue \(r\) and the loop number, both expressions also hold for the restricted combinatorial Green's functions \(\mathfrak{X}^r_L\). Now, the Feynman rules derived from the multiplicatively renormalized Lagrange density \(\LQR\) for the residue \(r\) are given via
    \begin{equation}
    \begin{split}
        \mrFR \left ( r \right ) & \coloneq Z^r \Phi \left ( r \right ) \\
        & \equiv \Phi \left ( r \right ) \left ( 1 \pm \sum_{L = 1}^\infty C^r_L \right ) \, ,
    \end{split}
    \end{equation}
    where \(\Phi \left ( r \right )\) denotes the unrenormalized Feynman rules of edges and vertices, \(Z^r \equiv \left ( 1 \pm \sum_{L = 1}^\infty C^r_L \right )\) the corresponding \(Z\)-factor with \(C^r_L\) the counterterm for loop order \(L\). Using the above derivations, we obtain for the restricted combinatorial Green's functions
    \begin{subequations}
    \begin{align}
        \mrFR \left ( \mathfrak{X}^r \right )_L = \sum_{l = 0}^L \left ( \frac{\left ( Z^v \right )^{\mathfrak{v} \left ( r, l \right )}}{\left ( Z^e \right )^{\mathfrak{e} \left ( r, l \right )}} \Phi \left ( \mathfrak{X}^r_l \right ) \right )_L \, ,
    \end{align}
    \end{subequations}
    where \(e\) is the edge-type of the theory and \(v\) its vertex-type. Specifically, we remark that the counterterms in the \(Z\)-factors correspond itself to loop-expressions, such that the global restriction to loop number \(L\) is understood as the sum of the contributions. Now, we want to show that the left hand side exactly corresponds to the positive contribution of the algebraic Birkhoff decomposition in the Connes--Kreimer framework, i.e.
    \begin{equation} \label{eqn:renormalized-feynman-rules}
        \mrFR \left ( \mathfrak{X}^r \right )_L \equiv \renFR \left ( \mathfrak{X}^r_L \right ) \, .
    \end{equation}
    Indeed, using the \emph{restricted coproduct formula}, given in \cite[Lemma 4.6]{Yeats:2008zy} and \cite[Proposition 4.2]{Prinz:2019awo},\footnote{We remark that the complete coproduct formula, i.e.\ without restriction to a specific loop number, has been additionally proven in \cite[Proposition 16]{vanSuijlekom:2006fk} and \cite[Theorem 1]{Borinsky:2014xwa}. Furthermore, the (restricted) version for connected Green's functions is provided in \cite[Proposition 2.15]{Prinz:2025obr}.}
    \begin{equation} \label{eqn:restricted-coproduct-formula}
        \Delta \left ( \mathfrak{X}^r_L \right ) = \sum_{l = 1}^L \pol \otimes^\delta_\mathbb{Q} \mathfrak{X}^r_{L - l} \, ,
    \end{equation}
    where \(\pol\) is a polynomial in Feynman graphs with loop number \(l\) which depends on \(r\): Specifically, we can express \(\pol\) in terms of \emph{combinatorial charges}, \cite[Lemma 4.6]{Yeats:2008zy} and \cite[Proposition 4.2]{Prinz:2019awo}, which are formal rational functions in the combinatorial Green's functions. Referring to the cited literature, we obtain
    \begin{equation}
        \pol \equiv \left ( \frac{\left ( \mathfrak{X}^v \right )^{\mathfrak{v} \left ( r, L-l \right )}}{\left ( \mathfrak{X}^e \right )^{\mathfrak{e} \left ( r, L-l \right )}} \right ) \! \Bigg \vert_{l} \, .
    \end{equation}
    With this, we obtain
    \begin{equation}
    \begin{split}
        \renFR \left ( \mathfrak{X}^r_L \right ) & = \left ( \CTmap \star \Phi \right ) \left ( \mathfrak{X}^r_L \right ) \\
        & = \sum_{l = 1}^L \CTmap \bigl ( \pol \bigr ) \Phi \left ( \mathfrak{X}^r_{L - l} \right ) \\
        & = \sum_{l = 1}^L \left ( \frac{\left ( Z^v \right )^{\mathfrak{v} \left ( r, L-l \right )}}{\left ( Z^e \right )^{\mathfrak{e} \left ( r, L-l \right )}} \right ) \! \Bigg \vert_{l} \Phi \left ( \mathfrak{X}^r_{L - l} \right ) \\
        & = \mrFR \left ( \mathfrak{X}^r \right )_L \, . \vphantom{\Big \vert}
    \end{split}
    \end{equation}
    Summing over all \(L\) now indeed verifies \Cref{eqn:renormalized-feynman-rules}, and thus completes the proof.
\end{proof}

\section{Example: Massless \texorpdfstring{\(\phi^3_6\)}{phi-3-6}-theory} \label{sec:examples}

To fill the above construction with life, we apply them to the simplest nontrivial example: Massless \(\phi^3_6\)-theory. This theory is very well suited as a pedagogical example, since the graphs are quite easy to visualize. However, it is not a physical theory, since its Hamiltonian is unbounded from below, meaning that the theory does not allow for a stable vacuum state.\footnote{Nevertheless, we emphasize the relation of \(\phi^3_4\)-theory amplitudes to Quantum Yang--Mills theory amplitudes via the \emph{Corolla polynomial} and the \emph{Corolla differential}, cf.\ \cite{Kreimer:2012eh,Kreimer:2012jw,Prinz:2016fka}.} Furthermore, being massless avoids the necessity of introducing further renormalization conditions which relate the quadratic \(Z\)-factor to the kinetic and the mass \(Z\)-factor. Additionally, it ensures that we obtain a homogenous scaling for the propagator distributions --- at the cost of introducing additional infrared-singularities, which we ignore in this article. We consider spacetime dimension \(d = 6\) such that the theory is renormalizable.\footnote{The Feynman diagrams in this section are drawn using \texttt{FeynGame} \cite{Harlander:2020cyh,Harlander:2024qbn,Bundgen:2025utt}.}

\subsection{Lagrange density}

This theory is given via the following unrenormalized Lagrange density:
\begin{equation} \label{eqn:unrenormalized-phi-3-6-theory}
    \mathcal{L}_{\phi^3_6} = \frac{1}{2} \left ( \partial_\mu \phi_0 \right ) \left ( \partial^\mu \phi_0 \right ) - \frac{\lambda_0}{3!} \phi_0^3 \, ,
\end{equation}
In the process of multiplicative renormalization, the field and the coupling constant are rescaled as follows: We have the \emph{wavefunction renormalization}
\begin{subequations}
\begin{align}
    \phi_0 & \rightsquigarrow \phi \coloneq \sqrt{Z_\phi} \phi_0
    \intertext{and the \emph{coupling constant renormalization}}
    \lambda_0 & \rightsquigarrow \lambda \coloneq \sqrt{Z_\lambda} \lambda_0 \, .
\end{align}
\end{subequations}
This then leads to the multiplicatively renormalized Lagrange density:
\begin{equation} \label{eqn:renormalized-phi-3-6-theory}
    \mathcal{L}_{\phi^3_6}^\texttt{R} = \frac{Z_\text{Kin}}{2} \left ( \partial_\mu \phi_0 \right ) \left ( \partial^\mu \phi_0 \right ) - \frac{Z_\text{Int} \lambda_0}{3!} \phi_0^3 \, ,
\end{equation}
where we have set
\begin{subequations}
\begin{align}
    Z_\text{Kin} & \coloneq Z_\phi
    \intertext{and}
    Z_\text{Int} & \coloneq Z_\phi^{3/2} Z_\lambda \, .
\end{align}
\end{subequations}
As we will see in this subsection, this renormalized Lagrange density leads to finite Feynman rules.

\subsection{Feynman rules}

Starting from the Lagrange density given in \Cref{eqn:unrenormalized-phi-3-6-theory}, we obtain the following Feynman rules:
\begin{subequations} \label{eqn:feynman-rules}
\begin{align}
    \Phi \left (x\ \propagator\ y\right ) & \coloneq \frac{1}{|x-y|^{4}}
    \intertext{and}
    \Phi \left ( \vertex \right ) & \coloneq \lambda
\end{align}
\end{subequations}

\subsection{Combinatorial Green's functions}

From the Feynman rules, given in \Cref{eqn:feynman-rules}, we obtain the following propagator Green's functions with one and two loops:\footnote{We remark that the symmetry factor for the double bubble is actually \(1/2\), however we have displayed the upper and lower insertions for a symmetric presentation, which are actually equivalent as Feynman graphs, i.e. \begin{equation} \bubbleupperedgecorrectiongraphsmall \cong \bubbleloweredgecorrectiongraphsmall \end{equation} and thus resulting in the prefactor \(1/4\).}
{\allowdisplaybreaks
\begin{subequations} \label{eqn:propagator-greens-functions}
\begin{align}
    \mathfrak{X}^{\propagatorresidue}_1 & \coloneq \frac{1}{2} \bubblegraph \\
    \begin{split}
        \mathfrak{X}^{\propagatorresidue}_2 & \coloneq \bubblevertexcorrectiongraph + \frac{1}{4} \left ( \bubbleupperedgecorrectiongraph + \bubbleloweredgecorrectiongraph \right )
    \end{split}
\end{align}
\end{subequations}
}
Additionally, we obtain the following vertex Green's functions with one and two loops:
{\allowdisplaybreaks
\begin{subequations} \label{eqn:vertex-greens-functions}
\begin{align}
    \mathfrak{X}^{\vertexresidue}_1 & \coloneq \trianglegraph \\
    \begin{split}
        \mathfrak{X}^{\vertexresidue}_2 & \coloneq \triangleleftvertexcorrectiongraph + \triangleuppervertexcorrectiongraph + \trianglelowervertexcorrectiongraph \\ & \phantom{\coloneq} + \frac{1}{2} \left ( \trianglerightedgecorrectiongraph + \triangleupperedgecorrectiongraph + \triangleloweredgecorrectiongraph \right ) \\ & \phantom{\coloneq} + \trianglenonplanar
    \end{split}
\end{align}
\end{subequations}
}

\subsection{Coproduct identities} \label{ssec:coproduct-identities-examples}

Ultimately, we want to apply the algebraic Birkhoff decomposition to the combinatorial Green's functions, given in \Cref{eqn:propagator-greens-functions} and \Cref{eqn:propagator-greens-functions} in order to obtain the renormalized Feynman rules map and the coproduct map. To this end, we observe that the one-loop graphs are primitive elements in the renormalization Hopf algebra, i.e.\ satisfy \(\Delta \left ( \Gamma \right ) = \one \otimes_\mathbb{Q} \Gamma + \Gamma \otimes_\mathbb{Q} \one\), and we obtain
\begin{subequations} \label{eqn:coproduct-identities-oneloop}
\begin{align}
    \Delta \left ( \mathfrak{X}^{\propagatorresidue}_1 \right ) & = \one \otimes_\mathbb{Q} \mathfrak{X}^{\propagatorresidue}_1 + \mathfrak{X}^{\propagatorresidue}_1 \otimes_\mathbb{Q} \one
    \intertext{and}
    \Delta \left ( \mathfrak{X}^{\vertexresidue}_1 \right ) & = \one \otimes_\mathbb{Q} \mathfrak{X}^{\vertexresidue}_1 + \mathfrak{X}^{\vertexresidue}_1 \otimes_\mathbb{Q} \one \, .
\end{align}
\end{subequations}
The coproduct formula for the two-loop variants is a little bit more involved and, as a preparation therefor, we compute the reduced coproduct \(\widetilde{\Delta}\), i.e.\ the non-trivial contributions omitting the \(\one \otimes_\mathbb{Q} \Gamma + \Gamma \otimes_\mathbb{Q} \one\) terms. Furthermore, we write \(\otimes_\mathbb{Q}^\delta\) as a shorthand for the tensor product together with the identification map \(\delta\):
{\allowdisplaybreaks
\begin{subequations}
\begin{align}\label{eqn:coproduct-of-two-loop-graphs}
    \begin{split}
        \widetilde{\Delta} \left ( \bubblevertexcorrectiongraph \right ) & = \trianglegraph \otimes_\mathbb{Q}^\delta \frac{1}{2} \bubblevertexcorrectiongraphdeletedleft \\
        & \phantom{=} + \trianglegraphmirrored \otimes_\mathbb{Q}^\delta \frac{1}{2} \bubblevertexcorrectiongraphdeletedright
    \end{split} \\ 
        \widetilde{\Delta} \left ( \frac{1}{2} \bubbleupperedgecorrectiongraph \right ) & = \frac{1}{2} \bubblegraph \otimes_\mathbb{Q}^\delta \bubbleupperedgecorrectiondeletedgraph  \\
    \widetilde{\Delta} \left ( \triangleleftvertexcorrectiongraph \right ) & = \trianglegraph \otimes_\mathbb{Q}^\delta \triangleleftvertexcorrectiondeletedgraph \\
    \widetilde{\Delta} \left ( \frac{1}{2} \trianglerightedgecorrectiongraph \right ) & = \frac{1}{2} \bubblegraph \otimes_\mathbb{Q}^\delta \trianglerightedgecorrectiondeletedgraph
\end{align}
\end{subequations}
}%
We remark that the multiplicities and symmetry factors conveniently combine on the left and right hand side, which is an example of the general results \cite[Lemma 12]{vanSuijlekom:2006fk} and \cite[Lemma 2.34]{Prinz:2019awo}. Furthermore, we emphasize that from a renormalization perspective the orientation of a subgraph is irrelevant, as its kinematics are typically set to a symmetric reference value. Thus, we will treat subgraphs coming from different insertion points as equivalent, which gives the multiplicities occurring in the following formula as the \emph{number of insertion points}. From this, we finally obtain
\begin{subequations} \label{eqn:coproduct-identities-twoloop}
\begin{align}
    \Delta \left ( \mathfrak{X}^{\propagatorresidue}_2 \right ) & = \one \otimes_\mathbb{Q} \mathfrak{X}^{\propagatorresidue}_2 + 2 \mathfrak{X}^{\propagatorresidue}_1 \otimes_\mathbb{Q}^\delta \mathfrak{Y}^{\propagatorresidue}_{1,1,0} + 2 \mathfrak{X}^{\vertexresidue}_1 \otimes_\mathbb{Q}^\delta \mathfrak{Y}^{\propagatorresidue}_{1,0,1} + \mathfrak{X}^{\propagatorresidue}_2 \otimes_\mathbb{Q} \one \\
    \Delta \left ( \mathfrak{X}^{\vertexresidue}_2 \right ) & = \one \otimes_\mathbb{Q} \mathfrak{X}^{\vertexresidue}_2 + 3 \mathfrak{X}^{\propagatorresidue}_1 \otimes_\mathbb{Q}^\delta \mathfrak{Y}^{\vertexresidue}_{1,1,0} + \mathfrak{X}^{\vertexresidue}_1 \otimes_\mathbb{Q}^\delta \mathfrak{Y}^{\vertexresidue}_{1,0,1} + \mathfrak{X}^{\vertexresidue}_2 \otimes_\mathbb{Q} \one
\end{align}
\end{subequations}
This is an example of the \emph{restricted coproduct formula}, given in \cite[Lemma 4.6]{Yeats:2008zy} and \cite[Proposition 4.2]{Prinz:2019awo}, which was mentioned in \Cref{eqn:restricted-coproduct-formula}.

\subsection{Counterterms, renormalized Feynman rules and \texorpdfstring{\(Z\)}{Z}-factors} \label{ssec:counterterms-renormalized-feynman-rules-z-factors}

Using the results from \Cref{eqn:coproduct-identities-oneloop} and \Cref{eqn:coproduct-identities-twoloop}, we can now calculate the renormalized Feynman rules and the counterterm map, which gives us the corresponding \(Z\)-factors: In the following, we denote via \(\Phi_\mathscr{A} := \Phi \circ \mathscr{A}\) the Feynman rules that vanish on trivial graphs, i.e.\ \(\mathscr{A}\) is the projector onto the augmentation ideal. Furthermore, it is easy to see that on primitive elements \(\mathfrak{G}\) the counterterm map simplifies to \(\CTmap \left ( \mathfrak{G} \right ) \equiv - \left [ \mathscr{R} \circ \Phi \right ] \left ( \mathfrak{G} \right )\). In the following, we denote the counterterms for residue \(r\) and loop order \(L\) via \(C^r_L\) and its product with the corresponding form factor via \(C^r_L \coloneq \Phi \left ( r \right ) \cdot_\AQ C^r_L\). Thus, we obtain the following expressions for the one-loop counterterm:
\begin{subequations} \label{eqn:counterterm-map-oneloop}
\begin{align}
    \begin{split}
        C^{\propagatorresidue}_1 & \coloneq \CTmap \left ( \mathfrak{X}^{\propagatorresidue}_1 \right ) \\ & \phantom{:} = - \mathscr{R} \left ( \left [ \CTmap \star \Phi_\mathscr{A} \right ] \left ( \mathfrak{X}^{\propagatorresidue}_1 \right ) \right ) \\ & \phantom{:} = - \left [ \mathscr{R} \circ \Phi \right ] \left ( \mathfrak{X}^{\propagatorresidue}_1 \right )
    \end{split}
    \intertext{and}
    \begin{split}
        C^{\vertexresidue}_1 & \coloneq \CTmap \left ( \mathfrak{X}^{\vertexresidue}_1 \right ) \\ & \phantom{:} = - \mathscr{R} \left ( \left [ \CTmap \star \Phi_\mathscr{A} \right ] \left ( \mathfrak{X}^{\vertexresidue}_1 \right ) \right ) \\ & \phantom{:} = - \left [ \mathscr{R} \circ \Phi \right ] \left ( \mathfrak{X}^{\vertexresidue}_1 \right ) \, .
    \end{split}
\end{align}
\end{subequations}
In addition, we set
\begin{subequations}
\begin{align}
    Y^{\propagatorresidue}_{i,j,k} & \coloneq \Phi \left ( \mathfrak{Y}^{\propagatorresidue}_{i,j,k} \right )
    \intertext{and}
    Y^{\vertexresidue}_{i,j,k} & \coloneq \Phi \left ( \mathfrak{Y}^{\vertexresidue}_{i,j,k} \right ) \, ,
\end{align}
\end{subequations}
where \(i\) denotes the number of loops, \(j\) the number of edge-cuttings and \(k\) the number of vertex removals. Then, we obtain the following expressions for the two-loop counterterm:
\begin{subequations} \label{eqn:counterterm-map-twoloop}
\begin{align}
    \begin{split}
        C^{\propagatorresidue}_2 & \coloneq \CTmap \left ( \mathfrak{X}^{\propagatorresidue}_2 \right ) \\
        & \phantom{:} = - \mathscr{R} \left ( \left [ \CTmap \star \Phi_\mathscr{A} \right ] \left ( \mathfrak{X}^{\propagatorresidue}_2 \right ) \right )
        \\ & \phantom{:} = 2 \, \mathscr{R} \left ( \left [ \mathscr{R} \circ \Phi \right ] \left ( \mathfrak{X}^{\propagatorresidue}_1 \cdot_\AQ Y^{\propagatorresidue}_{1,0,1} + \mathfrak{X}^{\vertexresidue}_1 \cdot_\AQ Y^{\propagatorresidue}_{1,0,1} \right ) \right ) - \left [ \mathscr{R} \circ \Phi \right ] \left ( \mathfrak{X}^{\propagatorresidue}_2 \right )
        \\ & \phantom{:} = 2 \, \mathscr{R} \left ( C^{\propagatorresidue}_1 \cdot_\AQ Y^{\propagatorresidue}_{1,1,0} + C^{\vertexresidue}_1 \cdot_\AQ Y^{\propagatorresidue}_{1,0,1} \right ) - \left [ \mathscr{R} \circ \Phi \right ] \left ( \mathfrak{X}^{\propagatorresidue}_2 \right )
    \end{split}
    \intertext{and}
    \begin{split}
        C^{\vertexresidue}_2 & \coloneq \CTmap \left ( \mathfrak{X}^{\vertexresidue}_2 \right ) \\
        & \phantom{:} = - \mathscr{R} \left ( \left [ \CTmap \star \Phi_\mathscr{A} \right ] \left ( \mathfrak{X}^{\vertexresidue}_2 \right ) \right ) \\
        & \phantom{:} = 3 \, \mathscr{R} \left ( \left [ \mathscr{R} \circ \Phi \right ] \left ( \mathfrak{X}^{\propagatorresidue}_1 \cdot_\AQ Y^{\vertexresidue}_{1,1,0} + \mathfrak{X}^{\vertexresidue}_1 \cdot_\AQ Y^{\vertexresidue}_{1,0,1} \right ) \right ) - \left [ \mathscr{R} \circ \Phi \right ] \left ( \mathfrak{X}^{\vertexresidue}_2 \right ) \\
        & \phantom{:} = 3 \, \mathscr{R} \left ( C^{\propagatorresidue}_1 \cdot_\AQ Y^{\vertexresidue}_{1,1,0} + C^{\vertexresidue}_1 \cdot_\AQ Y^{\vertexresidue}_{1,0,1} \right ) - \left [ \mathscr{R} \circ \Phi \right ] \left ( \mathfrak{X}^{\vertexresidue}_2 \right ) \, ,
    \end{split}
\end{align}
\end{subequations}
where we have emphasized the product in \(\mathcal{A}\) via \(\cdot_\AQ\). Building on this, we obtain the following expressions for the one-loop renormalized Feynman rules:
\begin{subequations} \label{eqn:renormalized-feynman-rules-oneloop}
\begin{align}
    \begin{split}
        \renFR \left ( \mathfrak{X}^{\propagatorresidue}_1 \right ) & = \left [ \CTmap \star \Phi \right ] \left ( \mathfrak{X}^{\propagatorresidue}_1 \right ) \\
        & = \Phi \left ( \mathfrak{X}^{\propagatorresidue}_1 \right ) - \left [ \mathscr{R} \circ \Phi \right ] \left ( \mathfrak{X}^{\propagatorresidue}_1 \right ) \\
        & = \left ( \operatorname{Id}_\AQ - \mathscr{R} \right ) \left ( \Phi \left ( \mathfrak{X}^{\propagatorresidue}_1 \right ) \right ) \\
        & = \Phi \left ( \mathfrak{X}^{\propagatorresidue}_1 \right ) + C^{\propagatorresidue}_1
    \end{split}
    \intertext{and}
    \begin{split}
        \renFR \left ( \mathfrak{X}^{\vertexresidue}_1 \right ) & = \left [ \CTmap \star \Phi \right ] \left ( \mathfrak{X}^{\propagatorresidue}_1 \right ) \\
        & = \Phi \left ( \mathfrak{X}^{\vertexresidue}_1 \right ) - \left [ \mathscr{R} \circ \Phi \right ] \left ( \mathfrak{X}^{\vertexresidue}_1 \right ) \\
        & = \left ( \operatorname{Id}_\AQ - \mathscr{R} \right ) \left ( \Phi \left ( \mathfrak{X}^{\vertexresidue}_1 \right ) \right ) \\
        & = \Phi \left ( \mathfrak{X}^{\vertexresidue}_1 \right ) + C^{\vertexresidue}_1
    \end{split}
\end{align}
\end{subequations}
and, where again \(\cdot_\AQ\) emphasizes the product in \(\AQ\), the following expressions for the two-loop renormalized Feynman rules
\begin{subequations} \label{eqn:renormalized-feynman-rules-twoloop}
\begin{align}
    \begin{split}
        \renFR \left ( \mathfrak{X}^{\propagatorresidue}_2 \right ) & = \left [ \CTmap \star \Phi \right ] \left ( \mathfrak{X}^{\propagatorresidue}_2 \right ) \\
        & = \Phi \left ( \mathfrak{X}^{\propagatorresidue}_2 \right ) + 2 \, C^{\propagatorresidue}_1 \cdot_\AQ Y^{\propagatorresidue}_{1, 1, 0} + 2 \, C^{\vertexresidue}_1 \cdot_\AQ Y^{\propagatorresidue}_{1,0,1} + C^{\propagatorresidue}_2
    \end{split}
    \intertext{and}
    \begin{split}
        \renFR \left ( \mathfrak{X}^{\vertexresidue}_2 \right ) & = \left [ \CTmap \star \Phi \right ] \left ( \mathfrak{X}^{\vertexresidue}_2 \right ) \\
        & = \Phi \left ( \mathfrak{X}^{\vertexresidue}_2 \right ) + 3 \, C^{\propagatorresidue}_1 \cdot_\AQ Y^{\vertexresidue}_{1,1,0} + 3 \, C^{\vertexresidue}_1 \cdot_\AQ Y^{\vertexresidue}_1 + C^{\vertexresidue}_2
    \end{split}
\end{align}
\end{subequations}
Finally, the \(Z\)-factors are given via the application of the counterterm map on the combinatorial Green's function:\footnote{If a mass term would be present, we would need to demand an additional renormalization condition.}
\begin{subequations}
\begin{align}
    Z_\text{Kin} & \coloneq \frac{1}{1 - \sum_{L = 1}^\infty C^{\propagatorresidue}_L}
    \intertext{and}
    Z_\text{Int} & \coloneq 1 + \sum_{L = 1}^\infty C^{\vertexresidue}_L \, ,
\end{align}
\end{subequations}
where the fraction in the definition of \(Z_\text{Kin}\) gives rise to the geometric series, accounting for the fact that some edge-corrections are also connected, but not 1PI.\footnote{This effect starts to appear at three loops and thus does not show up in the present example.}

\subsection{Feynman rules from the renormalized Lagrange density}

Using the multiplicatively renormalized Lagrange density, given in \Cref{eqn:renormalized-phi-3-6-theory}, we obtain the following multiplicatively renormalized Feynman rules:
\begin{subequations} \label{eqn:multiplicatively-renormalized-feynman-rules}
\begin{align}
    \begin{split}
    \Phi^\texttt{R} \left ( \propagator \right ) & \coloneq \frac{1}{ Z_\text{Kin} \, |x-y|^{4}} \vphantom{\Bigg \vert} \\
    & \phantom{:} \equiv \frac{1}{|x-y|^{4}} \left ( 1 + \sum_{k = 1}^\infty \Bigl ( \sum_{L = 1}^\infty C^{\propagatorresidue}_L \Bigr )^k \right ) \\
    & \eqcolon \mathcal{Z}_\text{Kin} \vphantom{\frac{1}{|x-y|^{4}}}
    \end{split}
    \intertext{and}
    \begin{split}
    \Phi^\texttt{R} \left ( \vertex \right ) & \coloneq \lambda Z_\text{Int} \\[-15pt]
    & \phantom{:} \equiv \lambda \left ( 1 + \sum_{L = 1}^\infty C^{\vertexresidue}_L \right ) \\
    & \eqcolon \mathcal{Z}_\text{Int}
    \end{split}
\end{align}
\end{subequations}
In particular, on the level of the combinatorial Green's function, we obtain the following:
\begin{subequations}
\begin{align}
    \begin{split}
    \Phi^\texttt{R} \left ( \mathfrak{X}^{\propagatorresidue} \right ) & = \sum_{l = 0}^\infty \Phi^\texttt{R} \left ( \mathfrak{X}^{\propagatorresidue}_l \right ) \\
    & = \sum_{l = 0}^\infty \left ( \Phi \left ( \mathfrak{X}^{\propagatorresidue}_l \right ) \left ( Z^{\propagatorresidue} \right )^{- \# E \left ( r, l \right )} \left ( Z^{\vertexresidue} \right )^{\# V \left ( r, l \right )} \right ) \\
    & = \sum_{l = 0}^\infty \left ( \Phi \left ( \mathfrak{X}^{\propagatorresidue}_l \right ) \left ( Z^{\propagatorresidue} \right )^{1 - 3l} \left ( Z^{\vertexresidue} \right )^{2l} \right )
    \end{split}
    \intertext{and}
    \begin{split}
    \Phi^\texttt{R} \left ( \mathfrak{X}^{\vertexresidue} \right ) & = \sum_{l = 0}^\infty \Phi^\texttt{R} \left ( \mathfrak{X}^{\vertexresidue}_l \right ) \\
    & = \sum_{l = 0}^\infty \left ( \Phi \left ( \mathfrak{X}^{\vertexresidue}_l \right ) \left ( Z^{\propagatorresidue} \right )^{- \# E \left ( r, l \right )} \left ( Z^{\vertexresidue} \right )^{\# V \left ( r, l \right )} \right ) \\
    & = \sum_{l = 0}^\infty \left ( \Phi \left ( \mathfrak{X}^{\vertexresidue}_l \right ) \left ( Z^{\propagatorresidue} \right )^{- 3l} \left ( Z^{\vertexresidue} \right )^{1 + 2l} \right ) \, ,
    \end{split}
\end{align}
\end{subequations}
where we have used \Cref{eqns:vertex-and-edge-sets} with \(b_1 \left ( \Gamma \right ) = l\), \(m = 3\) and \(n = 2\) for \(\mathfrak{X}^{\propagatorresidue}\) and \(n = 3\) for \(\mathfrak{X}^{\vertexresidue}\), respectively. Now, considering the restriction to loop number \(L\), we obtain the following:
\begin{subequations}
\begin{align}
    \begin{split}
    \Phi^\texttt{R} \left ( \mathfrak{X}^{\propagatorresidue} \right ) \Bigg \vert_L & = \sum_{l = 0}^\infty \left ( \Phi \left ( \mathfrak{X}^{\propagatorresidue}_l \right ) \left ( 1 - \sum_{m = 1}^\infty C^{\propagatorresidue}_m \right )^{1 - 3l} \left ( 1 + \sum_{n = 1}^\infty C^{\vertexresidue}_n \right )^{2 l} \right ) \Bigg \vert_L
    \end{split}
    \intertext{and}
    \begin{split}
    \Phi^\texttt{R} \left ( \mathfrak{X}^{\vertexresidue} \right ) \Bigg \vert_L & = \sum_{l = 0}^\infty \left ( \Phi \left ( \mathfrak{X}^{\vertexresidue}_l \right ) \left ( 1 - \sum_{m = 1}^\infty C^{\propagatorresidue}_m \right )^{-3l} \left ( 1 + \sum_{n = 1}^\infty C^{\vertexresidue}_n \right )^{2 l + 1} \right ) \Bigg \vert_L \, ,
    \end{split}
\end{align}
\end{subequations}
where we have inserted the series expansion for the \(Z\)-factors, i.e.
\begin{subequations}
\begin{align}
    Z^{\propagatorresidue} & = 1 - \sum_{m = 1}^\infty C^{\propagatorresidue}_m \, ,
    \intertext{in particular interpreted via their inverted form}
    \frac{1}{Z^{\propagatorresidue}} & = 1 + \sum_{k = 1}^\infty \left ( \sum_{m = 1}^\infty C^{\propagatorresidue}_m \right )^k \, ,
    \intertext{and}
    Z^{\vertexresidue} & = 1 + \sum_{n = 1}^\infty C^{\vertexresidue}_n
\end{align}
\end{subequations}
to make their loop-dependence explicit. Specifically, for comparison with the explicit calculations in \Cref{eqn:renormalized-feynman-rules-oneloop} and \Cref{eqn:renormalized-feynman-rules-twoloop}, we calculate the cases for \(L \in \{ 1, 2 \}\). For \(L = 1\), we obtain:
\begin{subequations}
\begin{align}
    \begin{split}
    \Phi^\texttt{R} \left ( \mathfrak{X}^{\propagatorresidue} \right ) \Bigg \vert_1 & = \Phi \left ( \mathfrak{X}^{\propagatorresidue}_1 \right ) + C^{\propagatorresidue}_1
    \end{split}
    \intertext{and}
    \begin{split}
    \Phi^\texttt{R} \left ( \mathfrak{X}^{\vertexresidue} \right ) \Bigg \vert_1 & = \Phi \left ( \mathfrak{X}^{\vertexresidue}_1 \right ) + C^{\vertexresidue}_1 \, ,
    \end{split}
\end{align}
\end{subequations}
which replicates \Cref{eqn:renormalized-feynman-rules-oneloop}. Additionally, for \(L = 2\), we obtain:
\begin{subequations}
\begin{align}
    \Phi^\texttt{R} \left ( \mathfrak{X}^{\propagatorresidue} \right ) \Bigg \vert_2 & = \Phi \left ( \mathfrak{X}^{\propagatorresidue}_2 \right ) +2 \, C^{\propagatorresidue}_1 \cdot_\AQ Y^{\propagatorresidue}_{1,1,0} + 2 \, C^{\vertexresidue}_1 \cdot_\AQ Y^{\propagatorresidue}_{1,0,1} + C^{\propagatorresidue}_2
    \intertext{and}
    \Phi^\texttt{R} \left ( \mathfrak{X}^{\vertexresidue} \right ) \Bigg \vert_2 & = \Phi \left ( \mathfrak{X}^{\vertexresidue}_2 \right ) + 3 \, C^{\propagatorresidue}_1 \cdot_\AQ Y^{\vertexresidue}_{1,1,0} + 3 \, C^{\vertexresidue}_1 \cdot_\AQ Y^{\vertexresidue}_{1,0,1} + C^{\vertexresidue}_2 \, ,
\end{align}
\end{subequations}
which replicates \Cref{eqn:renormalized-feynman-rules-twoloop}.

\subsection{Renormalization of the two loop amplitude}

We first start with the one loop amplitude. Let $G_{xy}=|x-y|^{2-d}$ be the propagator from $x$ to $y$ (The Green's function for the quadratic part). For the one loop correction of the propagator,
\begin{equation*}
    \raisebox{0.33cm}{ \scriptsize \(x\) \hspace{-0.8cm}}
    \bubblegraph
    \raisebox{0.33cm}{\hspace{-0.7cm}\scriptsize \(y \) }
\end{equation*}
we obtain the Feynman rules 
\[
\Phi(\Gamma)= G_{xy}^2.
\]
The propagator has scaling degree $d-2$ with respect to the $xy$ diagonal. The degree of divergence of the one loop graph computes as $2(d-2)-d$ which equals $2$ for $d=6$. Thus there is a quadratic divergence.  For the one loop graph there are no subdivergences, and the distribution is homogeneous with respect to the $xy$ diagonal. Therefore, by \Cref{lem:extension lem of distributions}, we obtain 
\begin{equation}
    -\Phi^-(\Gamma)= \mathscr{R}\Phi(\Gamma)= c^0(\epsilon) \delta_{xy} +c^i(\epsilon)\partial_i \delta_{xy}+ c^{ij}(\epsilon) \partial_{ij} \delta_{xy}.
\end{equation}
The renormalized amplitude is given by
\begin{equation}
    \Phi^+(\Gamma)\coloneq \Phi(\Gamma)-\mathscr{R}\Phi(\Gamma).
\end{equation}
By the extension lemma of distributions \Cref{lem:extension lem of distributions},
the limit $\epsilon\to 0$ exists in $\mathscr{D}'(M^2)$ and therefore we obtain a renormalized amplitude.

For the two loop correction of the propagator, there are two contributing diagrams.
First, we renormalize the diagram 
\[\raisebox{-0.33cm}{\hspace{1.25cm} \scriptsize \(x\) \hspace{-1.25cm}} \raisebox{0.67cm}{\hspace{1cm} \scriptsize \(w\) \hspace{-1cm}} \bubbleupperedgecorrectiongraph \raisebox{-0.33cm}{\hspace{-0.75cm} \scriptsize \(y\) \hspace{0.75cm}} \raisebox{0.67cm}{\hspace{-1.25cm} \scriptsize \(z\) \hspace{1.25cm}}
\]

We obtain the following Feynman rules
\begin{equation}
    \Phi(\Gamma)=G_{xy} G_{yz} G_{xw} G_{zw}^2.
\end{equation}
The superficial degree of divergence of $\Gamma$ is $\operatorname{SDD}(\Gamma)=2$. Moreover, it contains one divergent subgraph $\gamma$, with Feynman rules $\Phi(\gamma)= G_{zw}^2$ of the same order of divergence. We obtain
\begin{equation}
    \Phi^-(\Gamma)= -\mathscr{R}[\Phi(\Gamma)-\insertionproduct{\gamma}{\Gamma}(\mathscr{R} \Phi(\gamma)) \Phi(\Gamma \backslash \gamma)]
\end{equation}
Here, $\mathscr{R}\Phi(\gamma)= \sum_{|\alpha|\leq 2} c_{xy}^\alpha(\epsilon_{xy}) \partial^\alpha \delta_{xy}$ and $\mathscr{R}\Phi(\Gamma)= \sum_{|\alpha|\leq 2} c_{I}^\alpha(\epsilon_{I}) \partial^\alpha \delta_{I}$, where $I=xyzw$. We want to show that $\lim_{\epsilon_I \to 0} \lim_{\epsilon_{xy}\to 0}\Phi^+(\Gamma)_\epsilon$ exists. First we compute (compare to \Cref{eqn:coproduct-of-two-loop-graphs})
\begin{align}
    \begin{split}
        \Phi^+(\Gamma)&=(\Phi^-\star \Phi)(\Gamma) \\
        &= m\circ(\Phi^-\otimes \Phi)(\Gamma\otimes 1+ 1\otimes \Gamma + \gamma \otimes \Gamma \backslash \gamma) \\
        &= \Phi^-(\Gamma) + \Phi(\Gamma) +\Phi^-(\gamma) \insertionproduct{\gamma}{\Gamma}\Phi(\Gamma \backslash \gamma) \\
        &= (1-\mathscr{R}) [ \Phi(\Gamma) -\insertionproduct{\gamma}{\Gamma} (\mathscr{R}\Phi(\gamma)) \Phi(\Gamma \backslash \gamma)]
    \end{split}
\end{align}
We first extend the distribution to the $zw$ diagonal. In order to do this, note that we can split up the amplitude as
\begin{equation}
    \Phi(\Gamma)=\insertionproduct{\gamma}{\Gamma}
    \Phi(\gamma)\Phi(\Gamma\backslash \gamma).
\end{equation}
We obtain
\begin{equation}
    \Phi^+(\Gamma)=(1-\mathscr{R})\insertionproduct{\gamma}{\Gamma}[\Phi(\gamma)-\mathscr{R}\Phi(\gamma) ]\Phi(\Gamma\backslash\gamma).
\end{equation}
Note that by the standard lemma, the limit 
\begin{equation}
   \Phi^+(\gamma)= \lim_{\epsilon_{xy}\to 0} \Phi(\gamma)-\mathscr{R}\Phi(\gamma)
\end{equation}
exists on $M^4\backslash D_I$. Therefore we can also take the limit 
\begin{equation}
    \lim_{\epsilon_I\to 0} (1-\mathscr{R})\insertionproduct{\gamma}{\Gamma}\Phi^+(\gamma)\Phi(\Gamma\backslash \gamma).
\end{equation}
This finishes the proof that $\Phi^+(\Gamma)$ is well defined on $M^4$.

The next graph is the vertex correction graph
\[
    \raisebox{0.33cm}{ \scriptsize \( x\)\hspace{-0.7cm}}\bubblevertexcorrectiongraph \raisebox{0.9cm}{\hspace{-1.4cm}\scriptsize \(y \) } 
    \raisebox{-0.9cm}{\hspace{-0.3cm}\scriptsize \(w \)}
    \raisebox{0.33cm}{\hspace{0.5cm}\scriptsize \(z \)}
\]
The Feynman rules are given by $\Phi(\Gamma)=G_{xy}G_{xw}G_{yz}G_{wz}G_{wy}$. Its superficial degree of divergence is given by $\SDD(\Gamma)=2$, again. (This no coincidence. Since the theory is renormalizable in $6$ dimensions, the degree of divergence only depends on the external leg structure of the graph). There are two divergent subgraphs $\gamma_1$ and $\gamma_2$ 
\begin{equation}
    \begin{split}
          \raisebox{0.33cm}{ \scriptsize \( x\)\hspace{-1.7cm}} \raisebox{0.8cm}{ \hspace{3cm}\scriptsize \( y\)\hspace{-3cm}} \raisebox{-0.8cm}{\hspace{2.6cm} \scriptsize \( w\)\hspace{-3cm}}\gamma_1 \coloneq \trianglegraph \hspace{2cm}\raisebox{0.8cm}{ \scriptsize \( y\)\hspace{-0.4cm}}\raisebox{-0.8cm}{ \scriptsize \( w\)\hspace{-1.6cm}} \gamma_2\coloneq \trianglegraphmirrored \raisebox{0.3cm}{ \hspace{-0.8cm}\scriptsize \( z\)}
    \end{split} 
\end{equation}

Their Feynman rules are
\begin{align}
\begin{split}
    \Phi(\gamma_1)=G_{xy} G_{xw}G_{yw}, \quad \Phi(\gamma_2)=G_{yz}G_{zw}G_{yw}.
\end{split}    
\end{align}
They both have $\SDD(\gamma_1)=\SDD(\gamma_2)=0$ and therefore admit a logarithmic divergence. We calculate
\begin{align}
\begin{split}
    \Phi^-(\Gamma)&=-\mathscr{R}[\Phi(\Gamma)+\insertionproduct{\gamma_1}{\Gamma}\Phi^-(\gamma_1)\Phi(\Gamma\backslash\gamma_1)+\insertionproduct{\gamma_2}{\Gamma}\Phi^-(\gamma_2)\Phi(\Gamma\backslash\gamma_2)] \\
    &=-\mathscr{R}[\Phi(\Gamma)-\insertionproduct{\gamma_1}{\Gamma}(\mathscr{R}\Phi(\gamma_1))\Phi(\Gamma\backslash\gamma_1)-\insertionproduct{\gamma_2}{\Gamma}(\mathscr{R}\Phi(\gamma_2))\Phi(\Gamma\backslash\gamma_2)]
\end{split}
\end{align}
This gives
\begin{align}
    \begin{split}
        \Phi^+(\Gamma)=(1-\mathscr{R})[\Phi(\Gamma)-\insertionproduct{\gamma_1}{\Gamma}(\mathscr{R}\Phi(\gamma_1))\Phi(\Gamma\backslash\gamma_1)-\insertionproduct{\gamma2}{\Gamma}(\mathscr{R}\Phi(\gamma_2))\Phi(\Gamma\backslash\gamma_2)].
    \end{split}
\end{align}
To proceed further, we choose a partition of unity of $M^4\backslash D_I$, where $I=xyzw$, described in \Cref{lem:stora}. The only open subsets of this covering where $\Phi(\Gamma)$ has a divergence are the subsets $C_{x}$ and $C_{z}$. On $C_{x}$, only $\gamma_2$ has divergencies. Restricting $\Phi^+(\Gamma)$ to $C_{x}$ gives the contribution
\begin{equation}
    \Phi^+(\Gamma)|_{C_{x}}= (1-\mathscr{R})[\Phi(\Gamma)-\insertionproduct{\gamma_2}{\Gamma}(\mathscr{R}\Phi(\gamma_2))\Phi(\Gamma\backslash\gamma_2)]|_{C_{x}}.
\end{equation}
Similarly to the previous example, this can be extended to $M^4\backslash D_I$. In the same manner, $\Phi^+(\Gamma)|_{C_{z}}$ can be extended to $M^4\backslash D_I$. Adding the contributions together, using the partition of unity yields a distribution that is defined on all of $M^4\backslash D_I$ and additionally can be extended to $M^4$ by $(1-\mathscr{R})$.

\section{Main results} \label{sec:main-results}

In this section, we obtain some equivalence statements concerning locality and multiplicativity. Finally, we present the construction of the multiplicatively renormalized Lagrange density.

\subsection{Multiplicative renormalization in causal perturbation theory}

Let \(\Gamma = \Gamma_1 \sqcup \Gamma_2\) be the disjoint union of two graphs \(\Gamma_1, \Gamma_2 \in \HQ\). Given the renormalized Feynman rules \(\Phi^+\) and the counterterm map \(\Phi^-\), we clarify their multiplicative behavior depending on the renormalization scheme \(\mathscr{R}\) as follows:

\enter

\begin{prop} \label{thm:rota-baxter is multiplicativity}
The renormalized Feynman rules and the counterterm maps are multiplicative, i.e.\ \(\Phi^\pm \left ( \Gamma_1 \cdot_\HQ \Gamma_2 \right ) = \Phi^\pm \left ( \Gamma_1 \right ) \cdot_\AQ \Phi^\pm \left ( \Gamma_2 \right )\) respectively, if and only if the renormalization scheme \(\mathscr{R}\) satisfies the Rota--Baxter relation 
\begin{equation}\label{eq:rota baxter}
    \mathscr{R} \circ m_\AQ + m_\AQ \circ \left ( \mathscr{R} \otimes \mathscr{R} \right ) = \mathscr{R} \circ m_\AQ \circ \left ( \mathscr{R} \otimes \Id_\AQ + \Id_\AQ \otimes \, \mathscr{R} \right ) \, .
\end{equation}
\end{prop}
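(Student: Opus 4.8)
The plan is to prove the two implications separately, both resting on the explicit Bogoliubov recursion recorded in \Cref{thm:algebraic-birkhoff-decomposition}, namely $\Phi^- = -\mathscr{R} \circ \overline{\Phi}$ and $\Phi^+ = (\Id_\AQ - \mathscr{R}) \circ \overline{\Phi}$ with $\overline{\Phi} = \Phi + \Phi^- \tilde{\star} \Phi$. Since $\Phi$ is already a character, the multiplicativity of $\Phi^\pm$ is governed entirely by how $\mathscr{R}$ interacts with products in $\AQ$. The strategy is therefore to isolate this interaction by evaluating on the simplest graphs for which the reduced coproduct is nontrivial, namely disjoint unions of two primitive graphs.

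For the direction \emph{multiplicativity $\Rightarrow$ Rota--Baxter}, I would take $\gamma_1, \gamma_2 \in \HQ$ primitive, so that $\widetilde{\Delta}(\gamma_i) = 0$ and hence $\overline{\Phi}(\gamma_i) = \Phi(\gamma_i)$, giving $\Phi^-(\gamma_i) = -\mathscr{R}(\Phi(\gamma_i))$. Writing $a \coloneq \Phi(\gamma_1)$ and $b \coloneq \Phi(\gamma_2)$ and using $\widetilde{\Delta}(\gamma_1 \cdot_\HQ \gamma_2) = \gamma_1 \otimes \gamma_2 + \gamma_2 \otimes \gamma_1$ together with the character property of $\Phi$, one computes
\begin{equation}
    \overline{\Phi}(\gamma_1 \cdot_\HQ \gamma_2) = a \cdot_\AQ b - \mathscr{R}(a) \cdot_\AQ b - \mathscr{R}(b) \cdot_\AQ a \, ,
\end{equation}
and therefore $\Phi^-(\gamma_1 \cdot_\HQ \gamma_2) = -\mathscr{R}(a \cdot_\AQ b) + \mathscr{R}(\mathscr{R}(a) \cdot_\AQ b) + \mathscr{R}(\mathscr{R}(b) \cdot_\AQ a)$. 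Imposing multiplicativity, $\Phi^-(\gamma_1 \cdot_\HQ \gamma_2) = \Phi^-(\gamma_1) \cdot_\AQ \Phi^-(\gamma_2) = \mathscr{R}(a) \cdot_\AQ \mathscr{R}(b)$, and rearranging (using commutativity of $\AQ$) yields exactly the relation \eqref{eq:rota baxter} evaluated on $a, b$. The identical computation starting from $\Phi^+ = (\Id_\AQ - \mathscr{R}) \circ \overline{\Phi}$ produces the same identity, which serves as a useful internal consistency check.

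For the converse \emph{Rota--Baxter $\Rightarrow$ multiplicativity}, this is the classical Connes--Kreimer statement already noted in the remark following \Cref{defn:renormalization-scheme} and built into \Cref{thm:algebraic-birkhoff-decomposition}. I would give the self-contained argument as an induction on the grading (loop number) of $\Gamma = \Gamma_1 \cdot_\HQ \Gamma_2$: assuming $\Phi^-$ is multiplicative in all lower degrees, one expands $\overline{\Phi}(\Gamma_1 \cdot_\HQ \Gamma_2)$ using the multiplicativity of $\widetilde{\Delta}$ and of $\Phi$, collects the terms, and applies \eqref{eq:rota baxter} to reassemble $\Phi^-(\Gamma_1) \cdot_\AQ \Phi^-(\Gamma_2)$; the corresponding statement for $\Phi^+$ then follows from $\Phi^+ = \Phi^- \star \Phi$ and the fact that the convolution of two characters is again a character when $\AQ$ is commutative.

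The main obstacle is not the algebra of either implication but the \emph{scope} of the ``only if'' direction: the computation only forces \eqref{eq:rota baxter} to hold on pairs $(a,b)$ lying in the image $\operatorname{Im}(\Phi)$ of primitive graphs. To upgrade this to the operator identity on all of $\AQ$, one must argue that these images generate (or are dense in) the relevant part of $\AQ$, or else phrase the equivalence on $\operatorname{Im}(\Phi)$. Handling this genericity cleanly, and verifying that the reduced-coproduct bookkeeping correctly reproduces the weight $-1$ normalization of \eqref{eq:rota baxter}, is where the care is required.
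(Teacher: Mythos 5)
Your proposal is correct and, in substance, supplies the argument that the paper omits: the paper's entire proof is a citation to \cite[Proposition 3.10, Remark 3.11]{Ebrahimi-Fard:2004twg}, and the proof there is precisely your two-step scheme --- necessity of the Rota--Baxter identity by evaluating the recursion \(\Phi^- = -\mathscr{R} \circ \overline{\Phi}\) on products of primitive graphs, sufficiency by induction on the grading, with \(\Phi^+\) handled via \(\Phi^+ = \Phi^- \star \Phi\). Your explicit computation for \(\gamma_1 \cdot_\HQ \gamma_2\) primitive is correct, as is the consistency check through \(\Phi^+\). The scope caveat you raise in your final paragraph is also genuine, and it is not resolved by the paper either: multiplicativity of \(\Phi^\pm\) for one fixed character only forces \eqref{eq:rota baxter} on the subalgebra generated by \(\operatorname{Im}(\Phi)\); the cited reference states the equivalence at the level of the Rota--Baxter algebra \((\AQ, \mathscr{R})\), quantified over all characters, which is exactly the repair you anticipate.

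Two points need tightening before your sketch is a complete proof. First, in the converse induction you invoke ``multiplicativity of \(\widetilde{\Delta}\)''; the reduced coproduct is \emph{not} an algebra morphism --- it satisfies the deviation formula
\begin{equation}
    \widetilde{\Delta} \left ( x y \right ) = x \otimes y + y \otimes x + \left ( x \otimes \one + \one \otimes x \right ) \widetilde{\Delta} \left ( y \right ) + \widetilde{\Delta} \left ( x \right ) \left ( y \otimes \one + \one \otimes y \right ) + \widetilde{\Delta} \left ( x \right ) \widetilde{\Delta} \left ( y \right ) \, ,
\end{equation}
so either run the induction with the full coproduct \(\Delta\) (which is multiplicative) and strip the unit terms, or use this formula explicitly; it is precisely these cross terms that make the full Rota--Baxter identity, rather than mere idempotency of \(\mathscr{R}\), necessary to reassemble \(\Phi^-(\Gamma_1) \cdot_\AQ \Phi^-(\Gamma_2)\). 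Second, the one substantive remark in the paper's proof --- that in this position-space setting the convolution involves the partially defined insertion product \(\delta_{J \hookrightarrow I}\), while the Rota--Baxter property holds with respect to the tensor product --- is untouched by your sketch. In the forward direction on primitives nothing is lost, since no insertion terms appear; but in the induction step the terms \(\CTmap(\gamma) \, \Phi(\Gamma \setminus \gamma)\) carry insertion maps, and for \(\Gamma = \Gamma_1 \sqcup \Gamma_2\) you must verify that the divergent subgraphs factor as \(\DQ{\Gamma_1 \sqcup \Gamma_2} \cong \DQ{\Gamma_1} \times \DQ{\Gamma_2}\) compatibly with the insertion products, so that the computation stays in the tensor-product part of \(\AQ\) where \eqref{eq:rota baxter} applies. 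With those two repairs, your write-up would be a complete, self-contained proof of the proposition, which is more than the paper itself records.
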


\begin{proof}
This follows directly from \cite[Proposition 3.10, cf.\ Remark 3.11]{Ebrahimi-Fard:2004twg}. The proof goes through verbatim, although it should be noted that we use the insertion product for the star product of graphs and we use the Rota--Baxter property with respect to the tensor product of distributions.
\end{proof}

\henter

\begin{lem} \label{lem:hadamard is rota baxter}
Let $\mathscr{R}$ be the Hadamard singular part defined in \Cref{def:hadamard-singular-part-scheme}.
Then $\mathscr{R}$ is Rota--Baxter. 
\end{lem}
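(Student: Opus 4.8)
The plan is to reduce the Rota--Baxter identity \eqref{eq:rota baxter} to a single grading on \(\AQ\), namely the grading by the \emph{total exponent} of the regularisation parameters, where the total exponent of a standard monomial \(p_{\alpha,\beta}(\epsilon_1,\dots,\epsilon_n)\) is \(\sum_{i=1}^n \alpha_i \in \Z\). Since the logarithmic exponents \(\beta_i\) are non-negative, the condition \(\deg(\alpha,\beta) < 0\) of \Cref{def:hadamard-singular-part-scheme} holds precisely when \(\sum_{i=1}^n \alpha_i < 0\). Hence \(\mathscr{R}\) is exactly the projection that retains the homogeneous components of strictly negative total exponent and annihilates the rest; since only finitely many expansion terms have negative total exponent (the leading order bounds it below), \(\mathscr{R}f\) is a well-defined finite sum and the projection is unambiguous.

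First I would verify that \(m_\AQ\) is graded for the total exponent. By definition the product of \(f \in \AQ_{n,k}\) and \(g \in \AQ_{n',k'}\) is the tensor product \(f(\epsilon_1,\dots,\epsilon_n) \otimes g(\epsilon_{n+1},\dots,\epsilon_{n+n'})\) in \emph{disjoint} regularisation variables. Thus a product of two standard monomials is again a standard monomial, now in \(n+n'\) variables, whose total exponent is the sum of the two factors' total exponents; moreover distinct factor pairs yield distinct monomials because the variable sets are disjoint, so no collection of terms occurs. It follows that the asymptotic expansion of \(fg\) is the term-by-term product of those of \(f\) and \(g\), with coefficients the tensor products of the individual coefficients, and the total-exponent grading is additive.

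The heart of the argument is then the elementary fact that the projection onto the negative part of a graded algebra is a Rota--Baxter operator of weight \(-1\), the same mechanism that underlies minimal subtraction. Evaluating \eqref{eq:rota baxter} on a product of monomials \(PQ\) with total exponents \(p\) and \(q\), and abbreviating the indicators \(A = [p < 0]\), \(B = [q < 0]\) and \(C = [p+q < 0]\), both sides become scalar multiples of \(PQ\) and the identity collapses to \(C + AB = AC + BC\). A short case distinction on the signs of \(p\) and \(q\) (noting that \(A = B = 1\) forces \(C = 1\), while \(A = B = 0\) forces \(C = 0\)) confirms this in all four cases, and bilinearity of every map involved promotes it from monomials to arbitrary elements of \(\AQ\). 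I expect the main obstacle to be bookkeeping rather than conceptual: one must check that the distribution-valued coefficients multiply cleanly in the weak sense of \Cref{def:hadamard-singular-part-scheme} and that extracting the singular part commutes with forming the product. Both points are controlled precisely because the two factors carry disjoint regularisation parameters, which is exactly what makes the grading additive and rules out parameter collisions.
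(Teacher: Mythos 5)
Your proposal is correct, and it rests on exactly the same key fact as the paper's proof: standard monomials multiply as \(p_{\alpha,\beta}\,p_{\gamma,\delta} = p_{\alpha+\gamma,\beta+\delta}\) (in disjoint parameter slots), so the total-exponent degree is additive under \(m_\AQ\), and in particular products of singular terms are singular and products of regular terms are regular. Where you diverge is in how this is converted into the Rota--Baxter identity. You expand in the monomial basis and reduce to the scalar indicator identity \(C + AB = AC + BC\) with \(A = [p<0]\), \(B = [q<0]\), \(C = [p+q<0]\), checked in four cases; the paper instead argues basis-free, recording the projector relations \(\mathscr{R}\circ m\circ(\mathscr{R}\otimes\mathscr{R}) = m\circ(\mathscr{R}\otimes\mathscr{R})\) and \(\Pi\circ m\circ(\Pi\otimes\Pi) = m\circ(\Pi\otimes\Pi)\) for \(\Pi = \operatorname{Id}-\mathscr{R}\), then expanding \(\operatorname{Id}\otimes\operatorname{Id}\) via \(\operatorname{Id} = \mathscr{R}+\Pi\) and cancelling. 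The trade-off: your route uses the full additivity of the grading (true, but strictly stronger than needed) together with the bookkeeping you flag --- that the expansion of a product is the term-by-term product with no monomial collisions, which indeed holds because the regularization parameters of the two factors are disjoint and \(\mathscr{R}\) only ever touches the finitely many negative-degree terms --- whereas the paper's projector manipulation needs only that \(\AQ_-\) and \(\AQ_+\) are closed under multiplication and is insensitive to any choice of basis. Your observation that \(\operatorname{deg}(\alpha,\beta)<0\) holds precisely when \(\sum_i \alpha_i < 0\) (because \(\beta_i \in \mathbb{N}\) rules out the lexicographic tie-break) is correct and is a simplification the paper leaves implicit; note only that under this definition pure-log monomials such as \(\log\epsilon\) count as regular, so your indicator \(A = [p<0]\) tracks the paper's singular/regular split exactly, and both proofs treat that edge case identically. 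In short: both arguments are sound instances of the standard fact that the projection onto the negative part of a decomposition into subalgebras is an idempotent (weight \(-1\)) Rota--Baxter operator; yours is the more explicit, minimal-subtraction-style verification, the paper's the more economical operator computation.
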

\begin{proof}
Let $(\alpha, \beta)$ and $(\gamma,\delta)$ be power indices.
Then the standard functions satisfy
$p_{\alpha,\beta} p_{\gamma,\delta} = p_{\alpha+\gamma, \beta+\delta}$.
Note that $\operatorname{deg}(\alpha+\gamma, \beta+\delta) < 0$ if both $\operatorname{deg}(\alpha, \beta) < 0$ and $\operatorname{deg}(\gamma, \delta) < 0$.

Similarly, $\operatorname{deg}(\alpha +\gamma, \beta+ \delta) \geq 0$ if both
$\operatorname{deg}(\alpha, \beta) \geq 0$ and $\operatorname{deg}(\gamma, \delta) \geq 0$.
In words: the product of singular terms is singular, and the product of regular terms is regular.

The projections $\mathscr{R}, \Pi$ therefore satisfy the following relations with the multiplication $m$:
\begin{subequations}
\begin{align}
    \mathscr{R} \circ m \circ (\mathscr{R} \otimes \mathscr{R}) & = m \circ (\mathscr{R} \otimes \mathscr{R}) \label{eq:proj-rel-1}\\
    \Pi \circ m \circ (\Pi \otimes \Pi) & = m \circ (\Pi \otimes \Pi) \label{eq:proj-rel-2}.
\end{align}
\end{subequations}
From \Cref{eq:proj-rel-1} it follows that the Rota--Baxter relation is equivalent to
\begin{equation}
    \mathscr{R} \circ m \circ (\Id \otimes \mathscr{R} + \mathscr{R} \otimes \Id) = \mathscr{R} \circ m \circ (\mathscr{R}\otimes \mathscr{R}) + \mathscr{R} \circ m \circ (\Id\otimes \Id).
\end{equation}
We expand $\Id \otimes \Id$ by using $\Id = \mathscr{R} + \Pi$, giving
\begin{equation}
    \mathscr{R} \circ m \circ (\Id \otimes \Id) = \mathscr{R} \circ m \circ (\mathscr{R}\otimes \mathscr{R} + \Pi \otimes \mathscr{R} + \mathscr{R} \otimes \Pi + \Pi \otimes \Pi).
\end{equation}
From \Cref{eq:proj-rel-2} it follows that $\mathscr{R}m(\Pi\otimes\Pi) = 0$.
Using $\Pi = \Id - \mathscr{R}$ we are left with
\begin{equation}
\begin{split}
    \mathscr{R} \circ m \circ (\Id \otimes \Id)
    & = \mathscr{R} \circ m \circ (\mathscr{R}\otimes \mathscr{R} + (\Id - \mathscr{R}) \otimes \mathscr{R} + \mathscr{R} \otimes (\Id - \mathscr{R}))\\
    & = \mathscr{R} \circ m \circ (\Id \otimes \mathscr{R} + \mathscr{R} \otimes \Id) + \mathscr{R} \circ m \circ (\mathscr{R}\otimes \mathscr{R} - \mathscr{R} \otimes \mathscr{R} - \mathscr{R} \otimes \mathscr{R})\\
    & = \mathscr{R} \circ m \circ ( \Id \otimes \mathscr{R} + \mathscr{R} \otimes \Id) - \mathscr{R} \circ m \circ (\mathscr{R}\otimes \mathscr{R})
\end{split}
\end{equation}
which verifies the Rota--Baxter relation.
\end{proof}

\enter

\begin{defn}
An extension operator $E:\mathcal{O}(M,I)\to \mathscr{D}'(M^k)$ (see \Cref{def: Feynman distributions}) is $C^\infty$-linear if for any $u\in\mathcal{O}(M,I)$ and $f\in C^\infty(M^k)$, then $E(fu)=fE(u)$. A renormalization scheme $\mathscr{R}$ is called $C^\infty$-linear if it is $C^\infty$-linear in the same sense.
\end{defn}

\enter

\begin{rem}
By \Cref{prop:hadamard-polyhmg-dists} it follows that the Hadamard singular part is $C^\infty$-linear.
\end{rem}

\enter

\begin{lem}
If an extension operator is $C^\infty$-linear then there exists an extension operator on each open subset $U\subset M^k$ which is $C^\infty(U)$-linear.
\end{lem}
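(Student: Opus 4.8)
The plan is to build the local operator $E_U$ by the evident cut-off/apply/restrict procedure, with the whole content being that $C^\infty$-linearity makes this coherent. For $u$ in the domain over $U$ (i.e.\ a restriction to $U\setminus D_I$ of an element of $\mathcal{O}(M,I)$) and a test function $\phi\in\mathscr{D}(U)$, I would choose a cut-off $\chi\in\mathscr{D}(U)$ with $\chi\equiv 1$ on a neighbourhood of $\supp\phi$, observe that $\chi u$ extends by zero to a genuine \emph{global} element of $\mathcal{O}(M,I)$ (because $\mathcal{O}(M,I)$ is a $C^\infty(M^k)$-module and $\chi$ is compactly supported in $U$), and then set $\langle E_U(u),\phi\rangle \coloneq \langle E(\chi u),\phi\rangle$. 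Linearity of $E_U$ in $u$ is immediate once a common cut-off is fixed.

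The first point to verify is independence of the cut-off. Given two admissible cut-offs $\chi,\chi'$, the difference $g\coloneq\chi-\chi'$ lies in $\mathscr{D}(U)$ and vanishes on a neighbourhood of $\supp\phi$. Choosing $\eta\in\mathscr{D}(U)$ with $\eta\equiv 1$ on $\supp g$, I would write $g u = g\,(\eta u)$ with $\eta u\in\mathcal{O}(M,I)$ global, so that $C^\infty$-linearity gives $E(g u)=g\,E(\eta u)$; pairing with $\phi$ yields $\langle E(g u),\phi\rangle=\langle E(\eta u),\,g\phi\rangle=0$ since $g\phi\equiv 0$. Hence $\langle E(\chi u),\phi\rangle=\langle E(\chi' u),\phi\rangle$ and $E_U(u)$ is well defined. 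Continuity follows by noting that on test functions supported in a fixed compact $K\subset U$ one may use a single cut-off $\chi_K$, so that there $E_U(u)$ agrees with the distribution $E(\chi_K u)$; thus $E_U(u)\in\mathscr{D}'(U)$. The extension axiom is then checked on $\phi\in\mathscr{D}(U\setminus D_I)$: taking $\chi\equiv 1$ near $\supp\phi$, the extension property of $E$ gives $\langle E(\chi u),\phi\rangle=\langle \chi u,\phi\rangle=\langle u,\phi\rangle$, so $E_U(u)|_{U\setminus D_I}=u$.

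It remains to establish $C^\infty(U)$-linearity. For $f\in C^\infty(U)$, $u$ as above and $\phi\in\mathscr{D}(U)$, I would fix one cut-off $\chi\equiv 1$ near $\supp\phi$ together with an auxiliary $\eta\in\mathscr{D}(U)$ with $\eta\equiv 1$ on $\supp\chi$, so that $\tilde f\coloneq\eta f$ is a global element of $C^\infty(M^k)$ that agrees with $f$ on $\supp\chi$. Then $\chi(f u)=\tilde f\,(\chi u)$ as global elements, and $C^\infty$-linearity of $E$ gives $E(\chi f u)=\tilde f\,E(\chi u)$. Pairing and using $\tilde f\phi=f\phi$ (valid because $\eta\equiv 1$ near $\supp\phi$) produces $\langle E_U(f u),\phi\rangle=\langle E(\chi u),\,f\phi\rangle=\langle f\,E_U(u),\phi\rangle$, which is exactly $E_U(f u)=f\,E_U(u)$.

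The routine parts are the module bookkeeping; the one place needing genuine care is the passage from local to global data. The hypothesis of $C^\infty$-linearity is precisely the tool that drives both the cut-off independence and the final module identity: in each case a function vanishing near $\supp\phi$, or a locally defined smooth multiplier, is pulled outside $E$ only \emph{after} it has been realised as an honest global $C^\infty(M^k)$ factor times a global element of $\mathcal{O}(M,I)$. The main obstacle to watch, therefore, is ensuring that every intermediate expression genuinely lies in $\mathcal{O}(M,I)$ before $C^\infty$-linearity is invoked; this is guaranteed throughout by multiplying only by compactly supported cut-offs subordinate to $U$.
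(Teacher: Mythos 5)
Your proposal is correct and follows essentially the same route as the paper: define the local operator by $\langle E_U(u),\phi\rangle\coloneq\langle E(\chi u),\phi\rangle$ for a cut-off $\chi\equiv 1$ near $\supp\phi$, and use $C^\infty$-linearity of $E$ to show independence of the cut-off (the paper does this via the symmetric chain $\langle\varphi,E(gu)\rangle=\langle\varphi,g'E(gu)\rangle=\langle\varphi,gE(g'u)\rangle=\langle\varphi,E(g'u)\rangle$, while you equivalently show $E((\chi-\chi')u)$ pairs to zero against $\phi$ using an auxiliary $\eta$). Your additional verifications of the extension axiom and of $C^\infty(U)$-linearity of $E_U$ are sound and in fact slightly more complete than the paper's proof, which leaves these and the continuity claim implicit.
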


\begin{proof}
Given any distribution $u\in \mathscr{D}'(U\backslash D_I)$, and any testfunction $\varphi \in \mathscr{D}(U)$, we would like to define $\langle \varphi, u \rangle$. Choose any function $g\in C_c^\infty(U)$ such that $g=1$ on the support of $\varphi$. The distribution $gu$ can be extended by $0$ to $M^k\backslash D_I$. Therefore, we can define the evaluation by $\langle \varphi, E(gu)\rangle$. We show that this construction is independent of the choice of $g$. Indeed, given a different $g'$ satisfying the same conditions, one has
\begin{align}
    \begin{split}
        \langle \varphi, E(gu)\rangle &= \langle g' \varphi, E(gu) \rangle \\
        &= \langle \varphi, g' E(gu) \rangle \\
        &= \langle \varphi, g E(g'u) \rangle \\ 
        &= \langle g \varphi, E(g'u) \rangle \\
        &= \langle \varphi, E(g'u) \rangle.
    \end{split}
\end{align}
It is easy to see that this linear functional is continuous, and therefore a distribution.
\end{proof}

\enter

\begin{prop} \label{thm:Locality+Rota-Baxter implies Epstein-Glaser}
The renormalized Feynman rules $\Phi^+$ satisfy the Epstein--Glaser extension property \Cref{eq: factorization property} if the renormalization scheme $\mathscr{R}$ is $C^\infty$-linear and satisfies the Rota--Baxter property \Cref{eq:rota baxter}.
\end{prop}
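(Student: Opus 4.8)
The plan is to expand the renormalized Feynman rules through the Birkhoff formula $\Phi^+ = \Phi^- \star \Phi = m_\AQ \circ (\Phi^- \otimes \Phi) \circ \Delta$ and then track which contributions survive after restriction to the open cell $C_J$, on which the vertices of $J$ stay uniformly separated from those of $J^c$. Writing out the coproduct gives
\[
\Phi^+(\Gamma) = \sum_{\delta \in \DQ{\Gamma}} \Phi^-(\delta) \cdot_\AQ \Phi(\Gamma \setminus \delta),
\]
where $\cdot_\AQ$ is the insertion product identifying the vertices of $\delta$ with their copies in the cograph. Everything then reduces to understanding the two factors on $C_J$.

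First I would carry out the locality/support step. By construction of the Hadamard singular part, each counterterm $\Phi^-(\delta) = -\mathscr{R}(\overline{\Phi}(\delta))$ is a finite sum of derivatives of delta distributions supported on the locus where the vertices of each connected component of $\delta$ coincide. Hence, if $\delta$ has a connected component straddling $J$ and $J^c$ (equivalently, containing a crossing edge), its support meets the $J$--$J^c$ diagonal, which is disjoint from $C_J$, so that summand restricts to zero there. The surviving subgraphs are exactly those of the form $\delta = \delta_1 \sqcup \delta_2$ with $\delta_1 \in \DQ{\gamma_1}$ and $\delta_2 \in \DQ{\gamma_2}$; in other words, the coproduct factorizes on $C_J$ into $\DQ{\gamma_1} \times \DQ{\gamma_2}$.

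Next I would invoke the two hypotheses. Since $\mathscr{R}$ is Rota--Baxter, \Cref{thm:rota-baxter is multiplicativity} gives multiplicativity of the counterterm map, $\Phi^-(\delta_1 \sqcup \delta_2) = \Phi^-(\delta_1) \otimes \Phi^-(\delta_2)$, so every surviving summand splits into a $J$-part and a $J^c$-part. The cograph factor splits as well: because $\Phi$ is a product over edges, for non-straddling $\delta$ one has $\Phi(\Gamma \setminus \delta) = \Phi(\gamma_1 \setminus \delta_1) \otimes \Phi(\gamma_2 \setminus \delta_2) \cdot_\AQ \prod_{e} G_e$, where the product runs over the crossing edges and each $G_e$ is smooth on $C_J$ because its arguments never collide there. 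This is where $C^\infty$-linearity enters: using the localized, $C^\infty(C_J)$-linear extension operator furnished by the preceding lemma, the smooth crossing factor is an honest coefficient that the renormalization leaves untouched and that can be pulled outside all of the $\mathscr{R}$-operations. Reassembling the sum and recognizing the two resulting factors as the Birkhoff expansions of $\Phi^+(\gamma_1)$ and $\Phi^+(\gamma_2)$ yields
\[
\Phi^+(\Gamma)\big|_{C_J} = \big(\Phi^+(\gamma_1) \otimes \Phi^+(\gamma_2)\big)\big|_{C_J} \prod_{\substack{s(e)\in J \\ t(e)\in J^c}} G_e\big|_{C_J},
\]
which is precisely the factorization property.

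The step I expect to be the main obstacle is making the localization rigorous: one must verify that restriction to $C_J$ genuinely commutes with the iterated applications of $\mathscr{R}$ and the insertion products, so that the straddling counterterms drop out cleanly and the smooth crossing propagators factor through. This is exactly what $C^\infty$-linearity secures—it is what allows $\mathscr{R}$ to be transported to the open set $C_J$ and to treat the smooth crossing factors as coefficients. Without it the restriction would not interact well with the singular-part projection, and the support argument for straddling subgraphs alone would not close the proof.
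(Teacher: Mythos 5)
Your proof is correct and takes essentially the same route as the paper: the paper's own (much terser) proof likewise restricts to the cell where the crossing propagators are smooth, pulls them out using \(C^\infty\)-linearity, and invokes \Cref{thm:rota-baxter is multiplicativity} --- applied there to the multiplicativity of \(\Phi^+\) on the product \(\Gamma_1 \cdot_\HQ \Gamma_2\), whereas you apply it termwise to \(\Phi^-\) on the surviving subgraphs \(\delta_1 \sqcup \delta_2\), which is equivalent. Your explicit support argument eliminating the straddling counterterms is exactly the step the paper compresses into the phrase ``the first equality follows from \(C^\infty\)-linearity,'' so making it explicit fleshes out rather than departs from the paper's argument.
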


\begin{proof}
Assuming $C^\infty$-linearity and the Rota--Baxter property, and a graph that decomposes as $\Gamma=\Gamma_1 \cdot_\HQ \Gamma_2 \mathcal{P}_{1,2}$, where $\mathcal{P}_{1,2}$ are all edges in $\Gamma$ connecting $\Gamma_1$ and $\Gamma_2$. Let $C_{1,2}$ be the open set such that $\mathcal{P}_{1,2}$ is non-singular. We obtain that 
\begin{equation}
    \Phi^+(\Gamma)|_{C_{1,2}}=\insertionproduct{\mathcal{P}_{1,2}}{\Gamma}\Phi^+(\Gamma_1 \Gamma_2) \Phi(\mathcal{P}_{1,2})|_{C_{1,2}}=\Phi^+(\Gamma_1)\Phi^+(\Gamma_2)\Phi(\mathcal{P}_{1,2})|_{C_{1,2}} 
\end{equation}
The first equality follows from $C^\infty$-linearity  and the second equality follows from the Rota--Baxter property by \Cref{thm:rota-baxter is multiplicativity}. 
\end{proof}

\enter

\begin{thm} \label{thm:Multiplicative Renormalization scheme yields local extension+ Epstein-Glaser}
Let $\mathscr{R}$ be a $C^\infty$-linear renormalization scheme that is Rota--Baxter. Then for each graph $\Gamma\in \HQ$, the limit $\lim_{\epsilon \to 0} \Phi^+(\Gamma)$ exists and defines a $C^\infty$-linear extension of $\, \Phi(\Gamma)\in \mathcal{O}(M,I)$ to $\mathscr{D}'(M^k)$. Furthermore, it satisfies the Epstein--Glaser factorization property of \Cref{eq: factorization property}.
\end{thm}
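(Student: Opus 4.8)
The plan is to argue by induction on the number of vertices $n = |I|$, reducing the convergence of $\lim_{\epsilon \to 0}\Phi^+(\Gamma)$ to two separate tasks: gluing over the Stora cover away from the small diagonal, and performing a single Hadamard extension across it. The base case $|I| \le 1$ is trivial, since then $\Gamma$ carries no divergent subgraph and $\Phi^+(\Gamma) = \Phi(\Gamma)$ is already smooth. For the inductive step I assume that for every graph with strictly fewer than $n$ vertices the limit $\lim_{\epsilon\to 0}\Phi^+$ exists and is a $C^\infty$-linear extension; equivalently, that the counterterms $\Phi^-$ are well defined distributions on all such smaller graphs.

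First I would realize $\Phi^+(\Gamma)$ as a distribution on $M^I \setminus d_I$ with all inner regulators already sent to zero. Cover $M^I\setminus d_I$ by the open sets $C_{I,J}$ of \Cref{lem:stora}. On each $C_{I,J}$ the configuration separates the vertices in $J$ from those in $J^c$, so $\Gamma$ restricts to a disjoint union $\Gamma_1 \sqcup \Gamma_2$ with $V(\Gamma_1) = J$, together with the cross-edges $\mathcal{P}_{1,2}$ whose propagators are smooth there. Since $\mathscr{R}$ is $C^\infty$-linear and Rota--Baxter, \Cref{thm:Locality+Rota-Baxter implies Epstein-Glaser} (via the multiplicativity of \Cref{thm:rota-baxter is multiplicativity}) supplies the factorization
\begin{equation*}
    \Phi^+(\Gamma)\big|_{C_{I,J}} = \Phi^+(\Gamma_1)\,\Phi^+(\Gamma_2)\,\Phi(\mathcal{P}_{1,2})\big|_{C_{I,J}}\,.
\end{equation*}
The right-hand side involves only $\Phi^+$ of the strictly smaller graphs $\Gamma_1,\Gamma_2$, whose inner-regulator limits exist by hypothesis, multiplied by a smooth factor; hence $\lim_{\epsilon\to0}\Phi^+(\Gamma)|_{C_{I,J}}$ exists. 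These local expressions agree on overlaps, and patching them with a partition of unity subordinate to $\{C_{I,J}\}$, exactly as in \Cref{construct:ren-maps} and independent of the choice, yields a well defined $\Phi^+(\Gamma) \in \mathscr{D}'(M^I\setminus d_I)$. Its restriction to $M^I\setminus D_I$ equals $\Phi(\Gamma)$, since every counterterm $\Phi^-(\gamma)$ is supported on a diagonal contained in $D_I$; this already verifies the extension axiom of \Cref{def:renorm-maps}.

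The remaining task is to extend across $d_I$ and take the outermost regulator to zero. After the inner limits, the only potential singularity of the Bogoliubov-subtracted amplitude $\overline{\Phi}(\Gamma) = \Phi(\Gamma) + \Phi^- \tilde{\star}\, \Phi(\Gamma)$ sits on the small diagonal $d_I$, all sub-diagonal divergences having been removed by the recursive subtraction encoded in the coproduct. By the proposition guaranteeing $\Phi_\epsilon(\Gamma) \in \mathcal{A}_n$, this amplitude is partially finite in the outer regulator and thus admits the asymptotic expansion underlying \Cref{def:hadamard-singular-part-scheme}; the Hadamard singular part $\mathscr{R}$ removes precisely the terms of negative degree, so $\Phi^+(\Gamma) = (\Id - \mathscr{R})\overline{\Phi}(\Gamma)$ retains only regular terms and $\lim_{\epsilon\to 0}\Phi^+(\Gamma)$ exists in $\mathscr{D}'(M^I)$. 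The $C^\infty$-linearity of the resulting extension follows from the $C^\infty$-linearity of $\mathscr{R}$ and of the tensor and insertion products, together with the lemma producing $C^\infty$-linear extensions on open subsets; and the factorization property of \Cref{eq: factorization property} is then immediate from \Cref{thm:Locality+Rota-Baxter implies Epstein-Glaser} applied to $\Gamma$ itself.

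I expect the main obstacle to be the claim that the recursive subtraction localizes \emph{all} remaining divergence to $d_I$ with finite scaling behaviour, which is the position-space analogue of BPHZ forest-formula convergence. Concretely, one must check that the inner-limit distribution obtained by Stora patching genuinely extends to a single partially finite family in the outer regulator, i.e.\ that the local factorizations on the $C_{I,J}$ are compatible with the global Hadamard expansion used by $\mathscr{R}$. This is where the interplay between \Cref{lem:stora} and the analytic estimates behind the partial finiteness $\Phi_\epsilon(\Gamma)\in\mathcal{A}_n$ is most delicate, and it is the one place where the simple extension criterion $\operatorname{SD}<d$ of \Cref{lem:extension lem of distributions} is insufficient and must be replaced by the full Hadamard finite-part construction.
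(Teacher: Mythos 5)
Your proposal is correct and follows essentially the same route as the paper's proof: an induction (the paper inducts on the number of divergent subgraphs rather than on $|I|$, but the reduction to strictly smaller pieces is identical), factorization of $\Phi(\Gamma)$ and of the counterterms over the Stora cover $\{C_{I,J}\}$ using $C^\infty$-linearity together with the Rota--Baxter multiplicativity of $\Phi^\pm$, patching with a partition of unity, and a final application of $(\operatorname{Id}-\mathscr{R})$ to remove the one remaining overall divergence on the small diagonal. The paper additionally spells out why the construction is independent of the chosen partition of unity, which you assert by appeal to \Cref{construct:ren-maps}; otherwise the two arguments match step for step.
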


\begin{proof}
We prove the statement by induction on the number of divergent subgraphs. First suppose that the graph $\Gamma$ is primitive, so that $\Delta(\Gamma)=1\otimes \Gamma+\Gamma\otimes 1$. Then $\Phi^+(\Gamma)=(1-\mathscr{R})\Phi(\Gamma)$. The only singularity is on the small diagonal, hence the limit exists.

For the induction step, suppose that the limit exists for all graphs of $n$ divergent subgraphs. Assume that $\Gamma$ has $n+1$ divergent subgraphs. Then
\begin{equation}
    \Phi^+(\Gamma)=(1-\mathscr{R})\left[\Phi(\Gamma)+\sum_{\emptyset \subsetneq \gamma \subsetneq \Gamma}\insertionproduct{\gamma}{\Gamma}\Phi^-(\gamma)\Phi(\Gamma\backslash \gamma)\right].
\end{equation}
Let  $\{C_I\}_{I\subset \{1, \dots, d\}}$ be the cover of $M^d \backslash D_d$ introduced in \Cref{lem:stora}.
Let $\chi_I$ be a partition of unity subordinate to that cover. We can write 
\begin{equation}\label{eq: renormalization with partition}
    \Phi^+(\Gamma)=(1-\mathscr{R})\sum_I \chi_I \left[\Phi(\Gamma)+\sum_{\emptyset \subsetneq \gamma \subsetneq \Gamma}\insertionproduct{\gamma}{\Gamma}\Phi^-(\gamma)\Phi(\Gamma\backslash \gamma)\right].
\end{equation}
Restricting to a single subset $C_I$, a graph has only singularities on subdiagonals of the $I$ and the $I^c$ diagonal. Let $\Gamma_I$ be the graph obtained from $\Gamma$ by deleting all edges connecting a vertex contained in $I$ with a vertex contained in the complement $I^c$. Let $\mathcal{P}_{I,J}$ be the product of all propagators connecting $I$ and $J$. For any connected component $i\in \pi_0(\Gamma_I)$ let $\Gamma_i$ be the subgraph of $\Gamma$ corresponding to that connected component. Because $\Phi^-$ satisfies $C^\infty$ linearity and the Rota--Baxter property, we can write
\begin{equation}
    \Phi(\Gamma)|_{C_I}=\mathcal{P}_{I,I^c} \prod_{i\in \pi_0(\Gamma_I)} \Phi(\Gamma_i)
\end{equation}
and
\begin{equation}
    \Phi^-(\gamma)|_{C_I}=\mathcal{P}_{I,I^c} \prod_{i\in \pi_0(\Gamma_I)} \Phi^-(\gamma \cap \Gamma_i) 
\end{equation}
since $\mathcal{P}_{I,I^c}$ is non-singular on the set $C_I$. Thus, \Cref{eq: renormalization with partition} splits as
\begin{equation}
    (1-\mathscr{R})\sum_I \chi_I \left[\mathcal{P}_{I,I^c} \prod_{i\in \pi_0(\Gamma_I)} \Phi(\Gamma_i)+\sum_{\emptyset \subsetneq \gamma \subsetneq \Gamma}\mathcal{P}_{I,I^c} \prod_{i\in \pi_0(\Gamma_I)} \Phi^-(\gamma \cap \Gamma_i)\Phi(\Gamma_i\backslash \gamma)\right].
\end{equation}
Using the fact that $\Phi^-(\gamma \cap \Gamma_i)$ vanishes if $\gamma \cap \Gamma_i$ is not singular, we can write
\begin{equation}
    \Phi^+(\Gamma)=(1-\mathscr{R})\sum_I \chi_I \mathcal{P}_{I,I^c} \prod_{i\in \pi_0(\Gamma_I)} \left[ \Phi(\Gamma_i)+\sum_{\emptyset \subsetneq\gamma'\subsetneq \Gamma_i} \Phi^-(\gamma')\Phi(\Gamma_i\backslash \gamma') \right].
\end{equation}
The term in brackets is by definition $\Phi^+(\Gamma_i)$ and by induction hypothesis the limit exists and therefore we obtain a well defined distribution on $C_I$. The sum then defines in the limit a distribution on $M^d\backslash D_d$. The $(1-\mathscr{R})$ term subtracts the overall divergence and the limit
\begin{equation}
    \lim_{\epsilon\to 0} (1-\mathscr{R}_\epsilon)\sum_I \chi_I \mathcal{P}_{I,I^c} \prod_{i\in \pi_0(\Gamma_I)} \Phi^+(\Gamma_i)
\end{equation}
exists on $M^d$. The construction is independent of the choice of partition of unity, since $C_I\cap C_J$ contains only diagonals that are contained in both $C_I$ and $C_J$. On these diagonals, the distributions agree, so a different partition of unity would give the same answer. On diagonals that are only contained in one $C_I$, $\chi_I$ is equal to $1$, so two partitions of unity also agree.
\end{proof}

\enter

\begin{cor}\label{cor:Hadamard finite part is multiplicative Renormalization scheme}
The Hadamard singular part (\Cref{def:hadamard-singular-part-scheme}) is a multiplicative renormalization scheme that defines an extension operator $\mathcal{O}(M,I) \to \mathscr{D}'(M^I)$ for all Feynman graphs given by $t_\Gamma \mapsto\lim_{\epsilon \to 0} \Phi^+(\Gamma)$.
\end{cor}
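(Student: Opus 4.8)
The plan is to read the corollary as the assembly of three assertions — that the Hadamard singular part $\mathscr{R}$ is a multiplicative renormalization scheme, that the assignment $t_\Gamma\mapsto\lim_{\epsilon\to0}\Phi^+(\Gamma)$ makes sense, and that it furnishes an extension operator on all of $\mathcal{O}(M,I)$ — and to discharge each by invoking results already established. First I would verify that $\mathscr{R}$ is a renormalization scheme in the sense of \Cref{defn:renormalization-scheme}. It is a projector: by \Cref{def:hadamard-singular-part-scheme}, $\mathscr{R}f$ is the finite sum of the monomials $p_{\gamma,\delta}$ with $\operatorname{deg}(\gamma,\delta)<0$, whose own asymptotic expansion consists of exactly those same singular terms, so a second application returns them unchanged and $\mathscr{R}^2=\mathscr{R}$, with image the singular and kernel the regular subspace. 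It is Rota--Baxter by \Cref{lem:hadamard is rota baxter}. Multiplicativity of $\Phi^\pm$ then follows from \Cref{thm:rota-baxter is multiplicativity}, so $\mathscr{R}$ is indeed a \emph{multiplicative} renormalization scheme.

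Next I would combine $C^\infty$-linearity with the Rota--Baxter property to invoke the main existence result. By the remark following \Cref{prop:hadamard-polyhmg-dists}, $\mathscr{R}$ is $C^\infty$-linear; together with the Rota--Baxter property this places it squarely under the hypotheses of \Cref{thm:Multiplicative Renormalization scheme yields local extension+ Epstein-Glaser}. For a fixed graph $\Gamma$ with vertex set $I$, so that $k=|I|$ and $M^k=M^I$, that theorem supplies precisely what we need on a single generator: the limit $\lim_{\epsilon\to0}\Phi^+(\Gamma)$ exists in $\mathscr{D}'(M^I)$, it restricts to $\Phi(\Gamma)=t_\Gamma$ on $M^I\setminus D_I$ (the extension property), it is $C^\infty$-linear, and it satisfies the Epstein--Glaser factorization property \Cref{eq: factorization property}. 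This establishes the defining data of an extension map of \Cref{def:ext-map-restrict} on each generator.

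It remains to promote this from individual generators to a genuine linear operator $E:\mathcal{O}(M,I)\to\mathscr{D}'(M^I)$ on the full $C^\infty(M^I)$-module $\operatorname{span}_{C^\infty(M^I)}\{t_\gamma\}$ of \Cref{def: Feynman distributions}. The natural definition sets $E\bigl(\sum_i f_i\,t_{\gamma_i}\bigr)=\sum_i f_i\lim_{\epsilon\to0}\Phi^+(\gamma_i)$, which is legitimate on each generator by the $C^\infty$-linearity just recorded. The hard part will be \textbf{well-definedness}: one distribution in $\mathcal{O}(M,I)$ admits many representations as $C^\infty$-combinations of products of propagators, and one must check that they all yield the same extension, i.e.\ that a relation $\sum_i f_i\,t_{\gamma_i}=0$ on $M^I\setminus D_I$ forces $\sum_i f_i\,E(t_{\gamma_i})=0$ on $M^I$. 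I would resolve this through the inductive structure of \Cref{construct:ren-maps}: by induction the lower maps $\Ren_{J,M}$ for $J\subsetneq I$ are already well-defined and linear, so the factorization prescription determines $\dot{\Ren}_{I,M}$ on $M^I\setminus d_I$ depending only on the restriction of $u$ to the covering sets $C_J$, while the final Hadamard fibre extension is linear and, by flatness, support-preserving. Consequently the vanishing relation propagates through factorization and the extension step to give $E(u)=0$, so $E$ descends from the free module on graphs to $\mathcal{O}(M,I)$. Once well-definedness is secured, linearity together with the restriction identity $r\circ E=\operatorname{id}$ exhibits $E$ as an extension operator in the sense of \Cref{def:ext-map-restrict}, which completes the corollary.
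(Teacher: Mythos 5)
Your argument coincides with the paper's proof in substance: the paper disposes of this corollary in one line by combining \Cref{prop:hadamard-reg-basic-block} (the Hadamard cutoff defines a $C^\infty$-linear regulator, which in particular puts the regularized amplitudes into the target algebra and supplies the $C^\infty$-linearity hypothesis) with \Cref{lem:hadamard is rota baxter}, and then concluding via \Cref{thm:Multiplicative Renormalization scheme yields local extension+ Epstein-Glaser}. Your first two paragraphs reproduce exactly this chain, with only cosmetic differences: you source $C^\infty$-linearity from the remark following \Cref{prop:hadamard-polyhmg-dists} rather than from \Cref{prop:hadamard-reg-basic-block}, and you verify idempotency of $\mathscr{R}$ explicitly, which the paper leaves implicit in \Cref{def:hadamard-singular-part-scheme}; both choices are sound.

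The divergence is your third paragraph. The paper reads ``extension operator \dots\ for all Feynman graphs'' as a statement on the generators $t_\Gamma$ and stops there; it does not prove well-definedness on the full $C^\infty(M^I)$-span $\mathcal{O}(M,I)$. Your instinct that this is a genuine issue is correct --- relations such as $|x-y|^{4}\, G_{xy}^{2} = G_{xy}$ with $G_{xy} = |x-y|^{-4}$ show that a single amplitude can have several graph representations with different subdivergence structure --- but your proposed resolution does not work as written. Invoking the induction of \Cref{construct:ren-maps} is circular here: the maps $\Ren_{J,M}$ appearing there are the Epstein--Glaser renormalization maps whose existence the corollary is meant to produce via $\Phi^+$, so their well-definedness cannot be presupposed to establish that of $\Phi^+$. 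And flatness is the wrong tool: it bounds $\operatorname{supp}(Eu)$ for a \emph{single} input $u$, whereas for a relation $\sum_i f_i\, t_{\gamma_i} = 0$ on $M^I \setminus D_I$ the individual $E(t_{\gamma_i})$ are not extensions of zero, so support-preservation gives no control over the sum $\sum_i f_i\, E(t_{\gamma_i})$, which a priori could be any nonzero distribution supported on $D_I$. The mechanism that actually handles such relations is the $C^\infty$-linearity of $\mathscr{R}$ applied at the level of the regularized families, and even there care is needed: in the example above, $f\,\Phi_\epsilon(\gamma_2) = (1-\chi_\epsilon)^2 G$ differs from $\Phi_\epsilon(\gamma_1) = (1-\chi_\epsilon) G$ at every finite $\epsilon$, so the relation holds only among the limits, and one must argue separately that both families have vanishing singular part and converge to the same extension $\overline{G}$. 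Since the paper's own proof claims nothing beyond the statement on graphs, your proposal is correct precisely where it overlaps the paper; the final well-definedness paragraph should be regarded as flagging a legitimate refinement, but the argument you sketch for it would not survive scrutiny.
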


\begin{proof}
We combine \Cref{prop:hadamard-reg-basic-block} and \Cref{lem:hadamard is rota baxter} to conclude with \Cref{thm:Multiplicative Renormalization scheme yields local extension+ Epstein-Glaser}.
\end{proof}

\subsection{The renormalized Lagrange density}
In this subsection, we explicitly describe the multiplicatively renormalized Lagrange density together with its properties.

\enter

\begin{defn}
Let \(\CTmap\) be the counterterm map from \Cref{defn:renormalized-feynman-rules} and \(\combgreen\) be the combinatorial Green's function from \Cref{defn:combinatorial-greens-function}. Then, the \(Z\)-factors are defined via:
\begin{equation}
    Z^r := \CTmap \left ( \combgreen \right )
\end{equation}
\end{defn}

\henter

\begin{thm}\label{thm:Renormalized Lagrangian gives renormalized Feynman rules}
Let \(\LQ \in \AQ_0\) be a Lagrange density with multiplicatively renormalized Lagrange density
\begin{equation}
    \LQ^\emph{\texttt{R}} \coloneq \sum_{r \in \RQ} Z^r \mathcal{M}^r \in \AQ^\emph{\texttt{R}}
\end{equation}
as in \Cref{defn:mult-ren-lagr-dens}. Denote the sum over all unrenormalized Feynman distributions contributing to the amplitude \(r\) and loop number \(L\) by
\begin{subequations}
\begin{align}
    u^r_L \left ( x_1, \dots, x_n \right ) & \coloneq \Phi \left ( \combgreen_L \right ) \, ,
    \intertext{with \(\combgreen_L\) the restricted combinatorial Green's functions from \Cref{defn:combinatorial-greens-function}, and}
    v^r_L \left ( \epsilon; \, x_1, \dots, x_n \right ) & \coloneq \Phi^\emph{\texttt{R}} \left ( \combgreen \right ) \Bigg \vert_L
\end{align}
\end{subequations}
the corresponding sum using the multiplicatively renormalized Feynman rules \(\Phi^\emph{\texttt{R}}\) of \Cref{thm:multiplicative-renormalization}. Then, the limit \(\lim_{\epsilon \to 0} v^r_L \left ( \epsilon; x_1, \dots, x_n \right )\) exists and defines an extension of \(u \left ( x_1, \dots, x_n \right )\) in the sense of Epstein--Glaser.
\end{thm}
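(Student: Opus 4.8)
The plan is to reduce the statement to the results already established in this subsection together with \Cref{thm:multiplicative-renormalization}, so that no new analytic input is required. First I would invoke \Cref{thm:multiplicative-renormalization}, which identifies the multiplicatively renormalized Feynman rules with the renormalized Feynman rules of the algebraic Birkhoff decomposition on combinatorial Green's functions. Since \(\combgreen = \mathfrak{X}^r\), this gives
\[
    v^r_L = \mrFR(\combgreen)\big\vert_L = \renFR(\combgreen_L),
\]
where \(\renFR \equiv \Phi^+\) is the renormalized Feynman rules map of the Birkhoff decomposition (\Cref{defn:renormalized-feynman-rules}). This is the crucial step: it converts the multiplicative, \(Z\)-factor description of the left-hand side into the additive Birkhoff description \(\Phi^+\), for which the requisite analytic control has already been obtained.

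Next I would expand \(\combgreen_L = \mathfrak{X}^r_L\) as the \(\mathbb{Q}\)-linear combination \(\sum_\Gamma \tfrac{1}{\operatorname{Sym}(\Gamma)}\,\Gamma\) over the finitely many 1PI Feynman graphs \(\Gamma\) of loop number \(L\) and external leg structure \(r\) (finiteness at fixed loop number is precisely the grading property used throughout \Cref{sec:examples}). Because \(\Phi^+\) is linear,
\[
    \lim_{\epsilon \to 0} v^r_L = \sum_\Gamma \frac{1}{\operatorname{Sym}(\Gamma)} \lim_{\epsilon \to 0} \Phi^+(\Gamma).
\]
The scheme in force is the Hadamard singular part, which by \Cref{lem:hadamard is rota baxter} and the \(C^\infty\)-linearity remark satisfies the hypotheses of \Cref{thm:Multiplicative Renormalization scheme yields local extension+ Epstein-Glaser}; hence, as recorded in \Cref{cor:Hadamard finite part is multiplicative Renormalization scheme}, each \(\lim_{\epsilon \to 0}\Phi^+(\Gamma)\) exists in \(\mathscr{D}'(M^I)\) and restricts on \(M^I \setminus D_I\) to \(\Phi(\Gamma)\). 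Since the sum is finite, the limit of \(v^r_L\) exists and restricts to \(\sum_\Gamma \tfrac{1}{\operatorname{Sym}(\Gamma)}\,\Phi(\Gamma) = \Phi(\combgreen_L) = u^r_L\), which is exactly the extension property of \Cref{def:renorm-maps}.

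It remains to verify the Epstein--Glaser factorization property. Each \(\Phi^+(\Gamma)\) already satisfies the factorization property \Cref{eq: factorization property} by \Cref{thm:Multiplicative Renormalization scheme yields local extension+ Epstein-Glaser}; since this is a statement about the restriction to the cover elements \(C_J\) of \Cref{lem:stora} and is linear in the amplitude, I would check that it descends to the weighted sum defining \(\combgreen_L\). I expect this bookkeeping to be the main obstacle: on a fixed \(C_J\) one must show that the symmetry-factor-weighted sum over graphs reorganizes into a tensor product of lower-point Green's functions times the connecting propagators, i.e.\ that the multiplicative \(Z\)-factor structure is compatible with the Stora factorization. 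This compatibility is essentially the combinatorial content already encoded in the coproduct identities of \Cref{prop:coproduct-identities} and in the graph-wise argument of \Cref{thm:Multiplicative Renormalization scheme yields local extension+ Epstein-Glaser}, so the remaining task is to transport that argument through the sum over graphs rather than to establish anything analytically new.
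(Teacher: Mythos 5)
Your proposal is correct and takes essentially the same route as the paper's own proof, which consists precisely of combining \Cref{thm:multiplicative-renormalization} (identifying \(\Phi^\texttt{R}\) on combinatorial Green's functions with the Birkhoff-decomposition map \(\Phi^+\)) with \Cref{thm:Multiplicative Renormalization scheme yields local extension+ Epstein-Glaser} (existence of \(\lim_{\epsilon\to 0}\Phi^+\) and the Epstein--Glaser extension and factorization properties for a \(C^\infty\)-linear Rota--Baxter scheme). The only difference is that you make explicit the graph-wise linearity and finite-sum bookkeeping, and flag the descent of factorization to the symmetry-weighted sum, steps the paper's terse proof leaves implicit.
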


\begin{proof}
We have shown in \Cref{thm:multiplicative-renormalization} the combinatorial equivalence of the renormalized Feynman rules \(\Phi^+\), defined via an algebraic Birkhoff decomposition in the sense of \Cref{defn:algebraic-birkhoff-decomposition}, and the multiplicatively renormalized Feynman rules \(\Phi^\texttt{R}\), defined via the multiplicatively renormalized Lagrange density in the sense of \Cref{defn:mult-ren-lagr-dens}. Additionally, starting with a local and multiplicative renormalization scheme, we have shown in \Cref{thm:Multiplicative Renormalization scheme yields local extension+ Epstein-Glaser} that the limit of the regularized and renormalized Feynman rules exists and that it defines a local extension in the sense of Epstein--Glaser. Thus, combining both results yields the claimed statement.
\end{proof}

\section{Conclusion} \label{sec:conclusion}

We have started in \Cref{sec:introduction} with an informal discussion of the present analysis. Then, in \Cref{sec:epstein-glaser}, we provided an overview on the Epstein--Glaser approach to renormalization. Next, in \Cref{sec:connes-kreimer}, we gave an introduction to the Connes--Kreimer framework of renormalization. In the following \Cref{sec:examples}, we exemplified the developed theory with the example of massless \(\phi^3_6\)-theory. Finally, in \Cref{sec:main-results}, we stated and proved our main results: Specifically, in \Cref{thm:Locality+Rota-Baxter implies Epstein-Glaser}, we have shown that the Rota--Baxter property is essentially equivalent to the Epstein--Glaser factorization property. Furthermore, in \Cref{thm:Multiplicative Renormalization scheme yields local extension+ Epstein-Glaser}, we have shown that the combinatorics of Connes--Kreimer actually define an extension for all distributions coming from Feynman graphs. Moreover, we have also proved that this extension operator satisfies the Epstein--Glaser factorization property. Additionally, in \Cref{thm:Renormalized Lagrangian gives renormalized Feynman rules}, we conclude by showing that the Feynman rules of the renormalized Lagrange density are equivalent to the renormalized Feynman rules. Finally, in Appendix \ref{sec:apx-hadamard-finite-part}, we provided the necessary analytic background for the extension of distributions, in particular the \emph{Hadamard finite part}.

\section*{Acknowledgements}

DP thanks Dorothea Bahns and the \emph{University of Göttingen} for hospitality during his visit in June 2023, where the present project was initiated. AH thanks the \emph{Max Planck Institute for Mathematics} for hospitality during his visits in August 2023 and March 2025, where the main results of the present work were developed.

\begin{appendix}
\section{Appendix: The Hadamard finite part} \label{sec:apx-hadamard-finite-part}

In the Epstein--Glaser approach, the evaluation of singular Feynman amplitudes is expressed in terms of \emph{extension maps}.
In this formulation, the notion of \emph{counterterms} does not exist in the same form as in other schemes.
Since we do not explicitly deal with any singular limits, there is no need to subtract any counterterms.

In practice, the construction of extension maps does involve limits and counterterms.
One can see the notion of extension map as a convenient black box which hides the details of such procedures and exposes only the salient features to the user.

Because the Connes--Kreimer combinatorics deal primarily with the correct subtraction of counterterms, any comparison of the Epstein--Glaser and Connes--Kreimer approaches has to open up the black box to a certain extent.

In this appendix we will discuss the relation between extension maps and regularization procedures in the model case.
That is, we consider the problem of extending distributions from $\dot{U} = U \backslash 0$ to $U$, where $U\subset\mathbb{R}^d$ is a neighborhood of $0$.

\subsection{Partially finite functions in a regularization parameter}
We will consider regularizations which introduce a parameter $\epsilon \in [0,1]$ and state two lemmas concerning standard integrals.

\enter

\begin{lem}
    \label{lem:standard-hadamard-integral}
    Define the integral
    \begin{align*}
        I_{a,m}(\epsilon) = \int_{\epsilon}^1 t^{a-1} \log^m t~\mathrm{d} t.
        \end{align*}
    We have
    \begin{align*}
    I_{a,m}(\epsilon) = \operatorname{pf}(I_{a,m}) + \epsilon^a \sum_{j=0}^m \theta_{a,m,j} \log^j \epsilon
    \end{align*}
    with finite part
    \begin{align*}
        \operatorname{pf}\left(I_{b, m}\right)= \begin{cases}(-1)^m \frac{m!}{a^{m+1}}, & a \neq 0 \\ 0, & b= 0\end{cases}
    \end{align*}
    and coefficients
    \begin{align*}
    \theta_{a,m,j} = \begin{cases}(-1)^{m+1+j} \frac{m!}{j!a^{m+1-j}}, & a \neq 0,0 \leq j \leq m, \\ 0, & a \neq 0, j=m+1, \\ -\frac{1}{m+1}, & a=0, j=m+1, \\ 0, & a=0,0 \leq j \leq m .\end{cases}
    \end{align*}
\end{lem}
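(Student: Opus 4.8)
The plan is to reduce everything to two elementary computations by the standard trick of differentiating under the integral sign with respect to the exponent. Observe that for fixed $t>0$ one has $\partial_a^m t^{a-1} = t^{a-1}\log^m t$, and since the integrand is jointly smooth in $(a,t)$ on the compact interval $t \in [\epsilon, 1]$, we may interchange differentiation and integration to obtain
\[
    I_{a,m}(\epsilon) = \partial_a^m \int_\epsilon^1 t^{a-1}\,\mathrm{d}t .
\]
This replaces the whole family $\{I_{a,m}\}_m$ by repeated $a$-derivatives of the single primitive integral $\int_\epsilon^1 t^{a-1}\,\mathrm{d}t$.

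First I would treat the generic case $a \neq 0$, where $\int_\epsilon^1 t^{a-1}\,\mathrm{d}t = (1-\epsilon^a)/a$. Splitting this as $a^{-1} - \epsilon^a a^{-1}$ and applying $\partial_a^m$ termwise, the first piece yields $\partial_a^m a^{-1} = (-1)^m m!\, a^{-m-1}$, which is exactly the claimed $\epsilon$-independent finite part $\operatorname{pf}(I_{a,m})$. For the second piece I would apply the Leibniz rule together with $\partial_a^k \epsilon^a = \log^k\epsilon\,\epsilon^a$ and $\partial_a^{m-k} a^{-1} = (-1)^{m-k}(m-k)!\,a^{-(m-k)-1}$; using the identity $\binom{m}{k}(m-k)! = m!/k!$ collapses the binomial coefficients and produces
\[
    \partial_a^m\!\left(\epsilon^a a^{-1}\right) = \epsilon^a \sum_{k=0}^m (-1)^{m-k}\frac{m!}{k!\,a^{m-k+1}} \log^k\epsilon .
\]
Reading off $\theta_{a,m,k} = -(-1)^{m-k} m!/(k!\,a^{m-k+1}) = (-1)^{m+1+k} m!/(k!\,a^{m+1-k})$ then matches the stated coefficients, and the termination of the sum at $k=m$ accounts for $\theta_{a,m,m+1}=0$ in this case.

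The case $a=0$ I would handle directly rather than by a limit, since the decomposition into a finite part and $\epsilon^a$-terms develops cancelling poles as $a\to 0$. The substitution $u=\log t$ gives $I_{0,m}(\epsilon) = \int_{\log\epsilon}^0 u^m\,\mathrm{d}u = -\log^{m+1}\epsilon/(m+1)$, so that $\operatorname{pf}(I_{0,m})=0$ and the only surviving coefficient is $\theta_{0,m,m+1} = -1/(m+1)$; note that here the expansion carries a genuine $\log^{m+1}\epsilon$ term, which is why the coefficient table extends to $j=m+1$ precisely in this degenerate case.

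I do not expect any genuine obstacle: the content is a pair of one-line integrations, and the only care required is the bookkeeping of signs and indices in the Leibniz expansion, together with the routine justification of differentiating under the integral sign (uniform smoothness of $t^{a-1}\log^m t$ on $[\epsilon,1]$ for $a$ in a neighbourhood of any fixed value). The one mild subtlety worth flagging is the $a=0$ degeneration, where the upper summation limit in the stated formula must be read as extending to $m+1$.
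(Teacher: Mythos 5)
Your proof is correct; note that the paper states \Cref{lem:standard-hadamard-integral} without any proof of its own (it is treated as a standard computation), so there is no argument to compare against, but your route --- writing $I_{a,m}(\epsilon)=\partial_a^m\bigl((1-\epsilon^a)/a\bigr)$ for $a\neq 0$ and applying the Leibniz rule with $\binom{m}{k}(m-k)!=m!/k!$, then handling $a=0$ directly via $u=\log t$ --- reproduces exactly the stated $\operatorname{pf}(I_{a,m})$ and $\theta_{a,m,j}$, including the sign identity $(-1)^{m+1-j}=(-1)^{m+1+j}$. You are also right to flag the one wrinkle in the statement: the displayed sum $\sum_{j=0}^m$ must be read as extending to $j=m+1$ so that the $a=0$ case (whose only nonvanishing coefficient is $\theta_{0,m,m+1}=-1/(m+1)$) is covered, and the case distinction in $\operatorname{pf}(I_{b,m})$ contains a typo mixing $a$ and $b$.
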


\enter

\begin{lem}
    \label{prop:log-hg-hadamard-integrals}
    Let $\phi$ be a Schwartz function on $(0,\infty)$.
    Define
    \begin{align*}
    H_{a,k,\epsilon}(\phi) = \int_{\epsilon}^\infty t^{a-1} \log^k t \phi(t) \mathrm{d} t.
\end{align*}
    Then $H_{a,m,\epsilon}(\phi) \in \mathsf{PF}_{a, m+1}$.
    Explicitly, we have an expansion
\begin{align*}
    H_{a, m, \epsilon}(\phi)=C_{a, m}[\phi]+\sum_{n=0}^N \sum_{j=0}^{m+1} c_{n, j}^{(a, m)}[\phi] \epsilon^{a+n}(\log \epsilon)^j+R_N(\epsilon ; \phi),
\end{align*}
with coefficients
\begin{align*}
    c_{n, j}^{(a, m)}[\phi]= \frac{\phi^{(n)}(0)}{n!} \theta_{a+n, m, j}
\end{align*}
where $\theta_{a,m,j}$ are as in \Cref{lem:standard-hadamard-integral}.
The finite part can be represented as
\begin{align*}
    C_{a, m}[\phi]=
         \int_1^{\infty} t^{a-1}(\log t)^m \varphi(t) \D t 
         +\int_0^1 t^{a-1}(\log t)^m\left(\varphi(t)-\sum_{n=0}^N \frac{\varphi^{(n)}(0)}{n!} t^n\right) \D t\\
         +\sum_{n=0}^N \frac{\varphi^{(n)}(0)}{n!} \operatorname{pf}\left(I_{a+n, m}\right).
\end{align*}
We also have
\begin{align*}
C_{a,m}[\phi] &= \operatorname{fp}_{z=0} \int_0^\infty t^{a+ z - 1} \log^k \phi(t) \D t\\
&= \int_0^{\infty} t^{a+N} P_{a, m, N}(\log t) \varphi^{(N)}(t) d t+\sum_{n=0}^{N-1} \frac{\varphi^{(n)}(0)}{n!} \operatorname{pf}\left(I_{a+n, m}\right)
\end{align*}
where $P_{a,m,N}$ is a polynomial.
\end{lem}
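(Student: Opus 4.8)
The plan is to reduce the whole statement to the standard integrals of \Cref{lem:standard-hadamard-integral} by splitting the domain at $t=1$ and Taylor expanding $\phi$ at the origin. First I would write
\[
H_{a,m,\epsilon}(\phi) = \int_\epsilon^1 t^{a-1}\log^m t\,\phi(t)\,\D t + \int_1^\infty t^{a-1}\log^m t\,\phi(t)\,\D t .
\]
The tail over $[1,\infty)$ is independent of $\epsilon$ for $\epsilon\le 1$ and converges absolutely because $\phi$ is Schwartz; it is a genuine constant and supplies the first integral in the claimed representation of $C_{a,m}[\phi]$.

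On the remaining piece over $[\epsilon,1]$ I would insert the Taylor expansion $\phi(t)=\sum_{n=0}^N \tfrac{\phi^{(n)}(0)}{n!}t^n + r_N(t)$ with remainder $r_N(t)=O(t^{N+1})$. The monomial terms produce exactly $\sum_{n=0}^N \tfrac{\phi^{(n)}(0)}{n!}\,I_{a+n,m}(\epsilon)$, and \Cref{lem:standard-hadamard-integral} applies term by term: each $I_{a+n,m}(\epsilon)$ contributes its finite part $\operatorname{pf}(I_{a+n,m})$ together with the singular tail $\epsilon^{a+n}\sum_{j}\theta_{a+n,m,j}\log^j\epsilon$. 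Reading off coefficients immediately gives $c^{(a,m)}_{n,j}[\phi]=\tfrac{\phi^{(n)}(0)}{n!}\theta_{a+n,m,j}$ and collects the finite parts into the last sum $\sum_n \tfrac{\phi^{(n)}(0)}{n!}\operatorname{pf}(I_{a+n,m})$. The remainder integral $\int_\epsilon^1 t^{a-1}\log^m t\,r_N(t)\,\D t$ has integrand $O(t^{a+N}\log^m t)$, hence converges as $\epsilon\to 0$ once $\operatorname{Re}(a)+N+1>0$; its limit is the subtracted-Taylor integral in the formula for $C_{a,m}[\phi]$, while its $\epsilon$-dependence is $O(\epsilon^{a+N+1}\log^m\epsilon)=R_N(\epsilon;\phi)$. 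Assembling these pieces gives the displayed expansion and the membership $H_{a,m,\epsilon}(\phi)\in\mathsf{PF}_{a,m+1}$, the constant $C_{a,m}[\phi]$ being the term $C$ permitted by the definition of $\mathsf{PF}$ when $a\notin-\N$ and absorbed into the $\epsilon^0$-slot of the $\widetilde{\mathsf{PF}}$-expansion when $a\in-\N$.

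The step I expect to require the most care, and which explains the logarithmic degree $m+1$ rather than $m$, is the resonant index $a+n=0$, i.e.\ $a\in-\N$ with $n=-a$. There \Cref{lem:standard-hadamard-integral} yields a vanishing finite part $\operatorname{pf}(I_{0,m})=0$ and vanishing coefficients $\theta_{0,m,j}=0$ for $0\le j\le m$, but a nonzero top coefficient $\theta_{0,m,m+1}=-\tfrac{1}{m+1}$, producing the single extra $\log^{m+1}\epsilon$ term. Verifying that this resonance occurs for at most one value of $n$ in range, raises the degree by exactly one, and leaves the $\mathsf{PF}_{a,m+1}$-shape of the expansion intact is the main bookkeeping obstacle.

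For the two alternative forms of $C_{a,m}[\phi]$ I would argue separately. For the analytic-continuation form I introduce the auxiliary parameter and view $z\mapsto\int_0^\infty t^{a+z-1}\log^m t\,\phi(t)\,\D t$ as a meromorphic function near $z=0$ whose poles stem only from the small-$t$ Taylor data; extracting the constant term of its Laurent expansion reproduces the same combination of $\operatorname{pf}(I_{a+n,m})$ and convergent integrals obtained above, identifying $\operatorname{fp}_{z=0}$ with $C_{a,m}[\phi]$. For the integration-by-parts form I integrate by parts $N$ times, each step raising the power of $t$ and differentiating $\phi$ so that $t^{a+N}\phi^{(N)}(t)$ becomes integrable at the origin; the repeated differentiation of $t^{a-1}\log^m t$ generates the polynomial $P_{a,m,N}(\log t)$, and the boundary contributions at $t=0$ assemble into the finite sum $\sum_{n=0}^{N-1}\tfrac{\phi^{(n)}(0)}{n!}\operatorname{pf}(I_{a+n,m})$. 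Matching these boundary terms against the finite-part coefficients already computed closes the equivalence of all three representations.
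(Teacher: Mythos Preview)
The paper does not actually supply a proof of this lemma; it is stated in the appendix as a computational fact alongside \Cref{lem:standard-hadamard-integral}, with no accompanying proof environment. So there is nothing to compare your argument against on the paper's side.

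That said, your argument is the standard and correct one: splitting at $t=1$, Taylor expanding $\phi$ on $[\epsilon,1]$, and feeding the monomial pieces into \Cref{lem:standard-hadamard-integral} is exactly how one derives this expansion, and you have correctly isolated the resonance $a+n=0$ as the source of the extra logarithmic order. Your treatment of the remainder and of the two alternative representations of $C_{a,m}[\phi]$ is also on the right track, though for a fully rigorous write-up you would want to spell out the integration-by-parts step more explicitly (in particular, that repeated antidifferentiation of $t^{a-1}\log^m t$ produces $t^{a+N}$ times a polynomial in $\log t$, and that the boundary terms at $t=1$ and $t\to\infty$ vanish or are absorbed).
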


\subsection{Regularization schemes}

Let $U\subset \mathbb{R}^d$ be a neighborhood of $0$ and define $\dot{U} := U\setminus \{0\}$.
We extend our notion of partially finite functions to functions with values in $\mathscr{D}'(U)$.

\enter

\begin{defn}
A function $f: (0,1] \to \mathscr{D}'(U)$ is in $\mathsf{PF}_{a,m}((0,1], \mathscr{D}'(U))$ if and only if for every $\phi \in \mathscr{D}(U)$, the function
$\epsilon \mapsto \langle f(\epsilon), \phi\rangle$ is in $\mathsf{PF}_{a,m}$.
\end{defn}

\enter

As in the scalar case, we denote
    \begin{align}
    \label{eq:-pf-space-all}
        \mathsf{PF}_{\mathbb{Z}, \mathbb{N}}((0,1], \mathscr{D}'(U)) := \bigoplus_{a \in \mathbb{Z}, M \in \mathbb{N}} \mathsf{PF}_{a,M}((0,1], \mathscr{D}'(U)).
    \end{align}
\begin{defn}
\label{def:regulator}
Let \(\mathscr{X}(\dot{U}) \subset \mathscr{D}'(\dot{U})\) be a space of distributions.
A \emph{regulator} is a map
\begin{align*}
\operatorname{Reg}: \mathscr{X}(\dot{U}) \to
\mathsf{PF}_{a,M}((0,1], \mathscr{D}'(U))
\end{align*}
such that
\begin{enumerate}
    \item $\operatorname{Reg}(u)(\epsilon) \in \mathscr{C}^\infty(U)$ for all $\epsilon \in (0,1]$,
    \item \(\langle \operatorname{Reg}(u)(\epsilon), \phi\rangle \xrightarrow{\epsilon \to 0} \langle u, \phi\rangle\) for all \(\phi \in \mathscr{D}(\dot{U})\).
\end{enumerate}
We will write $\operatorname{Reg}_\epsilon(u) := \operatorname{Reg}(u)(\epsilon)$.
\end{defn}

\enter

\begin{prop}
\label{prop:regulator-coefficient-distributions}
Let \(T\) be a regulator on \(\mathscr{X}(\dot{U})\),
i.e. for all \(u\in \mathscr{X}\) and all
\(\phi\ \in \mathscr{D}(U)\) we have an expansion
\begin{align*}
\langle T(u)(\epsilon), \phi\rangle
= C[u,\phi] + \sum_{k=0}^N \sum_{j=0}^M a_{k,j}[u,\phi]
\epsilon^{\alpha + k} \log^j\epsilon + R_{N,M}(\epsilon)
\end{align*}
with $R_{N,M}(\epsilon) = O(\epsilon^{N+1}\log^M\epsilon)$.

Then there are distributions $C[u], t_{k,j}[u] \in \mathscr{D}'(U)$ such that $C[u,\phi] = \langle C[u], \phi\rangle$, $a_{k,j}[u,\phi] = \langle t_{k,j}[u], \phi\rangle$.
\end{prop}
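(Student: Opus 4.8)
The plan is to fix $u \in \mathscr{X}$ and show that each of the coefficient functionals $\phi \mapsto C[u,\phi]$ and $\phi \mapsto a_{k,j}[u,\phi]$ is a distribution on $U$, by realising it as an iterated limit of genuine distributions and invoking the sequential completeness of $\mathscr{D}'(U)$. The essential starting observation is that for every fixed $\epsilon \in (0,1]$ the map $\phi \mapsto \langle T(u)(\epsilon),\phi\rangle$ already lies in $\mathscr{D}'(U)$ --- indeed in $\mathscr{C}^\infty(U)$, by the first regulator axiom of \Cref{def:regulator} --- so all the building blocks we manipulate are bona fide distributions; the only thing in doubt is whether the limiting coefficients remain distributions.

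First I would record the uniqueness of the expansion. The finite family $\{\epsilon^{\alpha+k}\log^j\epsilon : 0 \le k \le N,\ 0 \le j \le M\}$ together with the constant function $1$ is \emph{asymptotically independent} as $\epsilon \to 0^+$, i.e.\ no nontrivial linear combination is subordinate to every listed monomial. This is the standard fact underlying the spaces $\mathsf{PF}_{a,M}$, and it guarantees that $C[u,\phi]$ and $a_{k,j}[u,\phi]$ are uniquely determined by $\langle T(u)(\epsilon),\phi\rangle$ and can be recovered by a fixed extraction scheme. I would then order these monomials from most singular to least singular as $\epsilon \to 0$ --- first by increasing exponent $\alpha+k$, then by decreasing log-power $j$ --- writing them as $\mu_1,\mu_2,\dots$ in this order, with the constant $1$ (exponent $0$, $j=0$) inserted in its appropriate slot.

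Next I would extract the coefficients inductively. Suppose the coefficients of $\mu_1,\dots,\mu_{i-1}$ have already been shown to be distributions $t_{(1)}[u],\dots,t_{(i-1)}[u] \in \mathscr{D}'(U)$ representing the corresponding $a[u,\cdot]$. Set
\[
    S_\epsilon := \frac{1}{\mu_i(\epsilon)}\Bigl(T(u)(\epsilon) - \sum_{l < i} \mu_l(\epsilon)\, t_{(l)}[u]\Bigr) \in \mathscr{D}'(U),
\]
a finite linear combination of distributions. Since every monomial following $\mu_i$, as well as the remainder, is subordinate to $\mu_i$, asymptotic independence gives $\langle S_\epsilon,\phi\rangle \to (\text{coefficient of } \mu_i)$ for each $\phi \in \mathscr{D}(U)$. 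Thus $S_\epsilon$ converges pointwise on $\mathscr{D}(U)$ as $\epsilon \to 0$, and by the sequential completeness of $\mathscr{D}'(U)$ the limit is again a distribution $t_{(i)}[u]$; linearity is automatic as a limit of linear functionals. Iterating through the ordered list yields all the coefficient distributions, including $C[u] \in \mathscr{D}'(U)$.

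The main obstacle is precisely this completeness step: one must ensure that a functional obtained as a pointwise limit of distributions is itself a distribution. For the net indexed by $\epsilon \in (0,1]$ this is handled by passing to an arbitrary sequence $\epsilon_n \to 0$ --- the limit being independent of the choice, as it exists for each $\phi$ --- and applying the Banach--Steinhaus theorem for the $(LF)$-space $\mathscr{D}(U)$, which furnishes the local uniform seminorm bounds that upgrade pointwise convergence to continuity of the limit. A secondary, purely bookkeeping point is the correct ordering of the logarithmic factors in the extraction; once the ``peel off the most singular monomial first'' scheme is fixed, each intermediate object $S_\epsilon$ is manifestly a distribution and the induction closes.
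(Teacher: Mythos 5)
Your proof is correct, but it takes a genuinely different route from the paper. The paper's proof is global and transform-based: it forms the Mellin transform \(f(z) = \int_0^1 T_\epsilon(u)\,\epsilon^{z-1}\,\mathrm{d}\epsilon\) as a Gelfand--Pettis weak integral (legitimate because \(\mathscr{D}'(U)\) is quasi-complete), continues it meromorphically, identifies the expansion coefficients \(a_{k,j}[u,\phi]\) with Laurent coefficients at the poles, and then invokes the fact that weakly meromorphic functions with values in a quasi-complete space are strongly meromorphic, so the Laurent coefficients are distributions --- all coefficients are captured at once. You instead peel the asymptotic expansion monomial by monomial, ordered by singularity, and upgrade each pointwise limit \(\lim_{\epsilon\to 0}\langle S_\epsilon,\phi\rangle\) to a distribution via Banach--Steinhaus on the barrelled space \(\mathscr{D}(U)\); your observation that one may pass to arbitrary sequences \(\epsilon_n \to 0\) correctly handles the net-versus-sequence issue, and your induction closes because each \(S_\epsilon\) is a finite linear combination of genuine distributions. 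What your approach buys is elementarity: no vector-valued analytic function theory is needed, only uniform boundedness. What the paper's approach buys is robustness and economy: the poles of the Mellin transform sit at distinct points of \(\mathbb{C}\), so the argument is insensitive to the ordering of monomials and in particular still works if two exponents \(\alpha+k\) have equal real parts but differ by an imaginary quantity --- a degenerate case where your division-by-\(\mu_i\) quotient oscillates and the pointwise limit fails to exist, so your ``fixed extraction scheme'' claim would need repair (e.g.\ by logarithmic averaging, or by the Mellin device itself). Since the paper's target algebra imposes \(\alpha_i \in \mathbb{Z}\), this caveat does not affect the intended application, and your proof is valid there.
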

\begin{proof}
Since $\mathscr{D}'(U)$ is a quasi-complete space,
we can define the Gelfand--Pettis (weak) integral
\begin{align*}
    f(z) = \int_{0}^1 T_\epsilon(u) \epsilon^{z-1} d\epsilon.
\end{align*}
This is the Mellin transform of $T_\epsilon(u)$.
Since $T_\epsilon(u)$ is partially finite, the integral is well defined for large $\Re z$ and it analytically continues to a meromorphic function on a strip.

The residues of $f$
are precisely the coefficients \(a_{k,j}[u, \phi]\).
Since weakly holomorphic (meromorphic) functions with values in quasi-complete spaces
are strongly holomorphic (meromorphic), the coefficients
of the Laurent series are distributions.
\end{proof}

\enter

Since \(\langle T_{\epsilon} u, \phi\rangle\) converges as \(\epsilon \to 0\)
for \(\phi \in \mathscr{D}'(\dot{U})\),
the coefficient distributions \(t_{k,j}[u]\) of singular terms are supported at the origin,
i.e. they are of the form \(\sum c_{\alpha}[u] \partial^{\alpha}\delta\).

There are several notions of locality available to us.

\enter

\begin{defn}
    Let \(T\) be a regulator on \(\mathscr{X}(\dot{U})\).
    We will call \(T\)
    \begin{enumerate}
    \item \emph{local} if the coefficients \(a_{k,j}[u,\phi]\) of \(\langle T_{\epsilon}(u),\phi\rangle\) satisfy $a_{k,j}[u,\phi] = 0$ whenever $\supp (u) \cap \supp\phi = \emptyset$.
    \item \emph{$C^\infty$-linear} if $a_{k,j}[f u, \phi] = a_{k,j}[u, f\phi]$ for all $f \in \mathscr{C}^\infty(U)$.
    \end{enumerate} 
\end{defn}

\enter

\begin{rem}
We can also rephrase these conditions in terms of the distributions $t_{k,j}$:
\begin{enumerate}
    \item $T$ is local if $\operatorname{supp} t_{k,j}[u] \subset \operatorname{supp} u$,
    \item $T$ is $C^\infty$-linear if $t_{k,j}[fu] = f t_{k,j}[u]$ for all $f \in \mathscr{C}^\infty(U)$.
\end{enumerate}
\end{rem}

\enter

The basic building block for the regulators which we use in this article is the following.

\enter

\begin{prop}
\label{prop:hadamard-reg-basic-block}
Let \(\chi, \varphi \in \mathscr{C}_c^{\infty}(\mathbb{R}^d)\)
and suppose that \(\chi \equiv 1\) in a neighborhood of \(0\).
We write $\rho(x) = |x|$.
Let $\mathscr{X}(\dot{U})$ be spanned by $\rho^{a} \log^k \rho \in \mathscr{D}'(\dot{U})$.
Then
\begin{align*}
    \langle T_\epsilon(u), \varphi\rangle
    := \langle (1- \chi_\epsilon) u, \varphi\rangle
\end{align*}
defines a $C^\infty$-linear regulator on $\mathscr{X}(\dot{U})$
with values in $\mathsf{PF}_{a+d, k+1}$.
\end{prop}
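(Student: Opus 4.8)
The plan is to check the three requirements in \Cref{def:regulator} and the $C^\infty$-linearity clause in turn, with essentially all of the work concentrated in the membership $T_\epsilon(u)\in\mathsf{PF}_{a+d,k+1}$. First I would dispatch the two regulator axioms, which are almost immediate. Since $\chi\equiv 1$ on a ball $\{|x|\le r_0\}$, the cutoff $1-\chi_\epsilon$ vanishes on $\{|x|\le r_0\epsilon\}$, so the product $(1-\chi_\epsilon)\,\rho^a\log^k\rho$ is the smooth function $\rho^a\log^k\rho\in\mathscr{C}^\infty(\dot U)$ times a smooth cutoff supported away from the origin; extending by $0$ gives $T_\epsilon(u)\in\mathscr{C}^\infty(U)$, which is axiom (1). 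For axiom (2), any $\phi\in\mathscr{D}(\dot U)$ has support bounded away from $0$, say $|x|\ge\delta$ there; once $\epsilon$ is small enough that $\chi_\epsilon$ vanishes on $\{|x|\ge\delta\}$, we have $\langle T_\epsilon(u),\phi\rangle=\langle u,\phi\rangle$ identically, so convergence is trivial. The $C^\infty$-linearity is equally immediate: $T_\epsilon(fu)=(1-\chi_\epsilon)fu=f\,T_\epsilon(u)$, hence $\langle T_\epsilon(fu),\phi\rangle=\langle T_\epsilon(u),f\phi\rangle$ for all $f\in\mathscr{C}^\infty(U)$, and this identity passes to every coefficient of the asymptotic expansion.

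The heart of the proof is the asymptotic expansion of $g(\epsilon):=\langle T_\epsilon(u),\varphi\rangle=\int_{\mathbb{R}^d}(1-\chi_\epsilon)\,|x|^a\log^k|x|\,\varphi(x)\,\mathrm{d}x$. My strategy is to compare the smooth cutoff against a sharp radial one. Setting $\tilde g(\epsilon):=\int_{|x|\ge\epsilon}|x|^a\log^k|x|\,\varphi(x)\,\mathrm{d}x$ and passing to polar coordinates gives $\tilde g(\epsilon)=\int_\epsilon^\infty t^{(a+d)-1}\log^k t\,\Psi(t)\,\mathrm{d}t$, where $\Psi(t):=\int_{S^{d-1}}\varphi(t\omega)\,\mathrm{d}\omega$ is the spherical average. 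This $\Psi$ extends smoothly to $t=0$ and is compactly supported, hence Schwartz on $(0,\infty)$, so \Cref{prop:log-hg-hadamard-integrals} applies verbatim with $a+d$ in place of $a$ and $k$ in place of $m$, yielding $\tilde g\in\mathsf{PF}_{a+d,k+1}$ directly (including the correct treatment of the free constant, i.e.\ the finite part, according to whether $a+d\in-\mathbb{N}$).

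It then remains to control the correction $g(\epsilon)-\tilde g(\epsilon)=\int[(1-\chi_\epsilon(x))-\mathbf{1}_{|x|\ge\epsilon}]\,|x|^a\log^k|x|\,\varphi(x)\,\mathrm{d}x$. The bracket vanishes for $|x|$ below $\min(r_0,1)\epsilon$ and above $\max(R_0,1)\epsilon$, where $\operatorname{supp}\chi\subset\{|x|\le R_0\}$, so it is supported in an annulus $\{c_1\epsilon\le|x|\le c_2\epsilon\}$ with $0<c_1<c_2$ depending only on $\chi$; in particular no singularity of $|x|^a\log^k|x|$ is ever encountered. Substituting $x=\epsilon y$ recasts the integral as $\epsilon^{a+d}\int[(1-\chi(y))-\mathbf{1}_{|y|\ge1}]\,|y|^a(\log\epsilon+\log|y|)^k\,\varphi(\epsilon y)\,\mathrm{d}y$. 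Expanding $(\log\epsilon+\log|y|)^k$ by the binomial theorem and Taylor-expanding $\varphi(\epsilon y)$ about $0$, the $y$-integrals converge absolutely over the compact annular support $\{c_1\le|y|\le c_2\}$ and produce a convergent expansion in monomials $\epsilon^{a+d+n}\log^j\epsilon$ with $0\le j\le k$, the remainder being $O(\epsilon^{a+d+N+1}\log^k\epsilon)$; thus $g-\tilde g\in\mathsf{PF}_{a+d,k}$. Since $\mathsf{PF}_{a+d,k}\subset\mathsf{PF}_{a+d,k+1}$ and these classes are closed under addition, we conclude $g\in\mathsf{PF}_{a+d,k+1}$, as required.

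I expect the main obstacle to be the bookkeeping forced by the smooth, direction-dependent cutoff $\chi_\epsilon$: the sharp-cutoff comparison is precisely what reduces the genuinely singular radial analysis to the one-dimensional \Cref{prop:log-hg-hadamard-integrals}, and the only two points genuinely needing care are the smoothness of the spherical average $\Psi$ at the origin (so that the lemma's hypotheses and its coefficients $\Psi^{(n)}(0)$ make sense) and the verification that the annular correction never sees the singularity of $\rho^a\log^k\rho$, so that it contributes no term of order below $\epsilon^{a+d}$ and cannot degrade the class $\mathsf{PF}_{a+d,k+1}$.
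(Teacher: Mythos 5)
Your proof is correct, but it takes a genuinely different route from the paper's. Both arguments ultimately reduce the pairing to the one-dimensional Mellin-type integrals of \Cref{prop:log-hg-hadamard-integrals}, but by different decompositions: the paper uses a spherical layer (dilation) decomposition of the cutoff itself, writing $1-\chi(x/\epsilon) = \int_\epsilon^\infty \psi(x/\lambda)\,\frac{\mathrm{d}\lambda}{\lambda}$ with $\psi(x) = -x\cdot\nabla\chi(x)$, so that after rescaling the whole pairing becomes $\int_\epsilon^\infty \langle u(\lambda x), \psi(x)\varphi(\lambda x)\rangle\,\lambda^{d-1}\,\mathrm{d}\lambda$, and expanding $\log^k(\lambda|x|)$ binomially feeds everything at once into the lemma with the Schwartz amplitude $\lambda \mapsto \langle \rho^a\log^j\rho,\, \psi\,\varphi(\lambda\,\cdot)\rangle$. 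You instead compare against the sharp radial cutoff $\mathbf{1}_{|x|\ge\epsilon}$: polar coordinates put your model term directly in the form $H_{a+d,k,\epsilon}(\Psi)$ with $\Psi$ the spherical average, and the two points you flag as delicate are indeed the only delicate spots, and both check out --- $\Psi$ extends smoothly (in fact evenly, by antipodal invariance of the sphere measure) across $t=0$, and the bracket $(1-\chi_\epsilon)-\mathbf{1}_{|x|\ge\epsilon}$ is supported in the annulus $\{c_1\epsilon \le |x| \le c_2\epsilon\}$, so the scaling-plus-Taylor analysis of the correction is sound and lands in $\mathsf{PF}_{a+d,k}$. The trade-offs: the paper's layer decomposition handles the smooth cutoff in a single stroke, avoids any discontinuous comparison function, and is the form that carries over to the polyhomogeneous class in \Cref{prop:hadamard-polyhmg-dists} and meshes with the Mellin-transform argument of \Cref{prop:regulator-coefficient-distributions}; your comparison argument is more elementary and yields a byproduct the paper's sketch leaves invisible, namely that since the correction has log-degree at most $k$, the top-degree $\log^{k+1}\epsilon$ coefficients are independent of the choice of $\chi$ --- a small scheme-independence statement that falls out of the sharp-cutoff normalization. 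You also verify the two regulator axioms of \Cref{def:regulator} and the $C^\infty$-linearity identity explicitly, which the paper's sketch treats as immediate (for linearity your observation $\langle T_\epsilon(fu),\phi\rangle = \langle T_\epsilon(u), f\phi\rangle$ is verbatim the paper's).
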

\begin{proof}[Sketch of proof.]
    First let us show that $T_\epsilon u$ is partially finite.
    We use a spherical layer decomposition.
    Note that
    \begin{align*}
    1 - \chi(x/\epsilon) = \int_\epsilon^\infty \psi(x/\lambda) \frac{\mathrm{d} \lambda}{\lambda}
    \end{align*}
    where $\psi(x) = -x \cdot \nabla \chi(x)$.
    Therefore
    \begin{align*}
    &\langle (1-\chi_\epsilon) u, \varphi\rangle = \int_\epsilon^\infty \langle \psi(x/\lambda) u, \varphi\rangle \frac{\mathrm{d} \lambda}{\lambda}\\
    =\ &\int_\epsilon^\infty \langle u(\lambda x), \psi(x) \varphi(x\lambda) \rangle \lambda^{d - 1} \mathrm{d} \lambda.
    \end{align*}
    Now $u(\lambda x) = \lambda^{a} |x|^a \sum_{j=0}^k b_{jk} \log^j\lambda \log^{k-j} |x|$.
    We therefore obtain a sum of terms
    \begin{align*}
        \int_\epsilon^\infty \langle \rho^a \log^j\rho, \psi(x) \varphi(x\lambda) \rangle \lambda^{d+ a - 1} \log^{k-j} \lambda \mathrm{d} \lambda.
    \end{align*}
    Now we note that
    \begin{align*}
    \Phi(\lambda) = \langle \rho^a \log^j \rho, \psi(x) \varphi(x\lambda) \rangle
    \end{align*}
    is in $\mathscr{S}([0,\infty))$ and therefore we can apply \Cref{prop:log-hg-hadamard-integrals}.
    
    To see that $T$ is $C^\infty$-linear, it suffices to observe that
    \begin{align*}
        \langle T_\epsilon(fu), \varphi\rangle
        = \langle (1-\chi_\epsilon) f u, \varphi\rangle
        = \langle (1-\chi_\epsilon) u, f \varphi\rangle
        = \langle T_\epsilon(u), f\varphi\rangle.
    \end{align*}
\end{proof}

\enter

The regulator $T_\epsilon$ defined above has other nice properties:

\enter

\begin{lem}
Let $T_\epsilon$ be the regulator of \Cref{prop:hadamard-reg-basic-block}.
We use the notation of \Cref{prop:regulator-coefficient-distributions} for the coefficients in the asymptotic expansion of $T_\epsilon[u]$.
\begin{itemize}
    \item[i)] Except for the constant term $C[u]$, all coefficients $t_{k,j}[u]$ are distributions supported at $0$.
    \item[ii)] If $V \subset U$ is another open neighborhood of 0, then $T_\epsilon[u|_{\dot{V}}] = T_\epsilon[u]|_{V}$.
\end{itemize}
\end{lem}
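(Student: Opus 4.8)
The plan is to derive both assertions from the single elementary observation that the cutoff $\chi_\epsilon$, with $\chi_\epsilon(x) = \chi(x/\epsilon)$, has support $\supp\chi_\epsilon = \epsilon\cdot\supp\chi$ shrinking to $\{0\}$ as $\epsilon\to 0$, so that $T_\epsilon(u) = (1-\chi_\epsilon)u$ becomes invisible to any fixed test function supported away from the origin. Part (i) is then a statement about the triviality of the expansion on $\mathscr{D}(\dot{U})$, and part (ii) is a matter of unwinding the definitions of restriction and of the regulator.

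For part (i), I would fix $\varphi \in \mathscr{D}(\dot{U})$, so that $0 \notin \supp\varphi$. Since $\supp\chi_\epsilon$ shrinks to $\{0\}$, there is $\epsilon_0 > 0$ with $\supp\chi_\epsilon \cap \supp\varphi = \emptyset$ for all $\epsilon < \epsilon_0$; for such $\epsilon$ one has $(1-\chi_\epsilon)\varphi = \varphi$ and hence
\begin{equation*}
    \langle T_\epsilon(u), \varphi\rangle = \langle u, (1-\chi_\epsilon)\varphi\rangle = \langle u, \varphi\rangle,
\end{equation*}
a constant independent of $\epsilon$. Thus the partially finite function $\epsilon \mapsto \langle T_\epsilon(u), \varphi\rangle$ is \emph{eventually constant}, and by the uniqueness of the partially finite asymptotic expansion every coefficient of a non-constant monomial $\epsilon^{a+k}\log^j\epsilon$ vanishes, i.e.\ $\langle t_{k,j}[u], \varphi\rangle = a_{k,j}[u,\varphi] = 0$ in the notation of \Cref{prop:regulator-coefficient-distributions}. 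As $\varphi$ was an arbitrary element of $\mathscr{D}(\dot{U})$, each singular coefficient distribution $t_{k,j}[u]$ annihilates every test function supported in $\dot{U}$, whence $\supp t_{k,j}[u] \subseteq \{0\}$. The structure theorem for distributions supported at a point then gives $t_{k,j}[u] = \sum_\alpha c_\alpha[u]\,\partial^\alpha\delta$, as claimed.

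For part (ii), I would test both sides against an arbitrary $\varphi \in \mathscr{D}(V)$. Directly from the definition of restriction,
\begin{equation*}
    \langle T_\epsilon[u]|_V, \varphi\rangle = \langle (1-\chi_\epsilon)u, \varphi\rangle = \langle u, (1-\chi_\epsilon)\varphi\rangle.
\end{equation*}
On the other hand, $(1-\chi_\epsilon)\varphi$ vanishes in the neighbourhood of $0$ where $\chi_\epsilon \equiv 1$ and is supported in $\supp\varphi \subset V$, so it lies in $\mathscr{D}(\dot{V}) \subseteq \mathscr{D}(\dot{U})$, giving
\begin{equation*}
    \langle T_\epsilon[u|_{\dot{V}}], \varphi\rangle = \langle u|_{\dot{V}}, (1-\chi_\epsilon)\varphi\rangle = \langle u, (1-\chi_\epsilon)\varphi\rangle.
\end{equation*}
The two right-hand sides coincide for every $\varphi$ and every $\epsilon$, so the distributions agree; in particular the identity holds irrespective of whether $\supp\chi$ itself is contained in $V$, precisely because the test function confines the pairing to $V$.

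The arguments are essentially bookkeeping and I expect no serious obstacle. The only points that demand care are the appeal to uniqueness of the partially finite expansion, which is what converts ``eventually constant'' into the vanishing of all singular coefficients, and the verification that $(1-\chi_\epsilon)\varphi$ genuinely defines an element of $\mathscr{D}(\dot{V})$ in part (ii); both are immediate once the support of $\chi_\epsilon$ is tracked.
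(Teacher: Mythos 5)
Your proof is correct. Note that the paper itself states this lemma without proof; the only justification it offers is the remark following \Cref{prop:regulator-coefficient-distributions}, which argues that since \(\langle T_\epsilon u, \phi\rangle\) \emph{converges} as \(\epsilon \to 0\) for \(\phi \in \mathscr{D}(\dot{U})\), the coefficients of the \emph{singular} terms must be supported at the origin. Your argument is sharper in exactly the way the lemma requires: mere convergence only forces the coefficients of negative-degree monomials \(\epsilon^{a+k}\log^j\epsilon\) to vanish on \(\mathscr{D}(\dot{U})\), whereas the lemma asserts that \emph{all} non-constant coefficients \(t_{k,j}[u]\) --- including those of positive powers --- are supported at \(0\). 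Your observation that \(\epsilon \mapsto \langle T_\epsilon(u), \varphi\rangle\) is not merely convergent but \emph{eventually constant} (equal to \(\langle u, \varphi\rangle\) once \(\supp\chi_\epsilon \cap \supp\varphi = \emptyset\)), combined with uniqueness of the asymptotic expansion in the scale \(\{\epsilon^{a+k}\log^j\epsilon\}\), is precisely what kills the regular non-constant coefficients as well, so your route actually proves the full statement where the paper's remark covers only part of it. Two minor bookkeeping points you handle or should make explicit: when \(a + k = 0\), the monomial \(\epsilon^{a+k}\log^0\epsilon\) is itself constant and gets absorbed into \(C[u]\), consistent with the lemma's exception; and in part (ii) the identity presupposes that the \emph{same} cutoff \(\chi\) defines both regulators, which is the natural reading and the one your computation uses --- your remark that the test function \(\varphi \in \mathscr{D}(V)\) confines the pairing, so that \(\supp\chi \subset V\) is not needed, is exactly right.
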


\subsection{Polyhomogeneous distributions}
We now move from the basic building block distribution in \Cref{prop:hadamard-reg-basic-block} to a wider class of distributions.
First we must begin by analyzing homogeneous distributions.

\enter

\begin{defn}
Let $(\Gamma, \mathfrak{s})$ be a  conic manifold.
A distribution $u\in \mathscr{D}'(\Gamma)$ is called \emph{homogeneous}
of degree $\alpha \in \mathbb{C}$ if
\begin{align}
    \mathfrak{s}(\lambda)^* u = \lambda^\alpha u.
\end{align}
The space of $\alpha$-homogeneous distributions on $\Gamma$ is denoted
$\Hg_\alpha(\Gamma)$.
\end{defn}

\enter

\begin{defn}
The \emph{Euler operator} on $\mathbb{R}^d$ is the vector field
$\mathcal{E} = x\cdot \nabla = \sum_{i=1}^d x_i \partial_i$.
\end{defn}

\enter

\begin{prop}
Let $\Gamma \subset \mathbb{R}^d$ be open and conic, then $u \in \mathscr{D}'(\Gamma)$
is homogeneous of degree $\alpha$ if and only if
$(\mathcal{E}-\alpha) u = 0$.
\end{prop}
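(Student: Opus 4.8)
The plan is to reduce the equivalence to a scalar ordinary differential equation in the scaling parameter, pairing $u$ against an arbitrary test function throughout so that all ``derivatives in $\lambda$'' are taken in the classical sense of smooth scalar functions. The engine of the whole argument is the single identity that the infinitesimal generator of the scaling pullback is exactly the Euler operator, namely
\[
    \frac{\mathrm{d}}{\mathrm{d}\lambda}\Bigg|_{\lambda=1} \mathfrak{s}(\lambda)^* u = \mathcal{E} u \qquad \text{in } \mathscr{D}'(\Gamma).
\]
To establish this I would start from the definition $\langle \mathfrak{s}(\lambda)^* u, \phi\rangle = \lambda^{-d}\langle u, \phi(\lambda^{-1}\,\cdot\,)\rangle$, note that $\lambda\mapsto \phi(\lambda^{-1}\,\cdot\,)$ is a smooth curve in $\mathscr{D}(\Gamma)$ (so the pairing is smooth in $\lambda$), differentiate at $\lambda=1$, and recognise the result as $-d\langle u,\phi\rangle - \langle u,\mathcal{E}\phi\rangle$. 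Since the transpose of the Euler operator is $\mathcal{E}^{t} = -(\mathcal{E}+d)$, this equals $\langle \mathcal{E} u,\phi\rangle$, which gives the displayed identity. I would first remark that because $\Gamma$ is open and conic it is invariant under $\mathfrak{s}(\lambda)$ for \emph{all} $\lambda>0$, so the pullback $\mathfrak{s}(\lambda)^* u$ is defined on the whole half-line, not just on $(0,1]$.

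For the forward direction, assume $u$ is homogeneous, i.e.\ $\mathfrak{s}(\lambda)^* u = \lambda^\alpha u$ for all $\lambda$. Differentiating both sides at $\lambda=1$ and invoking the generator identity, the left side yields $\mathcal{E} u$ and the right side yields $\alpha u$, so $(\mathcal{E}-\alpha)u = 0$ at once.

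For the converse I would set $v(\lambda) \coloneq \mathfrak{s}(\lambda)^* u$ and use the one–parameter group law $\mathfrak{s}(\lambda\mu)^* u = \mathfrak{s}(\mu)^*\bigl(\mathfrak{s}(\lambda)^* u\bigr)$. Differentiating in $\mu$ at $\mu=1$ and applying the generator identity to the fixed distribution $v(\lambda)$ gives the distribution-valued ODE
\[
    \lambda\, v'(\lambda) = \mathcal{E}\, v(\lambda) = \mathfrak{s}(\lambda)^*(\mathcal{E} u),
\]
where the last equality uses that $\mathcal{E}$ commutes with $\mathfrak{s}(\lambda)^*$ (the Euler field is invariant under its own flow). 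If $(\mathcal{E}-\alpha)u=0$ this becomes $\lambda v'(\lambda) = \alpha\, v(\lambda)$; testing against a fixed $\phi\in\mathscr{D}(\Gamma)$ turns it into the scalar equation $\lambda g'(\lambda) = \alpha g(\lambda)$ for $g(\lambda)=\langle v(\lambda),\phi\rangle$, whose unique solution is $g(\lambda)=\lambda^\alpha g(1)$. Since $v(1)=u$ and $\phi$ was arbitrary, $\mathfrak{s}(\lambda)^* u = \lambda^\alpha u$, i.e.\ $u\in\Hg_\alpha(\Gamma)$.

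The main obstacle is purely one of rigor rather than of new ideas: one must be careful that every differentiation in $\lambda$ is legitimate. This is handled by never differentiating the distribution directly but instead the smooth scalar functions $\lambda\mapsto\langle \mathfrak{s}(\lambda)^* u,\phi\rangle$, exploiting smoothness of the dilation curve in $\mathscr{D}(\Gamma)$; the ``weak ODE implies constancy'' step and the commutation $\mathcal{E}\mathfrak{s}(\lambda)^* = \mathfrak{s}(\lambda)^*\mathcal{E}$ then need only be verified after this pairing. The only structural hypothesis genuinely used is that $\Gamma$ is conic, which guarantees the scaling action and hence $v(\lambda)$ are defined for all $\lambda>0$.
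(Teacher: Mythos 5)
Your proof is correct and follows the standard argument that the paper itself only cites (the ``straightforward computation'' on pp.~74--75 of H\"ormander): differentiating the scaling pullback at $\lambda=1$ to identify $\mathcal{E}$ as the generator via $\mathcal{E}^{t}=-(\mathcal{E}+d)$, then using the group law and the scalar ODE $\lambda g'(\lambda)=\alpha g(\lambda)$ for $g(\lambda)=\langle \mathfrak{s}(\lambda)^*u,\phi\rangle$ to recover homogeneity. Your attention to where the $\lambda$-derivatives actually live (on smooth scalar pairings, with conicity of $\Gamma$ guaranteeing the action for all $\lambda>0$) makes this a complete write-up of exactly the computation the paper leaves to the reference.
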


\begin{proof}
Straightforward computation, see the discussion on pp. 74 -- 75 in \cite{hormanderAnalysisLinearPartial1998}.
\end{proof}

\enter

\begin{prop}[Theorem 3.2.3, \cite{hormanderAnalysisLinearPartial1998}]
Let $u \in \mathscr{D}'(\dot{\mathbb{R}}^d)$ be homogeneous of degree $\alpha$.
If $\alpha$ is not an integer $\leq -d$, then $u$ has a unique extension $\overline{u} \in \mathscr{D}'(\mathbb{R}^d)$ which is homogeneous of degree $\alpha$.
If $P$ is a homogeneous polynomial then $\overline{Pu} = P \overline{u}$, and if $\alpha \neq 1-d$ then $\overline{\partial_j u} = \partial_j \overline{u}$.
The map
\begin{align}
    \mathscr{D}'(\dot{\mathbb{R}}^d) \ni u \mapsto \overline{u} \in \mathscr{D}'(\mathbb{R}^d)
\end{align}
is continuous.
\end{prop}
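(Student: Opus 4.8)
The plan is to prove uniqueness first and then invoke it repeatedly to obtain both existence and the compatibility relations. For uniqueness, suppose $\overline{u}_1,\overline{u}_2$ are two homogeneous extensions of degree $\alpha$. Their difference $w := \overline{u}_1 - \overline{u}_2$ restricts to $0$ on $\dot{\mathbb{R}}^d$, hence is supported at the origin and therefore equals a finite sum $\sum_\beta c_\beta \partial^\beta\delta$. Since $\partial^\beta\delta$ is homogeneous of degree $-d-|\beta|$, homogeneity of $w$ of degree $\alpha$ forces every occurring $\beta$ to satisfy $\alpha = -d-|\beta|$, i.e.\ $\alpha \in \{-d,-d-1,\dots\}$. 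As this is excluded by hypothesis, $w=0$, giving uniqueness.

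For existence I would split into a base case and a descent. In the base case $\operatorname{Re}\alpha > -d$ we have $\operatorname{SD}(u) = -\operatorname{Re}\alpha < d$, so \Cref{lem:extension lem of distributions} produces an extension $\overline{u}$ of scaling degree $<d$. This extension is homogeneous: the rescaled distribution $\lambda^{-\alpha}\mathfrak{s}(\lambda)^*\overline{u}$ is again an extension of $u$ of scaling degree $<d$, and two such extensions must agree since their difference is supported at $0$ and hence has scaling degree $\geq d$. For the descent, when $\operatorname{Re}\alpha \leq -d$ but $\alpha \notin\{-d,-d-1,\dots\}$, I would use the Euler-operator characterization $(\mathcal{E}-\alpha)u=0$ to write, on $\dot{\mathbb{R}}^d$,
\begin{equation*}
    (\alpha+d)\,u = \sum_{i=1}^d \partial_i(x_i u),
\end{equation*}
which is valid since $\alpha\neq -d$. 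As each $x_i u$ is homogeneous of degree $\alpha+1$, an induction on $\lceil -d-\operatorname{Re}\alpha\rceil$, which raises the real part of the degree by one at each step and terminates once it exceeds $-d$, provides homogeneous extensions $\overline{x_i u}$; setting $\overline{u} := (\alpha+d)^{-1}\sum_i \partial_i\overline{x_i u}$ yields a distribution that restricts to $u$ and is homogeneous of degree $\alpha$ because $\partial_i$ lowers the degree by one. By the uniqueness just proved, this is the unique homogeneous extension, and in particular independent of all choices.

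The compatibility relations then follow purely from uniqueness. For a homogeneous polynomial $P$ of degree $k$, the distribution $P\overline{u}$ restricts to $Pu$ and is homogeneous of degree $\alpha+k$, which is again not an integer $\leq -d$; hence $P\overline{u}$ must equal the unique homogeneous extension $\overline{Pu}$. Similarly $\partial_j\overline{u}$ restricts to $\partial_j u$ and is homogeneous of degree $\alpha-1$, which fails to be an integer $\leq -d$ precisely when $\alpha\neq 1-d$ (the remaining bad values being already excluded); under that hypothesis uniqueness gives $\partial_j\overline{u}=\overline{\partial_j u}$. Finally, continuity of $u\mapsto\overline{u}$ follows because the construction is a finite composition of continuous linear operations: the regularized extension of \Cref{lem:extension lem of distributions}, multiplication by the smooth functions $x_i$, the differentiations $\partial_i$, and scaling by the constants $(\alpha+d+k)^{-1}$.

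I expect the main obstacle to be making the base case fully rigorous, namely knowing that the Hadamard-regularized extension of \Cref{lem:extension lem of distributions} genuinely preserves the scaling degree, so that the uniqueness-within-$\operatorname{SD}<d$ argument applies and yields homogeneity, together with establishing the continuity of that limit as a function of $u$ on the subspace of homogeneous distributions of fixed degree. The descent step is then routine once the base case and the uniqueness statement are in place.
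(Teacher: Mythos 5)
The paper offers no proof of its own here --- it simply defers to H\"ormander's Theorem 3.2.3 and to Proposition 3.14 of Bahns--Wrochna --- so your sketch is a genuinely independent route, and its skeleton is correct: uniqueness via the classification of point-supported distributions together with the fact that \(\partial^\beta\delta\) is homogeneous of degree \(-d-|\beta|\); existence for \(\operatorname{Re}\alpha > -d\) via \Cref{lem:extension lem of distributions}; descent via the Euler identity \((\alpha+d)u = \sum_i \partial_i(x_i u)\), with correct bookkeeping that the intermediate degrees \(\alpha+k\) stay non-exceptional; and the compatibility relations (including the precise exclusion \(\alpha \neq 1-d\)) by uniqueness alone. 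Two refinements close the gaps you flag. First, you do not need scaling-degree preservation at all: since \(\chi_\epsilon(\lambda x) = \chi_{\epsilon/\lambda}(x)\) and \(u\) is homogeneous, one has \(\mathfrak{s}(\lambda)^* T_\epsilon u = \lambda^\alpha\, T_{\epsilon/\lambda} u\), and letting \(\epsilon \to 0\) on both sides (dilation pullback is weakly continuous) gives homogeneity of the limit directly, bypassing the uniqueness-within-\(\operatorname{SD}<d\) argument. Second, continuity is the one real remaining issue, since interchanging the \(\epsilon\to 0\) limit with weak convergence \(u_j \to u\) requires justification (e.g.\ equicontinuity via Banach--Steinbaus); the cleanest fix is to note that homogeneity yields the closed formula \(\langle \overline{u},\phi\rangle = \langle u, L_\alpha\phi\rangle\) with \(L_\alpha\phi(x) = \psi(x)\int_0^\infty \lambda^{\alpha+d-1}\phi(\lambda x)\,\mathrm{d}\lambda\), where \(\psi(x) = -x\cdot\nabla\chi(x)\) arises from the layer decomposition \(1-\chi_\epsilon = \int_\epsilon^\infty \psi(x/\lambda)\,\mathrm{d}\lambda/\lambda\); since \(L_\alpha\phi \in \mathscr{D}(\dot{\mathbb{R}}^d)\) is fixed, continuity is immediate --- this is exactly H\"ormander's device, so your construction, thus patched, reconciles with the cited proof while staying self-contained within the paper's toolkit.
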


\begin{proof}
In addition to the proof given by Hörmander (loc.\,cit.\@), an elegant proof can be found in Proposition 3.14 of \cite{bahnsOnshellExtensionDistributions2014}.
\end{proof}

\enter

Let $F(\omega) \in \mathscr{D}'(\mathbb{S}^{d-1})$ and define $\rho(x) = |x|$.
Then $f = F \rho^\alpha \in \mathscr{D}'(\dot{\mathbb{R}}^d)$
defines an $\alpha$-homogeneous distribution on $\dot{\mathbb{R}}^d$.
We provide an explicit extension to $\mathbb{R}^d$.

\enter

\begin{prop}[Proposition 2.2.1 \cite{ortnernorbertDistributionValuedAnalytic2013}]
For $\alpha \in \mathbb{C}$ with $\Re\alpha > -d$, define
$F\cdot \rho^\alpha \in \mathscr{S}'(\mathbb{R}^d)$ by
\begin{align}
    \langle F\cdot \rho^\alpha, \varphi\rangle
    = \int_0^\infty {}_{\mathscr{D}'(\mathbb{S}^{d-1})}\langle F(\omega),
    \varphi(\lambda\omega)\rangle_{\mathscr{D}(\mathbb{S}^{d-1})}
    \lambda^{\alpha} d\lambda
\end{align}
where $\varphi \in \mathscr{S}(\mathbb{R}^d)$.
Then the following holds:
\begin{enumerate}
    \item For $F \in L^1(\mathbb{S}^{d-1})$ and $\Re \alpha > -1$, the
    distribution $F\cdot \rho^\alpha$ coincides with the $L^1$ function
    $F(x/|x|) |x|^\alpha$.
    \item For $F \in \mathscr{D}'(\mathbb{S}^{d-1})$, the distribution valued
    function
    \begin{align}
        \{\alpha \in \mathbb{C}: \Re \alpha > -1\}
        \to \mathscr{S}'(\mathbb{R}^d),\quad \alpha \mapsto F\cdot \rho^\alpha
    \end{align}
    is analytic and can be analytically continued to a meromorphic function
    on $\mathbb{C}$ with at most simple poles at
    \begin{align}
        \Lambda = (-\infty, -d] \cap \mathbb{Z}.
    \end{align}
    For $\alpha \in \mathbb{C}\setminus \Lambda$ we denote the analytically
    continued distribution again by $F\cdot \rho^\alpha$.
    \item For $\alpha \in \mathbb{C}\setminus \Lambda$, the distribution
    $F\cdot \rho^\alpha$ is $\alpha$-homogeneous.
\end{enumerate}
\end{prop}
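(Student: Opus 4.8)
The plan is to reduce the $d$-dimensional extension problem to the classical one-dimensional meromorphic continuation of the homogeneous distribution $\lambda_+^s \in \mathscr{D}'(\mathbb{R})$, whose analytic structure and poles are explicitly known. First I would introduce the \emph{angular averaging map}
\[
    M_F \colon \mathscr{S}(\mathbb{R}^d) \to \mathscr{S}([0,\infty)), \qquad (M_F \varphi)(\lambda) \coloneq \langle F(\omega), \varphi(\lambda \omega)\rangle_{\mathbb{S}^{d-1}} \, ,
\]
and observe that, after incorporating the radial Jacobian from polar coordinates, the defining pairing reads $\langle F\cdot\rho^\alpha, \varphi\rangle = \langle \lambda_+^{\alpha + d - 1}, M_F\varphi\rangle$, where $\lambda_+^s$ is the one-dimensional distribution given by $\int_0^\infty \lambda^s \psi(\lambda)\,\mathrm{d}\lambda$ for $\Re s > -1$. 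Claim (1) is then immediate: for $F \in L^1(\mathbb{S}^{d-1})$, Fubini's theorem identifies this pairing, throughout the convergence region, with integration against the locally integrable function $F(x/|x|)\,|x|^\alpha$.

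The first technical step is to show that $M_F$ is a well-defined continuous linear map with the stated target regularity. Since $F$ is a distribution of some finite order $k$ on the compact manifold $\mathbb{S}^{d-1}$, the function $\lambda \mapsto (M_F\varphi)(\lambda)$ is smooth on $(0,\infty)$; differentiating under the (continuous) pairing shows that its Taylor coefficients at $\lambda = 0$ are the pairings of $F$ against the angular restrictions of the homogeneous Taylor polynomials of $\varphi$. In particular $M_F\varphi$ extends smoothly to $\lambda = 0$ and decays rapidly as $\lambda \to \infty$ because $\varphi$ is Schwartz, so $M_F\varphi \in \mathscr{S}([0,\infty))$. The requisite seminorm estimates follow from the continuity of $F$, yielding continuity of $M_F$.

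With this in hand I would invoke the classical meromorphic continuation of $s \mapsto \lambda_+^s$: it is analytic for $\Re s > -1$ and continues to a $\mathscr{D}'(\mathbb{R})$-valued meromorphic family on $\mathbb{C}$ with simple poles at $s = -1,-2,\dots$, the residue at $s=-m$ being proportional to $\delta^{(m-1)}$. Composing with the continuous map $M_F$ preserves weak holomorphy, and, exactly as in \Cref{prop:regulator-coefficient-distributions}, weak meromorphy into the quasi-complete space $\mathscr{S}'(\mathbb{R}^d)$ upgrades to strong meromorphy with distribution-valued Laurent coefficients. Transporting the pole locations through the shift $s = \alpha + d - 1$ places the poles of $\alpha \mapsto F\cdot\rho^\alpha$ precisely where $\alpha + d - 1 \in \{-1,-2,\dots\}$, i.e.\ at $\alpha \in (-\infty,-d]\cap\mathbb{Z} = \Lambda$, and they are \emph{at most} simple since the one-dimensional poles are simple — the residue, being $F$ paired with a fixed angular polynomial, may vanish for special $F$. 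This establishes claim (2).

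Finally, for homogeneity (claim (3)), I would verify $\mathfrak{s}(\mu)^*(F\cdot\rho^\alpha) = \mu^\alpha\,(F\cdot\rho^\alpha)$ directly from the integral representation in the convergence region by the substitution $\lambda \mapsto \mu\lambda$, and then extend to all $\alpha \in \mathbb{C}\setminus\Lambda$ by the identity theorem applied to the meromorphic family. The main obstacle I anticipate is the second step: carefully controlling the interchange of the distributional pairing against $F$ with radial differentiation and integration, so as to pin down the regularity and decay of $M_F\varphi$ near both $\lambda = 0$ and $\lambda = \infty$ when $F$ is a genuine distribution rather than a function. The finite order of $F$ is exactly what makes these estimates effective, after which everything reduces cleanly to the one-dimensional model.
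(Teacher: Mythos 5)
Your proposal is correct, but it packages the argument differently from the paper's source. The paper itself does not prove this proposition --- it cites Ortner--Wagner, and the proof recorded there (reproduced in the paper as the immediately following unnumbered proposition on the explicit representation) performs the continuation directly in $d$ dimensions: one splits the radial integral at $t=1$, subtracts from $\langle \varphi(t\omega), F(\omega)\rangle$ the Taylor terms of $\varphi$ at the origin on $(0,1)$, and adds back the subtracted moments, whose explicit factors $\frac{M_\beta}{\beta!(\alpha+d+|\beta|)}\partial^\beta\varphi(0)$ exhibit the at most simple poles at $\Lambda$ and simultaneously furnish the finite-part formula that \Cref{prop:struct-hg-dists} later needs via $\operatorname{fp}_{\alpha=\lambda}$. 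You instead factor through the one-dimensional model: the angular averaging map $M_F$ lands in $\mathscr{S}([0,\infty))$ (your finite-order estimates for smoothness at $0$ and rapid decay are exactly what is needed), the pairing becomes $\langle \lambda_+^{\alpha+d-1}, M_F\varphi\rangle$, and the classical meromorphy of $\lambda_+^s$ with simple poles at $s\in\{-1,-2,\dots\}$ transports to poles at $\alpha\in\Lambda$, with weak-to-strong meromorphy in the quasi-complete space $\mathscr{S}'(\mathbb{R}^d)$ handled just as in \Cref{prop:regulator-coefficient-distributions}; homogeneity follows by substitution plus the identity theorem, as you say. Your route is more modular and inherits the pole structure from the classical one-dimensional theory; the direct route buys the explicit representation formula that the paper actually records and reuses. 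Your observation that the poles are only \emph{at most} simple --- the residue at $\alpha=-d-m+1$ is proportional to $(M_F\varphi)^{(m-1)}(0)$, hence to the moments $\langle F,\omega^\beta\rangle$, which may vanish --- is precisely the mechanism behind the space $M_\lambda$ in \Cref{prop:struct-hg-dists}. Two small points to make explicit: you silently corrected the radial measure from $\lambda^{\alpha}\,\mathrm{d}\lambda$ as printed to $\lambda^{\alpha+d-1}\,\mathrm{d}\lambda$, which is the intended reading since it is the only one consistent with the hypothesis $\Re\alpha>-d$ and the pole set $(-\infty,-d]\cap\mathbb{Z}$; and when pairing $\lambda_+^s$ with $M_F\varphi\in\mathscr{S}([0,\infty))$ you should add a sentence on extending off the half-line (e.g.\ a Seeley extension), noting that the pairing is independent of the extension because two extensions differ by a function supported in $(-\infty,0]$ that is flat at $0$, which the finite-order distribution $\lambda_+^s$ annihilates.
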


\enter

\begin{prop}
The analytic continuation has the following explicit representation:
\begin{align}
    \begin{aligned}
    \left\langle\varphi, F \cdot \varrho^\lambda\right\rangle & =\int_1^{\infty}\left\langle\varphi\left(t\omega\right), F(\omega)\right\rangle t^{\lambda+d-1} \mathrm{~d} t \\
    & +\int_0^1\Big(\left\langle\varphi\left(t\omega\right), F(\omega)\right\rangle-\sum_{\substack{\alpha \in \mathbb{N}_0^n \\
    c+d + |\alpha| <0}} \frac{M_\alpha}{\alpha!} \partial^\alpha \varphi(0) t^{a \alpha}\Big) t^{\lambda+d-1} \mathrm{~d} t \\
    & +\sum_{\substack{\alpha \in \mathbb{N}_0^n \\
    c+d + |\alpha| <0}} \frac{M_\alpha}{\alpha!(\lambda+d+|\alpha|)} \partial^\alpha \varphi(0), \quad \varphi \in \mathcal{S}\left(\mathbb{R}^n\right)
    \end{aligned}
\end{align}
\end{prop}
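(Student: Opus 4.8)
The plan is to collapse the $d$-dimensional pairing to a single radial Hadamard integral and then simply read off the representation from \Cref{prop:log-hg-hadamard-integrals}. Concretely, I would set
\begin{equation*}
    \Phi(t) \coloneq \langle F(\omega), \varphi(t\omega) \rangle \, ,
\end{equation*}
the pairing of $F \in \mathscr{D}'(\mathbb{S}^{d-1})$ against the sphere-restriction of the dilated test function. Since $\varphi \in \mathscr{S}(\mathbb{R}^d)$ and the sphere is compact, $\Phi$ is smooth on $[0,\infty)$ and rapidly decreasing, so $\Phi \in \mathscr{S}([0,\infty))$. Passing to polar coordinates $x = t\omega$, the radial Jacobian $t^{d-1}$ combines with the weight $\varrho^\lambda = t^\lambda$, so that for $\Re\lambda > -d$ the defining integral of the preceding proposition is the absolutely convergent expression $\langle \varphi, F\cdot\varrho^\lambda\rangle = \int_0^\infty \Phi(t)\, t^{\lambda+d-1}\,\D t$. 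In the notation of \Cref{prop:log-hg-hadamard-integrals} this is exactly $H_{\lambda+d,0,\epsilon}(\Phi)$ in the limit $\epsilon \to 0$, with vanishing logarithmic power.

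The second step is to identify the Taylor coefficients of $\Phi$ at the origin with the spherical moments of $F$. Expanding $\varphi(t\omega) = \sum_{|\alpha|\le N} \tfrac{\partial^\alpha\varphi(0)}{\alpha!} t^{|\alpha|}\omega^\alpha + r_N(t\omega)$ and pairing in $\omega$ gives
\begin{equation*}
    \Phi(t) = \sum_{|\alpha|\le N}\frac{\partial^\alpha\varphi(0)}{\alpha!} M_\alpha\, t^{|\alpha|} + \langle F(\omega), r_N(t\omega)\rangle \, , \qquad M_\alpha \coloneq \langle F(\omega),\omega^\alpha\rangle \, ,
\end{equation*}
whence $\tfrac{1}{n!}\Phi^{(n)}(0) = \sum_{|\alpha|=n}\tfrac{M_\alpha}{\alpha!}\partial^\alpha\varphi(0)$. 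Substituting $a = \lambda+d$ and $m=0$ into the finite-part representation of \Cref{prop:log-hg-hadamard-integrals}, the regular and subtracted integrals reproduce the first two lines of the claimed formula verbatim, while the boundary sum becomes $\sum_n \tfrac{\Phi^{(n)}(0)}{n!}\operatorname{pf}(I_{\lambda+d+n,0})$. Since $\operatorname{pf}(I_{b,0}) = 1/b$ for $b \neq 0$ by \Cref{lem:standard-hadamard-integral}, reorganizing the sum by $n = |\alpha|$ yields precisely $\sum_{\alpha} \tfrac{M_\alpha}{\alpha!(\lambda+d+|\alpha|)}\partial^\alpha\varphi(0)$, the third line. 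The subtraction is carried out exactly over the multi-indices with $\Re\lambda + d + |\alpha| < 0$, i.e.\ those for which the radial integral near $0$ diverges.

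It remains to argue that this finite part is genuinely the analytic continuation. For $\Re\lambda > -d$ the subtraction set is empty and the formula collapses to the convergent integral, so it equals $\langle\varphi, F\cdot\varrho^\lambda\rangle$ there. On each strip between consecutive pole lines the number of subtracted terms is locally constant and the three-term expression is manifestly holomorphic in $\lambda$, with simple poles only where $\lambda+d+|\alpha| = 0$, i.e.\ on $\Lambda = (-\infty,-d]\cap\mathbb{Z}$. Crossing a pole line changes the minimal subtraction by one term, but adding the next Taylor term to the middle integral cancels exactly against the corresponding boundary pole term, so the formula is independent of the number of subtracted terms and defines a single meromorphic function on $\mathbb{C}$. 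By the identity theorem it coincides with the meromorphic continuation $F\cdot\varrho^\lambda$ established in the preceding proposition.

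The main obstacle is the uniform remainder estimate $\langle F(\omega), r_N(t\omega)\rangle = O(t^{N+1})$, which is what simultaneously legitimizes $\Phi \in \mathscr{S}([0,\infty))$, the identification of its Taylor coefficients, and the applicability of \Cref{prop:log-hg-hadamard-integrals}. Since $F$ has finite order $k$ on the compact sphere, one needs the Taylor remainder $r_N(t\omega)$ together with its $\omega$-derivatives up to order $k$ to be $O(t^{N+1})$ uniformly in $\omega \in \mathbb{S}^{d-1}$; this follows from the integral form of Taylor's theorem combined with compactness of the sphere. Everything downstream of this estimate is the standard bookkeeping of the Hadamard finite part already encapsulated in the two radial lemmas.
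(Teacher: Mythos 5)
Note first that the paper itself offers no proof of this proposition: it is imported from Ortner--Wagner \cite{ortnernorbertDistributionValuedAnalytic2013} (compare the citations on the surrounding propositions), and the only in-text justification is the remark following the statement that the Taylor remainder estimate makes the middle integral well defined. Your argument is correct, and it is precisely the proof this part of the appendix is engineered to support: reduce to the radial function $\Phi(t)=\langle F(\omega),\varphi(t\omega)\rangle$, apply \Cref{prop:log-hg-hadamard-integrals} with $a=\lambda+d$ and $m=0$, identify $\Phi^{(n)}(0)/n!=\sum_{|\alpha|=n}M_\alpha\,\partial^\alpha\varphi(0)/\alpha!$, use $\operatorname{pf}(I_{b,0})=1/b$ from \Cref{lem:standard-hadamard-integral} to turn the boundary sum into the third line, and conclude via subtraction-independence plus the identity theorem. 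You have also silently (and correctly) repaired the typos in the statement ($t^{a\alpha}$ should read $t^{|\alpha|}$, $c=\Re\lambda$, and $n$ is the dimension $d$) as well as the missing Jacobian factor $t^{d-1}$ in the defining integral of the preceding proposition.

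Two refinements. First, the ``main obstacle'' you flag is lighter than you suggest: since $(t,\omega)\mapsto\varphi(t\omega)$ is smooth and $\mathbb{S}^{d-1}$ is compact, $\Phi$ is smooth up to $t=0$ with derivatives computed by differentiating under the pairing (this is where the finite order of $F$ enters, once), and then $\Phi(t)-\sum_{n\le N}\Phi^{(n)}(0)\,t^n/n!=O(t^{N+1})$ is simply Taylor's theorem for the scalar function $\Phi$ --- no separate uniform estimate on $\omega$-derivatives of the multivariate remainder is required. Second, a point you gloss over: with the minimal subtraction prescribed by $\Re\lambda+d+|\alpha|<0$, the middle integral is absolutely convergent only off the vertical lines $\Re\lambda\in-d-\mathbb{N}_0$; on such a line with $\lambda\notin\mathbb{R}$ the integrand has modulus of order $t^{-1}$ and even the improper integral oscillates. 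Your observation that appending one further Taylor term to the middle integral cancels exactly against the corresponding term $\frac{\Phi^{(n)}(0)}{n!(\lambda+d+n)}$ is exactly what patches the representation across these lines, so the continuation argument closes; just be aware that ``holomorphic on each strip'' should be read with this over-subtraction in hand, the genuine poles remaining the simple ones at $\lambda\in\Lambda=(-\infty,-d]\cap\mathbb{Z}$ with residues given by the spherical moments.
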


\enter

Observe that
\begin{align*}
\left\langle\varphi\left(t \omega\right), F(\omega)\right\rangle-\sum_{\substack{\alpha \in \mathbb{N}_0^n \\ c+d + |\alpha| <0}} \frac{M_\alpha}{\alpha!} \partial^\alpha \varphi(0) t^{|\alpha|}=O\left(t^{-c-d}\right)
\end{align*}
so that the middle integral term in the analytic continuation is well defined.

\enter

\begin{defn}
Let $\Gamma \subset \mathbb{R}^d$ be open and conic. A distribution $u \in \mathscr{D}'(\Gamma)$ is called \emph{almost homogeneous} of degree $\alpha$
and order $k$ if
\begin{align}
    (\mathcal{E} - \alpha)^{k+1} u = 0.
\end{align}
The space of almost homogeneous distributions of degree $\alpha$ and order $k$
on $\Gamma$ is denoted $\Loghg_{\alpha, k}(\Gamma)$.
\end{defn}

\enter

Almost homogeneous distributions are products of logarithmic and
homogeneous terms.
Hence they are also called \emph{log-homogeneous} distributions.

\enter

\begin{prop}
Let $\alpha \in \mathbb{C}\setminus \Lambda$ and $k\in \mathbb{N}_0$.
Then $u \in \mathscr{D}'(\mathbb{R}^d)$ is almost homogeneous
if and only if
\begin{align}
    u = \sum_{j=0}^k F_j \cdot |x|^\alpha \log^j |x|
\end{align}
for some distributions $F_j \in \mathscr{D}'(\mathbb{S}^{d-1})$.
The summands are defined by
\begin{align}
    F\cdot |x|^\alpha \log^j|x| := \frac{d^j}{d\alpha^j}\left(F\cdot |x|^\alpha\right).
\end{align}
\end{prop}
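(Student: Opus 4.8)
The plan is to prove both implications by exploiting the fact that $\mathcal{E}-\alpha$ acts on the proposed generating distributions as a nilpotent Jordan shift. Write $w_j(F) \coloneq F\cdot|x|^\alpha\log^j|x| = \frac{d^j}{d\alpha^j}\bigl(F\cdot|x|^\alpha\bigr)$. Since $F\cdot|x|^\alpha$ is homogeneous of degree $\alpha$, the preceding characterization of homogeneity reads $\mathcal{E}(F\cdot|x|^\alpha)=\alpha\,(F\cdot|x|^\alpha)$ for every $\alpha\in\mathbb{C}\setminus\Lambda$. As $\alpha\mapsto F\cdot|x|^\alpha$ is holomorphic as a $\mathscr{D}'(\mathbb{R}^d)$-valued map off $\Lambda$ (by the cited Ortner--Wagner continuation), and $\mathcal{E}$ commutes with $\frac{d}{d\alpha}$, I may apply $\frac{d^j}{d\alpha^j}$ to this identity; the Leibniz rule on the right-hand side then produces the single relation
\begin{equation}
    (\mathcal{E}-\alpha)\,w_j(F) = j\,w_{j-1}(F) \, ,
\end{equation}
with the convention $w_{-1}(F)=0$, which drives the entire argument.

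The backward implication is immediate: for $u=\sum_{j=0}^k w_j(F_j)$, iterating the chain relation shows $(\mathcal{E}-\alpha)^{k+1}w_j(F_j)=0$ for each $j\le k$, so $(\mathcal{E}-\alpha)^{k+1}u=0$ and $u\in\Loghg_{\alpha,k}(\mathbb{R}^d)$. For the forward implication I would induct on the order $k$. In the base case $k=0$ the distribution $u$ is homogeneous of degree $\alpha$; restricting to $\dot{\mathbb{R}}^d\cong(0,\infty)\times\mathbb{S}^{d-1}$ identifies $u|_{\dot{\mathbb{R}}^d}$ with $F_0(\omega)|x|^\alpha$ for a unique $F_0\in\mathscr{D}'(\mathbb{S}^{d-1})$, so that $u-w_0(F_0)$ is a homogeneous distribution of degree $\alpha$ supported at the origin. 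Any such distribution is a finite combination $\sum c_\beta\partial^\beta\delta$, homogeneous of degree $-d-|\beta|\in\Lambda$; since $\alpha\notin\Lambda$ this forces $u=w_0(F_0)$.

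For the inductive step, set $v\coloneq(\mathcal{E}-\alpha)u$, so that $(\mathcal{E}-\alpha)^{k}v=0$ and, by the induction hypothesis, $v=\sum_{j=0}^{k-1}w_j(G_j)$ for some $G_j\in\mathscr{D}'(\mathbb{S}^{d-1})$. Using the chain relation I construct the explicit particular solution $\tilde u\coloneq\sum_{j=1}^{k}w_j\!\left(G_{j-1}/j\right)$, which satisfies $(\mathcal{E}-\alpha)\tilde u=v$. Then $(\mathcal{E}-\alpha)(u-\tilde u)=0$, so the base case applies to $u-\tilde u$ and supplies the remaining term $w_0(F_0)$; collecting terms gives $u=\sum_{j=0}^k w_j(F_j)$ with $F_0$ as produced by the base case and $F_j=G_{j-1}/j$ for $j\ge 1$.

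The main obstacle is the base case rather than the algebra of the Jordan chain. Everything hinges on the classification of homogeneous distributions of degree $\alpha\notin\Lambda$ on all of $\mathbb{R}^d$, and specifically on the vanishing argument at the origin: it is precisely the hypothesis $\alpha\notin\Lambda$ that excludes the delta-type obstructions $\partial^\beta\delta$ and thereby guarantees both existence and uniqueness of the coefficients $F_j$. One must also check that the holomorphic dependence of $F\cdot|x|^\alpha$ on $\alpha$ — and hence the legitimacy of differentiating in $\alpha$ across the origin — genuinely persists, which again relies on $\alpha\notin\Lambda$ to stay clear of the poles of the Ortner--Wagner continuation.
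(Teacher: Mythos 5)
Your proof is correct; note, however, that the paper offers no proof of this proposition at all --- it is quoted from Ortner--Wagner's structure theory of almost homogeneous distributions (it is precisely the $\alpha \in \mathbb{C}\setminus\Lambda$ case of the isomorphisms recorded immediately afterwards in \Cref{prop:struct-hg-dists}, cf.\ Proposition 2.5.3 of \cite{ortnernorbertDistributionValuedAnalytic2013}). Your Jordan-chain argument is thus a genuinely independent, self-contained derivation from facts the appendix already records: the characterization of homogeneity via $(\mathcal{E}-\alpha)u = 0$, the meromorphy of $\alpha \mapsto F\cdot\rho^\alpha$ with poles only in $\Lambda$, and the classification of distributions supported at a point. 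The relation $(\mathcal{E}-\alpha)\,w_j(F) = j\,w_{j-1}(F)$, the resulting nilpotence for the backward implication, and the induction on $k$ via the particular solution $\tilde{u} = \sum_{j=1}^{k} w_j\left(G_{j-1}/j\right)$ are all sound, and your base case correctly isolates where $\alpha \notin \Lambda$ enters: both through the absence of poles of the analytic continuation and through the vanishing of the origin-supported obstructions $\partial^\beta\delta$, which are homogeneous of degrees $-d-|\beta| \in \Lambda$ (one clean way to finish that step is to apply $(\mathcal{E}-\alpha)$ to $\sum_\beta c_\beta \partial^\beta\delta$ and read off $c_\beta(-d-|\beta|-\alpha) = 0$; alternatively, the uniqueness clause of the quoted H\"ormander extension theorem kills the difference directly). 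What the citation-based route buys instead is the stronger statement: topological isomorphisms, hence continuity and uniqueness of the coefficients $F_j$, together with the treatment of the exceptional degrees $\lambda \in \Lambda$, where finite parts and origin-supported terms $S \in N_\lambda$ genuinely appear. Two small points you should make explicit: first, the identification of $u|_{\dot{\mathbb{R}}^d}$ with $F_0\cdot\rho^\alpha$ in the base case is itself a theorem --- it is the punctured-space isomorphism $\Loghg_{\alpha,0}(\dot{\mathbb{R}}^d) \cong \mathscr{D}'(\mathbb{S}^{d-1})$ of \Cref{prop:struct-hg-dists}, provable by solving the Euler equation in polar coordinates --- and invoking it is not circular precisely because the proposition at issue concerns the unpunctured space $\mathbb{R}^d$; second, commuting $\mathcal{E}$ with $d^j/d\alpha^j$ requires the family $\alpha \mapsto F\cdot\rho^\alpha$ to be holomorphic in $\mathscr{S}'(\mathbb{R}^d)$ on $\mathbb{C}\setminus\Lambda$ and $\mathcal{E}$ to act continuously there, both of which hold but deserve a sentence.
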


\enter

The following structure theorem allows us to explicitly
characterize the almost homogeneous distributions.

\enter

\begin{prop}[Cf.\ Proposition 2.5.3, 
\cite{ortnernorbertDistributionValuedAnalytic2013}]
\label{prop:struct-hg-dists}
    Let $\rho(x) = |x|$ and let $\rho^\alpha \cdot \log^j\rho$ be
    the almost homogeneous distributions defined above.
    Define the distribution spaces
    \begin{align}
        &M_\lambda = \{F \in \mathscr{D}'(\mathbb{S}^{d-1})|\ \forall\
        \alpha \in \mathbb{N}_0^d\ \text{with}\ \lambda = -\sum_{i=1}^d \alpha_i, \langle F, \omega^\alpha\rangle = 0\}\\
        & N_\lambda = \{S \in \Hg_{\lambda}(\mathbb{R}^d)| \supp S \subseteq \{0\}\}.
    \end{align} 
    Equip $\Loghg(U) \subset \mathscr{D}'(U)$ and the above spaces
    with the subspace topology.
    Then the following are isomorphisms of topological vector spaces:
    \begin{align}
        &\text{for $\alpha \in \mathbb{C}$},
        &&\begin{aligned}
            &\mathscr{D}'(\mathbb{S}^{d-1})^{\oplus (N+1)} \to \Loghg_{\alpha, N}(\dot{\mathbb{R}}^d)\\
            &(F_0,\ldots, F_N) \mapsto \sum_{j=0}^N F_j \cdot \rho^\alpha \log^j \rho,
        \end{aligned}\\
        &\text{for $\alpha \in \mathbb{C}\setminus \Lambda$},
        &&\begin{aligned}
            &\mathscr{D}'(\mathbb{S}^{d-1})^{\oplus (N+1)} \to \Loghg_{\alpha, N}(\mathbb{R}^d)\\
            &(F_0,\ldots, F_N) \mapsto \sum_{j=0}^N F_j \cdot \rho^\alpha \log^j \rho,
        \end{aligned}\\
        &\text{for $\lambda \in \Lambda$},
        &&\begin{aligned}
             &\mathscr{D}'(\mathbb{S}^{d-1})^{\oplus N} \oplus M_{\lambda}
             \oplus N_{\lambda}\to \Loghg_{\lambda, N}(\dot{\mathbb{R}}^d)\\
            &(F_0,\ldots, F_{N-1}, F_N, S) \mapsto S + \sum_{j=0}^N F_j \cdot \operatorname{fp}_{\alpha=\lambda}\rho^\alpha \log^j \rho.
        \end{aligned}
    \end{align}
\end{prop}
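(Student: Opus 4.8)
The plan is to reduce all three isomorphisms to a single constant--coefficient ODE in the radial variable, and then to separate the three cases according to how the homogeneous model distributions $F\cdot\rho^\alpha$ behave under extension across the origin. First I would pass to polar coordinates $x=\rho\omega$ with $(\rho,\omega)\in(0,\infty)\times\mathbb{S}^{d-1}$, under which $\dot{\mathbb{R}}^d\cong(0,\infty)\times\mathbb{S}^{d-1}$ and the Euler operator is $\mathcal{E}=\rho\,\partial_\rho$. Substituting $t=\log\rho$ converts $\mathcal{E}$ into $\partial_t$, so the defining condition $(\mathcal{E}-\alpha)^{N+1}u=0$ of $\Loghg_{\alpha,N}$ becomes $(\partial_t-\alpha)^{N+1}\tilde u=0$ for the pulled-back distribution $\tilde u\in\mathscr{D}'(\mathbb{R}_t\times\mathbb{S}^{d-1})$.

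For the first isomorphism (on $\dot{\mathbb{R}}^d$, all $\alpha\in\mathbb{C}$) I would conjugate by the exponential: writing $\tilde u=e^{\alpha t}w$ reduces the equation to $\partial_t^{N+1}w=0$. The kernel of $\partial_t^{N+1}$ in $\mathscr{D}'(\mathbb{R}_t\times\mathbb{S}^{d-1})$ is exactly the space of polynomials in $t$ of degree $\le N$ with coefficients in $\mathscr{D}'(\mathbb{S}^{d-1})$; this is the one--variable structure theorem ($\partial_t v=0\Rightarrow v$ is constant in $t$) iterated $N$ times. Undoing the substitutions ($e^{\alpha t}=\rho^\alpha$, $t^j=\log^j\rho$) yields $u=\sum_{j=0}^N F_j\,\rho^\alpha\log^j\rho$ with the $F_j\in\mathscr{D}'(\mathbb{S}^{d-1})$ uniquely determined, giving bijectivity; continuity of the map and of the coefficient--extraction inverse is routine from these explicit formulas.

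For the second isomorphism (full space $\mathbb{R}^d$, $\alpha\notin\Lambda$) I would use the quoted Ortner--Wagner family (Proposition 2.2.1): each $F\cdot\rho^\alpha$ extends to a genuinely $\alpha$-homogeneous distribution on $\mathbb{R}^d$, holomorphic in $\alpha$ off $\Lambda$, and its $\alpha$-derivatives $F\cdot\rho^\alpha\log^j\rho=\partial_\alpha^j(F\cdot\rho^\alpha)$ are almost homogeneous of order $j$, since differentiating $(\mathcal{E}-\alpha)(F\rho^\alpha)=0$ repeatedly gives $(\mathcal{E}-\alpha)\partial_\alpha^j(F\rho^\alpha)=j\,\partial_\alpha^{j-1}(F\rho^\alpha)$. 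Hence the map lands in $\Loghg_{\alpha,N}(\mathbb{R}^d)$; injectivity follows by restricting to $\dot{\mathbb{R}}^d$ and invoking the first case, while surjectivity follows because H\"ormander's Theorem 3.2.3 (quoted above) gives unique homogeneous extension for $\alpha\notin\Lambda$, which upgrades by induction on the order $N$ to a unique almost-homogeneous extension, so every $u\in\Loghg_{\alpha,N}(\mathbb{R}^d)$ coincides with the full-space extension of its restriction.

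The exceptional case $\lambda\in\Lambda$ is the main obstacle, because $\alpha\mapsto F\cdot\rho^\alpha$ now has a pole at $\alpha=\lambda$ and one must extend via the finite part $\operatorname{fp}_{\alpha=\lambda}$, working on the full space $\mathbb{R}^d$. Two corrections enter. First, taking the finite part does not commute with $\mathcal{E}-\lambda$: the residue of the family is a derivative of $\delta$ at the origin, controlled by the moments $\langle F,\omega^\beta\rangle$ entering the analytic continuation, and it resurfaces as an extra logarithmic term; consequently $\operatorname{fp}_{\alpha=\lambda}(F_N\cdot\rho^\lambda\log^N\rho)$ stays almost homogeneous of order exactly $N$ only when those moments vanish, which is precisely the condition $F_N\in M_\lambda$. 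Second, the restriction $\mathscr{D}'(\mathbb{R}^d)\to\mathscr{D}'(\dot{\mathbb{R}}^d)$ has kernel the distributions supported at the origin, whose $\lambda$-homogeneous members are exactly $N_\lambda$, so extensions are unique only up to $N_\lambda$; adding the free summand $S\in N_\lambda$ restores surjectivity. The heart of the argument is the bookkeeping identifying the residue of the finite part with the moment map onto $\mathscr{D}'(\mathbb{S}^{d-1})/M_\lambda$ and matching $N_\lambda$ with the extension ambiguity; once this is in place the three isomorphisms, together with their continuity (using the stated continuity of the analytic continuation), follow as in the previous cases.
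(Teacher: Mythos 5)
The paper offers no proof of this proposition at all: it is imported, statement and proof, from Proposition~2.5.3 of \cite{ortnernorbertDistributionValuedAnalytic2013}, so your argument can only be measured against that source --- and your sketch follows essentially the standard route taken there. Your reduction of case one to the radial ODE ($t=\log\rho$, conjugation by $e^{\alpha t}$, kernel of $\partial_t^{N+1}$ equals polynomials in $t$ with coefficients in $\mathscr{D}'(\mathbb{S}^{d-1})$, via the Schwartz result that $\partial_t v=0$ forces $v=1\otimes w$) is correct, as is case two via the quoted holomorphic family $\alpha\mapsto F\cdot\rho^\alpha$ and H\"ormander's unique homogeneous extension; for injectivity there you are implicitly using that $\Loghg_{\alpha,N}(\mathbb{R}^d)$ contains no nonzero distribution supported at the origin when $\alpha\notin\Lambda$, which follows from the eigenvalue computation $(\mathcal{E}-\alpha)\partial^\beta\delta=-(d+|\beta|+\alpha)\partial^\beta\delta$ and is worth one explicit line. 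You also correctly read the exceptional case as a statement about the \emph{full} space: the $\dot{\mathbb{R}}^d$ in the paper's third display is evidently a typo, since $S\in N_\lambda$ restricts to zero on $\dot{\mathbb{R}}^d$ and the first isomorphism already covers every $\alpha\in\mathbb{C}$ on the punctured space. That is a genuine catch.

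The one place you compress too much is the ``bookkeeping'' in the exceptional case, which is in fact short and should be displayed rather than deferred: writing the Laurent expansion $F\cdot\rho^\alpha = R[F]\,(\alpha-\lambda)^{-1}+T_0[F]+T_1[F](\alpha-\lambda)+\cdots$ and comparing coefficients in $(\mathcal{E}-\lambda)(F\cdot\rho^\alpha)=(\alpha-\lambda)(F\cdot\rho^\alpha)$ gives $(\mathcal{E}-\lambda)R[F]=0$ and $(\mathcal{E}-\lambda)T_j[F]=T_{j-1}[F]$, so $\operatorname{fp}_{\alpha=\lambda}(F\cdot\rho^\alpha\log^j\rho)=j!\,T_j[F]$ satisfies $(\mathcal{E}-\lambda)^{j+1}(j!\,T_j[F])=j!\,R[F]$, and applying $(\mathcal{E}-\lambda)^{N+1}$ to the candidate image leaves exactly $N!\,R[F_N]$. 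Since the ambiguity of any extension is a finite sum $\sum_\beta c_\beta\partial^\beta\delta$ on which $\mathcal{E}-\lambda$ acts diagonally, the eigenvalue-zero part ($|\beta|=-d-\lambda$, i.e.\ $N_\lambda$) and the nonzero-eigenvalue parts separate, which yields simultaneously $F_N\in M_\lambda$ (residue vanishes, matching the moment condition via the explicit continuation formula quoted in the paper) and that the extension ambiguity is precisely $N_\lambda$; injectivity and surjectivity then follow by restriction to $\dot{\mathbb{R}}^d$ and case one, exactly as you indicate. One last caution on the topological claim: ``routine'' should not mean the open mapping theorem, which is unavailable on $\mathscr{D}'$; continuity of the inverse needs the explicit coefficient-recovery pairings (or the continuity statements of the cited source), though this is a presentational rather than a mathematical gap.
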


\enter

\begin{defn}Let $U \subset \mathbb{R}^d$ be a neighborhood of 0.
\begin{enumerate}
\item Denote by $\mathcal{I}_N(U, 0) \subset \mathscr{C}^\infty(U)$ the ideal of smooth functions vanishing at $0$ to order $N$. 
\item Denote by $\mathsf{Phg}^{m, k}(U, 0)$ the set of distributions $u \in \mathscr{D}^{\prime}(U)$ such that for any $N \in \mathbb{N}$ there exists an $M$ such that
\[
u=\sum_{j=0}^M u_j+R_N
\]
where $u_j \in \Loghg_{j-m, k}(\mathbb{R}^d)$ (i.e. $u_j$ is almost homogeneous of degree $j-m$ and order $\leq k$ and $R_N \in \mathcal{I}_N(U, 0)$.
\end{enumerate}
\end{defn}

\enter

Finally we can define the Hadamard regularization procedure on polyhomogeneous distributions.
This is a straightforward extension of \Cref{prop:hadamard-reg-basic-block}.

\enter

\begin{prop}
\label{prop:hadamard-polyhmg-dists}
Let $U\subset \mathbb{R}^d$ be a neighborhood of zero, \(\chi \in \mathscr{D}(U)\)
and suppose that \(\chi \equiv 1\) in a neighborhood of \(0\).
Then
\begin{align*}
    \langle T_\epsilon(u), \varphi\rangle
    := \langle (1- \chi_\epsilon) u, \varphi\rangle
\end{align*}
for every $\varphi \in \mathscr{D}(\mathbb{R}^d)$
defines a $C^\infty$-linear regulator
\begin{align*}
    \Phg^{m,k}(\dot{U}, 0) \to \mathsf{PF}_{m+d-1,k+1}((0,1], \mathscr{D}'(\mathbb{R}^d)).
\end{align*}
\end{prop}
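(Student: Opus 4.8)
The plan is to reduce the statement to the basic building block \Cref{prop:hadamard-reg-basic-block} by means of the structural description of polyhomogeneous distributions. Two of the requirements for a regulator in \Cref{def:regulator} are immediate: for $\epsilon>0$ the cutoff $1-\chi_\epsilon$ vanishes near $0$, so $(1-\chi_\epsilon)u$ coincides with $u$ away from the origin and is therefore smooth there, giving axiom (1); and for $\varphi\in\mathscr{D}(\dot U)$ we have $(1-\chi_\epsilon)u=u$ on $\supp\varphi$ once $\epsilon$ is small, so $\langle T_\epsilon u,\varphi\rangle\to\langle u,\varphi\rangle$ trivially, giving axiom (2). The $C^\infty$-linearity is equally direct and does not even use the structure theory: for $f\in\mathscr{C}^\infty(U)$ one has $\langle T_\epsilon(fu),\varphi\rangle=\langle(1-\chi_\epsilon)fu,\varphi\rangle=\langle(1-\chi_\epsilon)u,f\varphi\rangle=\langle T_\epsilon u,f\varphi\rangle$, so every coefficient in the expansion satisfies $a_{k,j}[fu,\varphi]=a_{k,j}[u,f\varphi]$. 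Thus the entire content of the proposition is that $T_\epsilon u$ admits the claimed partially finite asymptotic expansion.

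For this I would fix a truncation order $N$ and invoke the definition of $\Phg^{m,k}(\dot U,0)$ to write $u=\sum_{j=0}^{M}u_j+R_N$, with $u_j\in\Loghg_{j-m,k}(\mathbb{R}^d)$ log-homogeneous of degree $j-m$ and order $\leq k$, and $R_N\in\mathcal{I}_N(U,0)$ flat to order $N$ at $0$. To each log-homogeneous piece I apply the structure theorem \Cref{prop:struct-hg-dists}, which expresses $u_j$ as a finite sum of building blocks $F_{j,i}\cdot\rho^{\,j-m}\log^i\rho$ with $F_{j,i}\in\mathscr{D}'(\mathbb{S}^{d-1})$ (together with the extra $M_\lambda$- and $N_\lambda$-summands in the resonant case discussed below). \Cref{prop:hadamard-reg-basic-block} then sends each such block of degree $\alpha=j-m$ and logarithmic order $i$ into $\mathsf{PF}_{\alpha+d,\,i+1}$; since the proof of that proposition proceeds by the spherical-layer decomposition $1-\chi(x/\epsilon)=\int_\epsilon^\infty\psi(x/\lambda)\tfrac{\D\lambda}{\lambda}$ and then pairs the radial data against $\psi\varphi$ via \Cref{prop:log-hg-hadamard-integrals}, it carries over when the radial factor is weighted by an angular distribution $F_{j,i}$. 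Summing the finitely many blocks, the partially finite pieces combine into one asymptotic expansion whose leading order is the most singular block $j=0$, yielding membership in the partially finite space claimed in the proposition.

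It remains to control the flat remainder. Since $R_N$ is smooth, $(1-\chi_\epsilon)R_N=R_N-\chi_\epsilon R_N$, and the $\epsilon$-dependent error $\chi_\epsilon R_N$ is supported in a ball of radius $O(\epsilon)$ on which $|R_N|=O(\epsilon^N)$; hence $\langle\chi_\epsilon R_N,\varphi\rangle=O(\epsilon^{N+d})$. Thus $R_N$ contributes only the constant distribution $R_N$ plus a term beyond order $N$, so it does not affect any coefficient of the expansion below that order. Letting $N$ grow and using uniqueness of asymptotic expansions to check that the coefficients produced by the building blocks are independent of the chosen $N$ (the total finite part being $\langle u,\varphi\rangle$ minus the regularized finite parts of the $u_j$, which is intrinsic), one obtains the full expansion to all orders, completing the verification that $T_\epsilon$ is a regulator with values in the stated partially finite space.

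The main obstacle I anticipate is the resonant range $\alpha=j-m\in\Lambda=(-\infty,-d]\cap\mathbb{Z}$, where the homogeneous extension is non-unique and \Cref{prop:struct-hg-dists} produces the additional $M_\lambda$- and $N_\lambda$-summands; there the finite-part prescription $\operatorname{fp}_{\alpha=\lambda}\rho^\alpha\log^j\rho$ and the appearance of an extra logarithmic power (the jump from input order $k$ to target degree $k+1$) must be tracked carefully through \Cref{lem:standard-hadamard-integral} and \Cref{prop:log-hg-hadamard-integrals}. A secondary technical point is checking that the spherical-layer argument of \Cref{prop:hadamard-reg-basic-block} indeed tolerates a genuinely distributional angular factor $F\in\mathscr{D}'(\mathbb{S}^{d-1})$ rather than only a smooth one, which is where the continuity of the pairing and the quasi-completeness used in \Cref{prop:regulator-coefficient-distributions} become essential.
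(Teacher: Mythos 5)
Your proposal is correct and takes exactly the route the paper intends: the paper in fact prints no proof of \Cref{prop:hadamard-polyhmg-dists}, remarking only that it is ``a straightforward extension of \Cref{prop:hadamard-reg-basic-block}'', and your argument supplies precisely that extension --- the regulator axioms and $C^\infty$-linearity by direct computation, the decomposition of $u \in \Phg^{m,k}$ into log-homogeneous pieces handled via \Cref{prop:struct-hg-dists} and the building-block regulator, and the $O(\epsilon^{N+d})$ bound disposing of the flat remainder. The two caveats you flag are real but already absorbed by the cited machinery: the $a=0$ case of \Cref{lem:standard-hadamard-integral} and \Cref{prop:log-hg-hadamard-integrals} is what produces the $k \to k+1$ jump in the resonant range (while the $N_\lambda$-summands supported at the origin are simply annihilated by $1-\chi_\epsilon$), and a distributional angular factor $F \in \mathscr{D}'(\mathbb{S}^{d-1})$ poses no difficulty since pairing it against the smooth family $\varphi(\lambda\omega)$ still yields a function of $\lambda$ to which \Cref{prop:log-hg-hadamard-integrals} applies.
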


\end{appendix}

\medskip

\printbibliography

\end{document}